\documentclass{amsart}
%uses
%AllRateFcn2black.pdf
%AllRateFcn3black.pdf
%AllTails2black.pdf
%AllTails2black.pdf
%AllOptimalPhi2black.pdf
%AllOptimalPhi3black
%AllOptimalPsi2black.pdf
%AllOptimalPsi3black.pdf

\usepackage{amssymb}
\usepackage[mathscr]{eucal}
\usepackage{graphicx}
\usepackage{mathrsfs}
\usepackage{psfrag}
\usepackage{fullpage}
\usepackage{color, harvard}

\setcounter{totalnumber}{1}

\pdfoptionpdfinclusionerrorlevel=0
\pdfoptionpdfminorversion=5

\theoremstyle{plain}
\newtheorem{theorem}{Theorem}[section]

\newtheorem{proposition}[theorem]{Proposition}
\newtheorem{lemma}[theorem]{Lemma}

\newtheorem{condition}[theorem]{Condition}

\newtheorem{remark}[theorem]{Remark}

%\newtheoremstyle{remm}%
%{\topsep}% Space above
%{\topsep}%  Space below
%{\sffamily}%  Body font
%{}%  Indent amount
%{\bfseries}%  Theorem head font
%{}%  Punctuation after thm head
%{ }%  Space after thm head
%{}% Thm head spec.
%\theoremstyle{remm}
%\newtheorem{rem}[theorem]{Remark}
%\newenvironment{remark}
%{\begin{rem}\setlength{\hangindent}{30 pt}}
%{\end{rem}}

\frenchspacing

\newcommand{\lb}{\left\{}
\newcommand{\rb}{\right\}}
\newcommand{\Def}{\overset{\text{def}}{=}}

\newcommand{\R}{\mathbb{R}}
\newcommand{\N}{\mathbb{N}}

\newcommand{\KK}{K}

\newcommand{\Borel}{\mathscr{B}}
\newcommand{\Pspace}{\mathscr{P}}
\newcommand{\BP}{\mathbb{P}}
\newcommand{\BE}{\mathbb{E}}
\newcommand{\filt}{\mathscr{F}}
\newcommand{\gilt}{\mathscr{G}}

\newcommand{\la}{\left \langle}
\newcommand{\ra}{\right\rangle}

\newcommand{\ee}{\mathfrak{e}}
\newcommand{\genA}{\mathcal{A}}
\newcommand{\genB}{\mathcal{B}}
\newcommand{\genL}{\mathcal{L}}
\newcommand{\jump}{\mathcal{J}}

\newcommand{\pp}{\mathsf{p}}
\newcommand{\pt}{\star}
\newcommand{\SSS}{\mathcal{S}}

\newcommand{\PP}{\mathcal{P}}

\newcommand{\NN}{{N,n}}
\newcommand{\mart}{\mathcal{M}}
\newcommand{\dfi}{\textsc{m}}
\newcommand{\QQ}{\mathcal{Q}}

\allowdisplaybreaks
\begin{document}
\title{Large Portfolio Asymptotics for Loss From Default}

\author{Kay Giesecke}
\address{Department of Management Science and Engineering\\
Stanford University\\
Stanford, CA 94305}
\email{giesecke@stanford.edu}

\author{Konstantinos Spiliopoulos}
\address{Division of Applied Mathematics\\
Brown University\\
Providence, RI 02912}
\email{kspiliop@dam.brown.edu}

\author{Richard B. Sowers}
\address{Department of Mathematics\\
   University of Illinois at Urbana--Champaign\\
   Urbana, IL 61801}
\email{r-sowers@illinois.edu}

\author{Justin A. Sirignano}
\address{Department of Management Science and Engineering\\
Stanford University\\
Stanford, CA 94305}
\email{jasirign@stanford.edu}

%\thanks{The work of R.S. was supported by NSF grant EFRI 1024837.  R.S. would also like to thank the Departments of Mathematics and Statistics of Stanford University for their hospitality in the Spring of 2010 during a sabbatical stay.  The authors are grateful to Thomas Kurtz for some insight into the literature on mean field models, and to Michael Gordy and participants of the 3rd SIAM Conference on Financial Mathematics and Engineering in San Francisco and the 2010 Annual INFORMS Meeting in Austin for comments.}

\date{\today. First version: August 15, 2011. We are grateful to two anonymous referees and to Michael Gordy, Monique Jeanblanc and participants of the 5th Financial Risks International Forum in Paris, the 2011 Annual INFORMS Meeting in Charlotte, the 2012 Workshop on Mathematical Finance in Mumbai, the 2011 International Congress on Industrial and Applied Mathematics in Vancouver, the Mathematics Seminars at the University of Southern California and Santa Clara University, and the SEEM Seminar at the Chinese University of Hong Kong for comments.}

\begin{abstract}
We prove a law of large numbers for the loss from default and use it for approximating the distribution of the loss from default in large, potentially heterogenous portfolios. The density of the limiting measure is shown to solve a non-linear stochastic PDE, and certain moments of the limiting measure are shown to satisfy an infinite system of SDEs. The solution to this system leads to the distribution of the limiting portfolio loss, which we propose as an approximation to the loss distribution for a large portfolio. Numerical tests illustrate the accuracy of the approximation, and highlight its computational advantages over a direct Monte Carlo simulation of the original stochastic system.\\
\begin{center}
\normalsize
{\it Mathematical Finance}, forthcoming
\end{center}
\end{abstract}

\maketitle

\section{Introduction}
Reduced-form point process models of correlated default timing are
widely used to measure portfolio credit risk and to value securities
exposed to correlated default risk. Computing the distribution of
the loss from default in these models tends to be difficult,
however, especially in bottom-up formulations with many names.
Semi-analytical transform methods have limited scope. Monte Carlo
simulation is much more broadly applicable  but  can be slow for
the large portfolios and longer time horizons common in practice.

This paper develops an approximation to the distribution of the loss from default in large portfolios
 that may have a heterogenous structure. The approximation is valid for a class of reduced-form models
 in which a name defaults at a stochastic intensity that is influenced by an idiosyncratic risk factor
 process, a systematic risk factor process $X$ common to all names in the pool, and the portfolio loss rate.
 It is based on a law of large numbers for the portfolio loss rate. The limiting portfolio loss is not
 deterministic but follows a stochastic process driven by $X$, indicating that the
exposure to the systematic risk cannot be diversified. We show that the density of the limiting measure, if it exists, satisfies a nonlinear stochastic partial differential equation (SPDE) driven by $X$. We develop a numerical method for solving this equation. The method is based on the observation that certain moments of the limiting measure satisfy an infinite system of SDEs. These SDEs are driven by the systematic risk factor $X$; a truncated system can be solved using a discretization scheme, for example.  The solution to the SDE system leads to the solution to the SPDE through an inverse moment problem. It also leads to the distribution of the limiting portfolio loss, which we propose as an approximation to the distribution of the loss from default for a large portfolio. Estimators of portfolio value at risk and other risk measures are immediate from the limiting loss distribution.

Numerical tests illustrate the accuracy and computational efficiency of the approximation for large but finite portfolios. We find a substantial reduction in computational effort over the alternative of direct Monte Carlo simulation of the high-dimensional original stochastic system. The accuracy of the approximation mainly depends on the portfolio size $N$ and the sensitivity to the systematic risk factor. For a given sensitivity, the accuracy increases with $N$, as expected. The higher the sensitivity to the systematic risk, the higher the variance of the loss distribution and the more accurate is the approximation for fixed $N$. The approximation is remarkably accurate in the tail of the loss distribution. This renders it particularly suitable for the estimation of risk measures for the large pools of loans commonly held by banks. %The error from truncating the infinite SDE system appears negligible.

Large portfolio approximations were first studied by \citeasnoun{vasicek}. In Vasicek's static model of a homogenous pool, firms default independently of one another conditional on a normally distributed random variable representing a systematic risk factor. Because the losses from defaults are conditionally i.i.d., the classical law of large numbers ensures the convergence of the portfolio loss rate to its conditional mean, from which the limiting loss distribution is immediate. \citeasnoun{schloegl-okane} examine alternative distributions of the systematic factor, and \citeasnoun{lucas-etal} and \citeasnoun{Gordy03arisk-factor} study the limiting loss in a heterogenous portfolio. \citeasnoun{hambly} analyze a dynamic extension of Vasicek's homogeneous pool model in which the systematic risk factor follows a Brownian motion. They obtain an SPDE driven by that Brownian motion for the density of the limiting measure, which they solve using a finite element method. The conditional independence of defaults can also be exploited to analyze the tail behavior of the losses in large, not necessarily homogenous, portfolios using large deviations arguments; see \citeasnoun{ddd} and \citeasnoun{glasserman-kang}.

The analysis in this paper differs from that in the aforementioned articles in several important respects. We study a class of dynamic point process models of correlated default timing in which a firm defaults at a stochastic intensity process. The intensity is influenced by an idiosyncratic risk factor process following a square-root diffusion, a systematic risk factor process following a diffusion with arbitrary coefficient functions, and the portfolio loss rate. To address the heterogeneity of a portfolio, the intensity parameters of each name are allowed to be different. The choice for dependence of the intensity on idiosyncratic and systematic risk factor processes is motivated by the empirical findings of \citeasnoun{duffie-saita-wang}. The choice for dependence of the intensity on the portfolio loss is motivated by the empirical results of \citeasnoun{azizpour-giesecke-schwenkler}, who find that defaults have a statistically significant feedback effect on the surviving firms. The self-exciting behavior of defaults violates the conditional independence property that is exploited in the aforementioned articles. It complicates the asymptotic analysis and induces an integral term in the drift of the SPDE governing the density of the limiting measure. The exposure to the systematic risk leads to the noise term in the SPDE, which is given by an It{\^o} integral against the Brownian motion driving the systematic risk diffusion. The solution to the SPDE governs the distribution of the limiting loss at all future horizons, facilitating the computation of the ``loss surface.'' This dynamic perspective is absent in the static formulations considered in most of the aforementioned articles.

The law of large numbers (LLN) proved in this paper significantly extends an earlier result in \citeasnoun{GieseckeSpiliopoulosSowers2011}, which assumes Ornstein-Uhlenbeck dynamics for the systematic risk factor and the sensitivity of the intensity to the systematic risk to vanish in the large-portfolio limit. The LLN developed here allows for general diffusion dynamics for the systematic risk factor. Moreover, the exposure of an intensity to this diffusion is not required to vanish, generating a much richer, non-deterministic limiting behavior governed by an SPDE rather than a PDE. The treatment of these features requires additional arguments. Our main result (Theorem \ref{T:MainLLN0}) develops the stochastic evolution equation that the limiting empirical measure satisfies. If the limiting empirical measure admits a density, then an integration by parts argument shows that the density satisfies an SPDE. The filtered martingale problem is used to identify the limit and prove the LLN. A major difficulty in the identification of the limit is the solution of a coupled system of SDEs, which we address using fixed-point arguments. In contrast to \citeasnoun{GieseckeSpiliopoulosSowers2011}, this system does not easily decouple in the more general setting considered here. The analysis of the fixed-point arguments is complicated due to the square-root singularity.

%Our main result is Theorem \ref{T:MainLLN0}; it states the stochastic evolution equation that the limiting empirical measure satisfies. Given the existence of the density of this limiting empirical measure, we then get the SPDE that the density satisfies by integration by parts. We solve this SPDE by a method of moments  technique that allows us to derive the evolution of the distribution of the limiting loss of the portfolio.

%The treatment of these features requires additional arguments. The filtered martingale problem is used to identify the limit and prove the LLN (Theorem \ref{T:MainLLN0}). A major difficulty in the identification of the limit is the solution of a coupled system of SDEs, which we address using fixed-point arguments. In contrast to \citeasnoun{GieseckeSpiliopoulosSowers2011}, this system does not easily decouple in the more general setting considered here. The analysis of the fixed-point arguments is complicated due to the square-root singularity.

\citeasnoun{CMZ} prove a LLN for a related system, taking an intensity as a function of an idiosyncratic risk factor, a systematic risk factor, and the portfolio loss rate. The risk factors follow diffusion processes whose coefficients may depend on the portfolio loss. In that formulation, the impact of a default on the dynamics of the surviving firms is permanent. In our work, an intensity depends on the path of the portfolio loss.  Therefore, the impact of a default on the surviving firms may be transient and fade away with time. There is a recovery effect. Other interacting particle systems with permanent default impact are analyzed by \citeasnoun{daipra-etal} and \citeasnoun{daipra-tolotti}, who take an intensity as a function of the portfolio loss rate. In a model with local interaction, \citeasnoun{giesecke-weber} take the intensity of a name as a function of the state of the names in a specified neighborhood of that name. These papers prove LLNs for the portfolio loss and develop Gaussian approximations to the portfolio loss distribution based on central limit theorems.  The interacting particle system which we propose and study includes firm-specific sources of default risk and addresses an additional source of default clustering, namely the exposure of a firm to a systematic risk factor process. This exposure generates a random limiting behavior. %, rendering approximations based on a law of large numbers meaningful.

There are several other related articles. \citeasnoun{davis-rod} develop large portfolio approximations based on a law of large numbers and a central limit theorem in a stochastic network setting, in which firms default independently of one another conditional on the realization of a systematic factor governed by a finite state Markov chain. \citeasnoun{sircar-zari} examine large portfolio asymptotics for utility indifference valuation of securities exposed to the losses in the pool. Our formulation addresses the dependence of the intensity on a systematic diffusion factor and the portfolio loss. It allows for self-exciting effects that violate the conditional independence assumption. Gaussian and large deviation approximations to the distribution of portfolio losses for an affine point process system with features similar to those of our, not necessarily affine, system are provided by \citeasnoun{zhang-etal}. Their asymptotic analysis is based on a ``large horizon'' rather than a ``large portfolio'' regime considered here and in the aforementioned articles. The scope of their approximations differs from that of ours.

%Lastly, we mention that the  LLN considered in this paper is not a trivial extension of the corresponding results in \citeasnoun{GieseckeSpiliopoulosSowers2011}. The main difficulties arise in the proof of Lemma \ref{L:bQDef}. In \citeasnoun{GieseckeSpiliopoulosSowers2011}, the dynamics of the systematic risk factor have a special form, in particular they are  Ornstein-Uhlenbeck type and  the sensitivity to the systematic risk vanishes in the limit. Under these restrictions, the coupled system $\{(Q(t),\lambda_{t}(\hat \pp))\}$ that arises in Lemma \ref{L:bQDef} can in fact be decoupled, which simplifies the analysis significantly.  This is not true in the general framework considered here. One has to work with the coupled system and use arguments based on fixed point theorems to assert its solution.

The rest of the paper is organized as follows. Section \ref{S:Model} describes a class of point process models of correlated default timing in a pool of names. Section \ref{S:Typical} states our main result, Theorem \ref{T:MainLLN0}, a law of large numbers for the loss rate in the pool. Section  \ref{S:HomogeneousPool} provides further insights into the limiting behavior of the loss for the special case of a homogenous pool. Section \ref{S:Numerics} develops, implements and tests a moment method for the numerical solution of the SPDE. Numerical results illustrate the method and demonstrate the accuracy of the approximation of the portfolio loss by the limiting loss. Section \ref{S:Extensions} discusses the extension of our results to more general intensity dynamics. Sections \ref{S:LimitIdentification} and \ref{S:MainProof} are devoted to the proof of Theorem \ref{T:MainLLN0}. Appendices provide auxiliary results.

\section{Model and Assumptions}\label{S:Model}
We provide a dynamic point process model of correlated default timing in a portfolio of names.  We assume that $(\Omega,\filt,\BP)$ is an underlying probability space on which all random variables are defined. Let $\{W^n\}_{n\in \N}$ be a countable collection of independent standard Brownian motions.  Let $\{\ee_n\}_{n\in \N}$ be an i.i.d. collection of standard exponential random variables which are independent of the $W^n$'s.  Finally, let $V$ be a standard Brownian motion which is independent of the $W^n$'s and $\ee_n$'s.  Each $W^n$ will represent a source of risk which is idiosyncratic to a specific name.  Each $\ee_n$ will represent a normalized default time for a specific name.  The process $V$ will drive a systematic risk factor process to which all names are exposed. Define $\mathcal{V}_{t}=\sigma\left(V_{s}, 0\leq s\leq t\right)\vee \mathcal{N}$
and $\filt_t=\sigma\left( (V_{s},W_{s}^{n}), 0\leq s\leq t, n\in \N\right) \vee \mathcal{N}$, where
$\mathcal{N}$ contains the $\BP$-null sets.

%Let $\tau^\NN$ be the time of default of the $n$-th name in a portfolio of $N\in\N$ names. Suppose the associated indicator $\chi_{\{\tau^\NN\le t\}}$ has intensity $\lambda^\NN$. That is, $\chi_{\{\tau^\NN\le t\}}-\int^t_0 \lambda^\NN_s ds$ defines a martingale. The collection of the $\lambda^\NN$ is governed by the following system:
Fix $N\in\N$, $n\in\{1,2,\ldots, N\}$ and consider the following system:
\begin{equation} \label{E:main}
\begin{aligned}
d\lambda^\NN_t &= -\alpha_\NN (\lambda^\NN_t-\bar \lambda_\NN)dt + \sigma_\NN \sqrt{\lambda^\NN_t}dW^n_t +  \beta^C_\NN dL^N_t+ \beta^S_\NN \lambda^\NN_t dX_t \qquad t>0\\
\lambda^\NN_0 &= \lambda_{\circ, N,n}\\
dX_t &= b_{0}( X_t) dt + \sigma_{0}(X_t)dV_t \qquad t>0\\
X_0&= x_\circ \\
L^N_t &= \frac{1}{N}\sum_{n=1}^N \chi_{[\ee_n,\infty)}\left(\int_{s=0}^t \lambda^\NN_s ds\right). \end{aligned}
\end{equation}
Here, $\chi$ is the indicator function. The initial condition $x_\circ$ of $X$ is fixed.
The $\alpha_\NN,\bar \lambda_\NN,\sigma_\NN,\beta^C_\NN,\beta^S_\NN$ are constant parameters, for each $N$ and $n$. We discuss their meaning below. %Further, $\beta^C_\NN\in \R_+\Def [0,\infty)$ and $\beta^S_\NN\in \R$ are constants which represent the exposure of the $n$-th name in the pool to $L^N$ and $X$, respectively.  The $\alpha_\NN$'s, $\bar \lambda_\NN$'s and $\sigma_\NN$'s are in $\R_+$ and characterize the dynamics of the names.    The initial condition $x_\circ$ of $X$ is fixed.
%We use $\chi$ to represent the indicator function here and throughout the paper.
The description of $L^N$ is equivalent to a more standard construction. In particular,  define
\begin{equation}\label{E:tau}
\tau^\NN \Def \inf\lb t\ge 0: \int_{s=0}^t \lambda^\NN_s ds\ge \ee_n\rb.
\end{equation}
Then $\chi_{[\ee_n,\infty)}(\int_{s=0}^t \lambda^\NN_s ds) = \chi_{\{\tau^\NN\le t\}}$
%\begin{equation}\label{indic} \chi_{[\ee_n,\infty)}\left(\int_{s=0}^t \lambda^\NN_s ds\right) = \chi_{\{\tau^\NN\le t\}} \end{equation}
and consequently
\begin{equation} L^N_t =\frac{1}{N}\sum_{n=1}^N \chi_{\{\tau^\NN\le t\}}. \end{equation}

The process $L^N$ represents the loss rate in a portfolio of $N$ names, assuming a loss given default of one unit. The process $\lambda^\NN$ represents the intensity, or conditional event rate, of the $n$-th name in the pool.  More precisely, $\lambda^\NN$ is the density of the Doob-Meyer compensator to the default indicator $\chi_{\{\tau^\NN\le t\}}$; see \eqref{E:DoobMeyer}. The results in Section 3 of \citeasnoun{GieseckeSpiliopoulosSowers2011} imply that the system (\ref{E:main}) has a unique  solution such that $\lambda^\NN_t\ge 0$ for every $N\in\N$, $n\in\{1,2,\ldots, N\}$ and $t\ge 0$. Thus, the model is well-posed.

The jump-diffusion intensity model (\ref{E:main}) is empirically motivated. It addresses several channels of default clustering. An intensity is driven by an idiosyncratic source of risk represented by a Brownian motion $W^n$, and a source of systematic risk common to all firms--the diffusion process $X$.  Movements in $X$ cause correlated changes in firms' intensities and thus provide a channel for default clustering emphasized by \citeasnoun{ddk} for corporate defaults in the U.S.  The sensitivity of $\lambda^\NN$ to changes in $X$ is measured by the parameter $\beta^S_\NN\in\R$.  The second channel for default clustering is modeled through the feedback (``contagion'') term $\beta^C_\NN dL^N_t$.  A default causes a jump of size $\tfrac1N\beta^C_\NN$ in the intensity $\lambda^\NN$, where $\beta^C_\NN\in \R_+= [0,\infty)$.  Due to the mean-reversion of $\lambda^\NN$, the impact of a default fades away with time, exponentially with rate $\alpha_\NN\in\R_+$. \citeasnoun{azizpour-giesecke-schwenkler} have found self-exciting effects of this type to be an important channel for the clustering of defaults in the U.S., over and above any clustering caused by the exposure of firms to systematic risk factors. \citeasnoun{giesecke-schwenkler} develop and analyze likelihood estimators of the parameters of point process models such as (\ref{E:main}).

%The stochastic system (\ref{E:main}) extends the one analyzed by \citeasnoun{GieseckeSpiliopoulosSowers2011}, who assume Ornstein-Uhlenbeck dynamics for the systematic risk factor $X$ and the sensitivity of the intensity to the systematic risk, $\beta^S_\NN$, to vanish in the limit as $N\to\infty$.

We allow for a heterogeneous pool; the intensity dynamics of each name can be different.  We capture these different dynamics by defining the ``types''
\begin{equation}\label{E:typedef} \pp^\NN \Def (\alpha_\NN,\bar \lambda_\NN,\sigma_\NN,\beta^C_\NN,\beta^S_\NN); \end{equation}
the $\pp^\NN$'s take values in parameter space $\PP\Def \R_+^4\times \R$.  In order to expect regular macroscopic behavior of $L^N$ as $N\to \infty$, the $\pp^\NN$'s and
the $\lambda_{\circ,N,n}$'s should have enough regularity as $N\to \infty$.
For each $N\in \N$, define
\begin{equation*} \pi^N \Def \frac{1}{N}\sum_{n=1}^N \delta_{\pp^\NN} \qquad \text{and}\qquad \Lambda^N_\circ \Def \frac{1}{N}\sum_{n=1}^N \delta_{\lambda_{\circ,N,n}}; \end{equation*}
these are elements of $\Pspace(\PP)$ and $\Pspace(\R_+)$ respectively\footnote{As usual, if $E$ is a topological space, $\Pspace(E)$ is the collection of Borel probability measures on $E$.}.

We require three main conditions. These conditions are in force throughout the paper, even though this may not always be stated explicitly.
Firstly, we assume that the types of \eqref{E:typedef} and the initial distributions (the $\lambda_{\circ,N,n}$'s) are sufficiently regular.
\begin{condition}\label{A:regularity}  $\pi \Def \lim_{N\to \infty}\pi^N$ and $\Lambda_\circ\Def \lim_{N\to \infty}\Lambda^N_\circ$ exist \textup{(}in $\Pspace(\PP)$ and $\Pspace(\R_+)$, respectively\textup{)}.
 \end{condition}
We also  require that the $\pi^N$'s and $\Lambda^N_\circ$'s all (uniformly in $N$) have compact support. We could relax this requirement, at the cost of a much more careful error analysis.
\begin{condition}\label{A:Bounded} There is a $\KK>0$ such that the $\alpha_\NN$'s, $\bar \lambda_\NN$'s, $\sigma_\NN$'s,
$\beta^C_\NN$'s, $|\beta^S_\NN|$'s, and $\lambda_{\circ,N,n}$'s are all bounded by  $\KK$ for all $N\in \N$ and $n\in \{1,2,\dots, N\}$. \end{condition}
Regarding the systematic risk process $X$, we assume
\begin{condition}\label{A:RegularityExogenous} The functions $b_{0}$ and $\sigma_{0}$ that govern the systematic risk diffusion $X$ are such that the corresponding
SDE has a unique strong solution. Moreover, there is a function $u(x)$  such that $\sigma_{0}(x)u(x)=-b_{0}(x)$ and for every $T>0$ we have
\begin{equation}
 \BE\left[e^{\frac{1}{2}\int_{0}^{T}|u(X_{s})|^{2}ds}\right]<\infty.\label{Eq:NovikovCondition}
\end{equation}
\end{condition}
The Novikov condition (\ref{Eq:NovikovCondition}) may not be necessary.  Lemma \ref{L:bQDef} is the key step for the proof of a law of large numbers for the loss rate $L^N$ in the system (\ref{E:main}), which is stated as Theorem \ref{T:MainLLN0} below. Its proof is based on a fixed point argument and uses Girsanov's theorem; this is where (\ref{Eq:NovikovCondition}) is required.%At the moment we do not know how to relax this assumption.
\footnote{If condition (\ref{Eq:NovikovCondition}) is required to hold only for some $T>0$, then the statement of Theorem \ref{T:MainLLN0} below will hold
for $t\in[0,T]$ instead of $t>0$.}

Our basic formulation significantly extends that of \citeasnoun{GieseckeSpiliopoulosSowers2011}. First, we allow the systematic risk $X$ to follow a general diffusion process with coefficients satisfying Condition \ref{A:RegularityExogenous} rather than a simple Ornstein-Uhlenbeck process. Second, we no longer require the exposure to the systematic risk, $\beta^S_\NN$, to vanish in the limit as $N\to\infty$. This implies a richer, non-deterministic limiting behavior. The analysis of this behavior is more challenging and requires new arguments.

Section \ref{S:Extensions} discusses further extensions of our basic formulation, including stochastic position losses and more general intensity dynamics.

\section{Law of Large Numbers}\label{S:Typical}
We develop a law of large numbers for the portfolio loss rate $L^N$ in the system (\ref{E:main}). To this end, we need to understand a system which contains a bit more information than the loss rate  $L^N$.   For each $N\in \N$ and $n\in \{1,2,\dots, N\}$, define
\begin{equation}\label{E:DoobMeyer} \dfi^\NN_t \Def \chi_{[0,\ee_n)}\left(\int_{s=0}^t \lambda^\NN_s ds\right) = \chi_{\{\tau^\NN>t\}} \end{equation}
(where $\tau^\NN$ is as in \eqref{E:tau}).  In other words, $\dfi^\NN_t=1$ if and only if the $n$-th name is still alive at time $t$; otherwise $\dfi^\NN_t=0$.
Thus $\dfi^\NN$ is nonincreasing and right-continuous.
It is easy to see that
\begin{equation*} \dfi^\NN_t + \int_{s=0}^t \lambda^\NN_s \dfi^\NN_s ds \end{equation*}
is a martingale.
Define $\hat \PP\Def \PP\times \R_+$.  For each $N\in \N$, define $\hat \pp^\NN_t \Def (\pp^\NN,\lambda^\NN_t)$
for all $n\in \{1,2,\dots, N\}$ and $t\ge 0$.  For each $t\ge 0$, define
\begin{equation*} \mu^N_t \Def \frac{1}{N}\sum_{n=1}^N\delta_{\hat \pp^\NN_t}\dfi^\NN_t; \end{equation*}
in other words we keep track of the empirical distribution of the type and intensity for those assets which are still ``alive''.
We note that
\begin{equation*} L^N_t =1-\mu^N_t(\hat \PP),\quad t\ge 0.\end{equation*}

We want to understand the dynamics of $\mu^N_t$ for large $N$ (this will then imply the ``typical" behavior for $L^N_t$).
To understand what our main result is, let's first set up a topological framework to understand convergence of $\mu^N$.  Let $E$ be the collection of sub-probability measures (i.e., defective probability measures) on $\hat \PP$; i.e., $E$ consists
of those Borel measures $\nu$ on $\hat \PP$ such that $\nu(\hat \PP)\le 1$.
We can topologize $E$ in the usual way (by projecting onto the one-point compactification of $\hat \PP$; see \citeasnoun[Ch. 9.5]{MR90g:00004}). In particular, fix a point $\pt$ that is not in $\hat \PP$ and define $\hat \PP^+\Def \hat \PP\cup \{\pt\}$.
Give $\hat \PP^+$ the standard topology; open sets are those which are open subsets of $\hat \PP$ (with its original topology) or complements in $\hat \PP^+$ of closed subsets
of $\hat \PP$ (again, in the original topology of $\hat \PP$).  Define a bijection $\iota$ from $E$ to $\Pspace(\hat \PP^+)$ (the collection of Borel probability measures
on $\hat \PP^+$) by setting
\begin{equation*} (\iota \nu)(A) \Def \nu(A\cap \hat \PP) + \left(1-\nu(\hat \PP)\right)\delta_{\pt}(A) \end{equation*}
for all $A\in \Borel(\hat \PP^+)$.  We can define the Skorohod topology on $\Pspace(\hat \PP^+)$, and define a corresponding metric on $E$ by requiring $\iota$ to be an isometry.  This makes $E$ a Polish space.
Thus, $\mu^N$ is an element\footnote{If $S$ is a Polish space, then $D_S[0,\infty)$ is the collection of maps from $[0,\infty)$ into $S$ which
are right-continuous and which have left-hand limits.  The space $D_S[0,\infty)$ can be topologized by the Skorohod metric, which we will denote by $d_S$; see \citeasnoun{MR88a:60130}.} of $D_E[0,\infty)$.

The main result of this paper is Theorem \ref{T:MainLLN0},
essentially a law of large numbers for $\mu^N_t$ as
$N\uparrow\infty$.  For $\hat \pp=(\pp,\lambda)$ where
$\pp=(\alpha,\bar \lambda,\sigma,\beta^C,\beta^S)\in \PP$ and $f\in
C^\infty(\hat \PP)$, define the operators
\begin{equation}\label{E:Operators1}
\begin{aligned} (\genL_1 f)(\hat \pp) &= \frac12 \sigma^{2}\lambda\frac{\partial^2 f}{\partial \lambda^2}(\hat \pp) - \alpha(\lambda-\bar \lambda)\frac{\partial f}{\partial \lambda}(\hat \pp)-\lambda f(\hat \pp)\\
(\genL_2 f)(\hat \pp) &= \beta^C \frac{\partial f}{\partial \lambda}(\hat \pp)\\
(\genL_3^{x} f)(\hat \pp) &= \beta^{S}\lambda b_{0}(x)\frac{\partial f}{\partial \lambda}(\hat \pp)+\frac{1}{2}(\beta^{S})^{2}\lambda^{2}\sigma_{0}^{2}(x)\frac{\partial^{2}f}{\partial \lambda^{2}}(\hat \pp)\\
(\genL_4^{x} f)(\hat \pp) &=
\beta^{S}\lambda\sigma_{0}(x)\frac{\partial f}{\partial
\lambda}(\hat \pp).
 \end{aligned}
\end{equation}
Also define
\begin{equation*} \QQ(\hat \pp) \Def \lambda .\end{equation*}
The generator $\genL_1$ corresponds to the diffusive part of the
intensity with killing rate $\lambda$, and $\genL_2$ is the
macroscopic effect of contagion on the surviving intensities at any
given time.  Operators $\genL_3^{x}$ and $\genL_4^{x}$ are related
to the exogenous systematic risk $X$.

For
every $f\in C^\infty(\hat \PP)$ and $\mu\in E$, define
\begin{equation*} \la f,\mu\ra_E \Def \int_{\hat \pp \in \hat \PP}f(\hat \pp)\mu(d\hat \pp). \end{equation*}

\begin{theorem}\label{T:MainLLN0}
We have that $\mu^{N}_{\cdot}$ converges in distribution to $\bar{\mu}_{\cdot}$ in
$D_E[0,T]$. The evolution of
$\bar{\mu}_{\cdot}$ is given by the measure evolution equation
\begin{align*}
d\la f,\bar \mu_t\ra_E &=  \left\{\la \genL_1f,\bar \mu_t\ra_E+ \la \QQ,\bar \mu_t\ra_E
\la \genL_2f,\bar \mu_t\ra_E+\la \genL^{X_{t}}_3 f,\bar
\mu_t\ra_E\right\}dt\nonumber\\
&+\la \genL^{X_{t}}_4 f,\bar \mu_t\ra_E dV_{t},\quad
\forall f\in C^\infty(\hat \PP) \text{ a.s.}
\end{align*}
Suppose there is a solution of the nonlinear SPDE
\begin{align} \label{Eq:NonlinearSPDE}
\begin{aligned}
d\upsilon(t,\hat \pp) &= \left\{\genL_1^*\upsilon(t,\hat \pp) +\genL_3^{*,X_{t}}\upsilon(t,\hat \pp) + \left(\int_{\hat \pp'\in \hat \PP} \QQ(\hat \pp')\upsilon(t,\hat \pp')d\hat \pp'\right) \genL_2^* \upsilon(t,\hat \pp)\right\}dt\\
&+ \genL_4^{*,X_{t}}\upsilon(t,\hat \pp)dV_t,\quad t>0,\quad \hat \pp\in \hat \PP
\end{aligned}
\end{align}
where $\genL_i^*$ denote adjoint operators, with initial condition
\begin{equation*} \lim_{t\searrow 0}\upsilon(t,\hat \pp)d\hat \pp = \pi \times \Lambda_\circ. \end{equation*}
Then
\begin{equation*}
 \bar{\mu}_t = \upsilon(t,\hat \pp)d\hat \pp.
 \end{equation*}
\end{theorem}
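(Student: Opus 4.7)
The plan is to prove the law of large numbers by the filtered martingale problem method. First I would derive a semimartingale decomposition for the real-valued processes $\la f,\mu^N_t\ra_E$, $f\in C^\infty(\hat \PP)$; then show tightness of $\{\mu^N\}$ in $D_E[0,T]$; then identify any weak limit as a solution of the measure evolution equation stated in the theorem; and finally establish uniqueness so that subsequential convergence upgrades to full convergence in distribution. The SPDE claim would follow by integration by parts together with this uniqueness.

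For the decomposition, I apply It\^o's formula to $f(\hat \pp^\NN_t)\dfi^\NN_t$. The type component of $\hat \pp^\NN$ is constant; $\lambda^\NN$ is driven by $W^n$, $V$, and $dL^N$; and $\dfi^\NN$ jumps from $1$ to $0$ at $\tau^\NN$ with compensator density $\lambda^\NN_t\dfi^\NN_t$. Summing over $n$ and dividing by $N$, the diffusive part of $\lambda^\NN$ combined with the killing from $\dfi^\NN$ yields $\la \genL_1 f,\mu^N_t\ra_E\,dt$; the $X$-driven drift and quadratic variation produce $\la \genL_3^{X_t} f,\mu^N_t\ra_E\,dt$; the contagion jumps $\beta^C_\NN\,dL^N_t$ produce $\la \QQ,\mu^N_t\ra_E\,\la \genL_2 f,\mu^N_t\ra_E\,dt$ up to $O(1/N)$ corrections, since the compensator of $L^N$ is $\tfrac{1}{N}\sum_n \lambda^\NN_t\dfi^\NN_t\,dt$; the common Brownian $V$ contributes a noise term $\la \genL_4^{X_t} f,\mu^N_t\ra_E\,dV_t$; and the idiosyncratic martingales from $W^n$ and from the jumps of $\dfi^\NN$ are orthogonal across $n$, averaging to a martingale whose bracket is $O(1/N)$. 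Tightness then follows from Condition \ref{A:Bounded}, which provides uniform moment bounds on the $\lambda^\NN$'s (the square-root diffusion with bounded coefficients has good $L^p$ estimates) and hence uniform control on the drifts and brackets above, so that Aldous's criterion applies to $\la f,\mu^N\ra_E$ for a separating class of smooth $f$, while compact containment is automatic in the Polish space $E$.

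Identifying the limit is where the real difficulty lies. Passing to the limit in the decomposition, any weak limit point $\bar \mu$ satisfies the stated measure evolution equation with initial datum $\pi\times\Lambda_\circ$ by Condition \ref{A:regularity}. The equation is both nonlinear, through the coupling $\la\QQ,\bar \mu_t\ra_E\la\genL_2 f,\bar \mu_t\ra_E$, and stochastic, through its dependence on the path of $X$. Following the route indicated by Lemma \ref{L:bQDef}, I would treat the scalar process $t\mapsto \la\QQ,\bar \mu_t\ra_E$ as an unknown, freeze it, and view the resulting linear, conditionally-on-$X$ equation as characterizing $\bar \mu_t$ via an expectation over the law of a tagged particle. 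A Girsanov transformation, justified by the Novikov bound (\ref{Eq:NovikovCondition}) in Condition \ref{A:RegularityExogenous}, removes the drift of $X$ and reduces the problem to a fixed point for the single scalar functional $t\mapsto \la\QQ,\bar \mu_t\ra_E$ on $[0,T]$, which I would close by a contraction argument on a small time interval and iterate. Closing this contraction in the presence of the square-root singularity $\sqrt{\lambda^\NN}$, so that the map respects the required regularity of its argument, is the main obstacle, as flagged in the introduction.

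Finally, for the SPDE statement, set $\tilde \mu_t(d\hat \pp)\Def \upsilon(t,\hat \pp)\,d\hat \pp$ for $\upsilon$ a solution of (\ref{Eq:NonlinearSPDE}). Pairing the SPDE with $f\in C^\infty(\hat \PP)$ and integrating by parts transfers each adjoint $\genL_i^*$ back to $\genL_i$ acting on $f$, and the coefficient $\int \QQ(\hat \pp')\upsilon(t,\hat \pp')\,d\hat \pp'$ multiplying $\genL_2^*\upsilon$ is exactly $\la\QQ,\tilde \mu_t\ra_E$. Hence $\tilde \mu_t$ satisfies the same measure evolution equation as $\bar \mu_t$ with the same initial datum $\pi\times\Lambda_\circ$, and the uniqueness established in the previous step forces $\bar \mu_t=\upsilon(t,\hat \pp)\,d\hat \pp$.
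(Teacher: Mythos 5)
Your outline reproduces the paper's actual proof strategy step by step: the semimartingale decomposition that isolates $\genL_1,\genL_2,\genL_3^{X_t},\genL_4^{X_t}$ with $O(1/N)$ corrections and vanishing martingale brackets (Lemma \ref{L:wconv}), tightness, the filtered martingale problem to pass to a limit (Proposition \ref{P:conv}), the fixed-point-on-$Q(t)$ identification conditional on the $V$-path combined with a Girsanov change of measure supported by the Novikov bound (Lemma \ref{L:bQDef} and Lemma \ref{L:Qchar}), and the integration-by-parts reduction of the measure evolution equation to the SPDE. You also correctly flag the square-root singularity as the technical obstacle to closing the contraction, which the paper resolves with a Yamada--Watanabe-type mollifier $\psi_\eta$. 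This is essentially the paper's proof.
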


\begin{remark}
The SPDE (\ref{Eq:NonlinearSPDE}) should be supplied with appropriate boundary conditions. In Section \ref{SS:BoundaryConditions} below, we will justify the conditions
$$\upsilon(t,\lambda = 0, \pp) = \upsilon(t, \lambda = \infty, \pp) = 0.$$
\end{remark}

The proof of Theorem \ref{T:MainLLN0} is given in Sections \ref{S:LimitIdentification} and \ref{S:MainProof}. Lemma \ref{L:Qchar} provides an alternative characterization of the limit $\bar{\mu}$. Auxiliary results are given in the appendices.

\begin{remark}\label{R:SPDE}
Equation (\ref{Eq:NonlinearSPDE}) is a stochastic partial integral differential equation (SPIDE)  in the half line that degenerates at the boundary $\lambda=0$. Due to Lemma \ref{L:bQDef} below, (\ref{Eq:NonlinearSPDE}) can be viewed as a linear SPDE in the half line that degenerates at the boundary. Indeed, by Lemma \ref{L:bQDef}, there is a unique pair $\{(Q(t),\lambda_{t}(\hat \pp)):t\geq 0\}$ taking values in
$\R_+\times\R_{+}$ satisfying the coupled system (\ref{E:bQDef})-(\ref{E:EffectiveEquation1}).  Noting that $Q(t)=\la \QQ,\bar \mu_t\ra_E$, we see that the SPDE (\ref{Eq:NonlinearSPDE}) can be written as
\begin{equation} d\upsilon(t,\hat \pp) = \left\{\genL_1^*\upsilon(t,\hat \pp) +\genL_3^{*,X_{t}}\upsilon(t,\hat \pp) + Q(t) \genL_2^* \upsilon(t,\hat \pp)\right\}dt + \genL_4^{*,X_{t}}\upsilon(t,\hat \pp)dV_t,\quad t>0,\quad \hat \pp\in \hat \PP. \label{Eq:NonlinearSPDE2}
\end{equation}
Notice also that by Remark \ref{R:QisBounded}, $Q(t)$  is bounded for all $t\in\R_{+}$.
%In this paper we concentrate on the numerical solution of (\ref{Eq:NonlinearSPDE}) or equivalently of (\ref{Eq:NonlinearSPDE2}). Our methodology is based on a moment inversion technique that will be exploited in the sequel. A complete theoretical treatment of the SPDE (\ref{Eq:NonlinearSPDE}) is beyond the scope of this paper. We mention however that linear SPDE's in the half line that degenerate at the boundary have been treated under different assumptions by the authors in  \citeasnoun{KryLot1998,Kim2009,Lot2001,Kim2008}.
Linear SPDEs in the half line that degenerate at the boundary are treated, under alternative assumptions, by \citeasnoun{KryLot1998}, \citeasnoun{Kim2009}, \citeasnoun{Lot2001} and \citeasnoun{Kim2008}.
\end{remark}

\section{Homogeneous Pool}\label{S:HomogeneousPool}
We develop further insights into the SPDE governing the limit density (if it exists) in the case that the portfolio is homogenous.
Let  $\hat \pp=(\pp,\lambda)$ where $\pp=(\alpha,\bar \lambda,\sigma,\beta^C,\beta^S)\in \PP$. For a homogenous pool, $\hat
\pp^\NN = \hat \pp$ for all $N\in \N$ and $n\in \{1,2,\dots, N\}$. In this case, we write $\upsilon(t,\lambda)$ for the solution of the SPDE (\ref{Eq:NonlinearSPDE}), suppressing the dependence on the fixed $\hat \pp$.

\subsection{Limiting Portfolio Loss} The SPDE takes the form
\begin{equation}\label{Eq:NonlinearSPDEhom}
\begin{aligned}
d\upsilon(t,\lambda) &= \left\{\genL_1^*\upsilon(t,\lambda) +\genL_3^{*,X_{t}}\upsilon(t,\lambda) + \left(\int_{0}^{\infty} \lambda\upsilon(t,\lambda)d\lambda\right) \genL_2^* \upsilon(t,\lambda)\right\}dt + \genL_4^{*,X_{t}}\upsilon(t,\lambda)dV_t,\, t,\lambda>0\\
 \upsilon(0,\lambda) &=   \Lambda_\circ(\lambda),\\
 \upsilon(t,0) &=\lim_{\lambda\nearrow\infty}\upsilon(t,\lambda) =0
\end{aligned}
\end{equation}
where the adjoint operators are given by
\begin{align}
\begin{aligned}
\genL_1^*\upsilon(t,\lambda)&=\frac{\partial^{2}}{\partial\lambda^{2}}\left(\frac12 \sigma^{2}\lambda\upsilon(t,\lambda)\right) +\frac{\partial}{\partial\lambda}\left( \alpha(\lambda-\bar \lambda)\upsilon(t,\lambda)\right)-\lambda \upsilon(t,\lambda)\\
\genL_2^*\upsilon(t,\lambda)&=- \beta^C \frac{\partial \upsilon(t,\lambda)}{\partial \lambda}\\
\genL_3^{*,x}\upsilon(t,\lambda)&=\frac{\partial^{2}}{\partial\lambda^{2}}\left(\frac{1}{2}(\beta^{S})^{2}\lambda^{2}\sigma_{0}^{2}(x)\upsilon(t,\lambda)\right) -\frac{\partial}{\partial\lambda}\left( \beta^{S}\lambda b_{0}(x)\upsilon(t,\lambda)\right)\\
\genL_4^{*,x}\upsilon(t,\lambda)&=-\beta^{S} \sigma_{0}(x)\frac{\partial}{\partial \lambda}\left(\lambda\upsilon(t,\lambda)\right).
\end{aligned}
\end{align}

Define the \emph{limiting portfolio loss} $L$ by
\begin{equation}\label{Eq:LimitDefaultsHom}
L_{t}\Def 1-\int_{0}^{\infty}\upsilon(t,\lambda)d\lambda,\quad t\ge 0;
\end{equation}
this is a random quantity since $\upsilon(t,\lambda)$ depends on the systematic risk $X_{t}$. For large $N$, Theorem \ref{T:MainLLN0} suggests the ``large-portfolio approximation''
\begin{equation}\label{limiting-loss-approx}
L_{t}^{N}\approx L_t,\quad t\ge 0.
%L_{t}=1-\int_{0}^{\infty}\upsilon(t,\lambda)d\lambda.
\end{equation}

\subsection{Justification of Boundary Conditions}\label{SS:BoundaryConditions}
Note that  $\int_0^{\infty} \upsilon(t, \lambda) d \lambda \leq 1$ and $\upsilon(t, \lambda) \geq 0$ for all $t \geq 0$ and $\lambda \in \mathbb{R}^{+}$.
Assuming that $\upsilon(t, \lambda)$ is continuous in $\lambda$, this gives   $\lim_{\lambda \to \infty} \upsilon(t, \lambda) = 0$,
for otherwise the integral $\int_0^{\infty} \upsilon(t, \lambda) d \lambda$ diverges.

The boundary condition of $\upsilon(t, \lambda = 0)$ is implied by the intensities $\lambda_t^{N,n}$ being positive almost surely. For the deterministic case of $\beta^S = 0$, it is sufficient to stipulate Feller's condition of $\alpha \bar{\lambda} > \frac{1}{2} \sigma^2$, see \citeasnoun{FellerTwoSingularDiffusionProblems}.  For the case of $\beta^S > 0$, let us assume that $\upsilon(t, \lambda = 0) < \infty$.  Then, Feller's condition is again sufficient to imply $\upsilon(t, \lambda = 0) = 0$.   Let the flux be  $f(t)$ where $f(t) \Def\frac{\partial}{\partial t} \int_0^{\infty} \upsilon(t, \lambda) d \lambda = - \int_0^{\infty} \lambda \upsilon (t, \lambda) d \lambda$.  This follows from the fact that the $\lambda_t^{N,n}$ stay non-negative almost surely by Lemma $3.1$  in \citeasnoun{GieseckeSpiliopoulosSowers2011} and therefore, according to the empirical measure $\mu_t^N$, only leave $[0, \infty)$ by defaulting.  In the asymptotic case described by the SPDE, defaults occur via the sink term $- \lambda \upsilon$.   Then, probability mass only leaves $[0, \infty)$ via the sink term and does not flow across the boundaries at $\lambda =0$ and $\lambda = \infty$.  Integrating (\ref{Eq:NonlinearSPDEhom}) over $\mathbb{R}^{+}$ and using the aforementioned flux condition along with the boundary condition at $\lambda = \infty$, we have that
\begin{equation*}
0=  \alpha \bar{\lambda} \upsilon(t, \lambda = 0) -\frac{1}{2} \sigma^2 \upsilon (t, \lambda = 0) + \beta^C \big{(} \int_0^{\infty} \lambda \upsilon (t, \lambda) d \lambda \big{)} \upsilon (t, \lambda = 0),
\end{equation*}
which is only satisfied by $\upsilon(t, \lambda = 0) = 0$.

This discussion provides a justification for the choice  of the boundary conditions for a homogeneous pool. The treatment of a heterogeneous pool is analogous.

\subsection{Alternative Representation of Limiting Loss}\label{SS:AlternativeRepresentation}
As we shall see below, if the density $\upsilon(t,\lambda)$ has sufficiently fast decay at $\lambda=\infty$ and at $\lambda=0$, then one can justify an alternative representation of the limiting loss (\ref{Eq:LimitDefaultsHom}). %This formulation, gives a different physical meaning to the relation of $L_{t}$ and $\bar{\mu}_{t}$ and in particular that the rate of change of $L_{t}$ is in a sense a weighted average of $\bar{\mu}$. Let us be more precise.
By integration by parts, we have
\begin{eqnarray}
L_{t}&=&1-\int_{0}^{\infty}\upsilon(t,\lambda)d\lambda\nonumber\\
&=&1-\left[\int_{0}^{\infty}\upsilon(0,\lambda)d\lambda+\int_{0}^{t}\left[\alpha(\lambda-\bar{\lambda})\upsilon(s,\lambda)-\beta^{S}\lambda b_{0}(X_{s})\upsilon(s,\lambda)\right]_{\lambda=0}^{\lambda=\infty}ds\right.\nonumber\\
& &\left.+
\int_{0}^{t}\left[\frac{1}{2}\sigma^{2}\lambda\upsilon_{\lambda}(s,\lambda)+\frac{1}{2}\sigma^{2}\upsilon(s,\lambda)+
\frac{1}{2}\left(\beta^{S}\right)^{2}\lambda^{2}\sigma_{0}^{2}\upsilon_{\lambda}(s,\lambda)+\left(\beta^{S}\right)^{2}\lambda\sigma_{0}^{2}\upsilon(s,\lambda)\right]_{\lambda=0}^{\lambda=\infty}ds\right.\nonumber\\
& &\left.-\int_{0}^{t}\int_{0}^{\infty}\lambda\upsilon(s,\lambda)d\lambda ds-\beta^{C}\int_{0}^{t}\la \iota,\upsilon(s,\cdot)\ra\left[\upsilon(s,\lambda)\right]_{\lambda=0}^{\lambda=\infty}ds-\int_{0}^{t}\left[\beta^{S}\lambda\sigma_{0}(X_{s})\upsilon(s,\lambda)\right]_{\lambda=0}^{\lambda=\infty}dV_{s}\right].\label{Eq:IntegrationByParts}
\end{eqnarray}

Now, observe that
\begin{equation*}
\mu^N_0(\hat \PP) = \frac{1}{N}\sum_{n=1}^N\delta_{\hat \pp^\NN_0}(\hat \PP)\dfi^\NN_0=\frac{1}{N}\sum_{n=1}^N\delta_{\hat \pp^\NN_0}(\hat \PP)\chi_{\{\tau^\NN>0\}}=1.
 \end{equation*}
This implies that
\begin{equation}
\int_{0}^{\infty}\upsilon(0,\lambda)d\lambda=1.\label{Eq:DensityTime0}
\end{equation}
Recall the boundary conditions $\upsilon(t,0) =\lim_{\lambda\nearrow\infty}\upsilon(t,\lambda) =0$. Then, if we assume that for any $t\in\mathbb{R}_{+}$, $\upsilon(t,\cdot)$ and $\upsilon_{\lambda}(t,\cdot)$ decay fast at infinity, in the sense that $\lim_{\lambda\uparrow\infty}\lambda^{2}\upsilon_{\lambda}(t,\lambda)=\lim_{\lambda\uparrow\infty}\lambda\upsilon(t,\lambda)=0$, the integration by parts formula (\ref{Eq:IntegrationByParts}) and equation (\ref{Eq:DensityTime0}) imply
\begin{equation}
L_{t}=1-\int_{0}^{\infty}\upsilon(t,\lambda)d\lambda=
\int_{0}^{t}\int_{0}^{\infty}\lambda\upsilon(s,\lambda)d\lambda ds\label{Eq:LalternativeRepresentation}
\end{equation}
A particularly interesting consequence of (\ref{Eq:LalternativeRepresentation}) is summarized in the following remark.
\begin{remark}\label{R:QisBounded}
 Relation (\ref{Eq:LalternativeRepresentation})  implies that the integral term in (\ref{Eq:NonlinearSPDEhom}) is bounded. Indeed, the  term is $Q(t)=\int_{0}^{\infty} \lambda\upsilon(t,\lambda)d\lambda$. Then,
(\ref{Eq:LalternativeRepresentation}) can be rewritten as
\begin{equation}
\int_{0}^{\infty}\upsilon(t,\lambda)d\lambda+
\int_{0}^{t}Q(s) ds=1\label{Eq:LalternativeRepresentation2}
\end{equation}
Since $\int_0^{\infty} \upsilon(t, \lambda) d \lambda \leq 1$, relation (\ref{Eq:LalternativeRepresentation2}) implies that for any $t\in\R_{+}$ we have $Q(t)<\infty$. It is easy to see that the corresponding conclusion also holds for the heterogeneous pool due to Condition \ref{A:Bounded}.
\end{remark}
Moreover, (\ref{Eq:LalternativeRepresentation}) implies for the rate of change
\begin{equation*}
\dot{L}_{t}=-\int_{0}^{\infty}\upsilon_{t}(t,\lambda)d\lambda=
\int_{0}^{\infty}\lambda\upsilon(t,\lambda)d\lambda=\int_{0}^{\infty}\lambda\bar{\mu}_{t}(d\lambda).
\end{equation*}
Finally, we mention that one can view $(L_{t},\upsilon(t,\lambda))$ as a pair satisfying
\begin{eqnarray*} d\upsilon(t,\lambda) &=& \left\{\genL_1^*\upsilon(t,\lambda) +\genL_3^{*,X_{t}}\upsilon(t,\lambda)\right\}dt + \genL_2^* \upsilon(t,\lambda)dL_{t} + \genL_4^{*,X_{t}}\upsilon(t,\lambda)dV_t\qquad t,\lambda>0\nonumber\\
\upsilon(0,\lambda) &=&  \Lambda_\circ(\lambda)\label{Eq:NonlinearSPDEhomAlternative}\\
L_{t}&=&1-\int_{0}^{\infty}\upsilon(t,\lambda)d\lambda=
\int_{0}^{t}\int_{0}^{\infty}\lambda\upsilon(s,\lambda)d\lambda ds.\nonumber
\end{eqnarray*}

\section{Numerical method and results}\label{S:Numerics}
We develop and implement a method for the numerical solution of the SPDE (\ref{Eq:NonlinearSPDEhom}) governing the limiting portfolio loss (\ref{Eq:LimitDefaultsHom}) in a homogenous pool. We obtain the distribution of the limiting loss $L_t$, which we propose as an approximation to the loss $L^N_t$ when $N$ is large. Numerical results illustrate the method, as well as the accuracy and computational efficiency of the approximation.

\subsection{Numerical Approaches in the Deterministic Case}\label{SS:NumericsDeterministicCase}
In the case of $\beta^S = 0$, (\ref{Eq:NonlinearSPDEhom}) becomes a deterministic, quasi-linear PDE.  For completeness, we outline some numerical methods for this case.

\subsubsection{No Feedback}
If in addition $\beta^C = 0$, the default times $\tau^{N,n}$ are independent and we have the analytic solution $L_t = 1-\int_0^{\infty} \exp(A(t) + B(t) \lambda) \Lambda_{\circ}(\lambda) d \lambda$ where $A(t)$ and $B(t)$ satisfy
\begin{eqnarray*}
B(t) &=& \frac{1}{\sigma^2}\left(\alpha + \gamma \tanh\left( -\frac{1}{2} \gamma t + c_1\right)\right) \notag \\
A(t) &=&  - c_2 d_2 t + \frac{2 d_1 d_2}{\gamma} \log \left( \frac{\cosh(-\frac{1}{2} \gamma t + c_1 )}{\cosh (c_1) } \right),
\end{eqnarray*}
where $\gamma = \sqrt{\kappa^2 + 2 \sigma^2}$, $c_1 = \tanh^{-1} ( \frac{- \alpha}{ \gamma})$, $c_2 = \frac{\alpha}{\sigma^2}$, $d_1 = \frac{\gamma}{\sigma^2}$, and $d_2 = - \alpha \bar{\lambda}$.

\subsubsection{Finite Difference}\label{SS-finiteDiff}
For the deterministic case where $\beta^C > 0$, a finite difference scheme can efficiently solve the PDE.  We devise a scheme which is implicit in the differential operators and explicit in the integral operator.  A predictor-corrector iteration is employed to increase accuracy for the integral term.  This finite difference scheme is second-order accurate.  Let $\Delta$ be the time-step for
the scheme.  Also, denote $\upsilon_{j} = \upsilon(j \Delta, \lambda)$ and let $\upsilon_{j + \frac{m}{k}}$
for $m = 1,2, \ldots, k-1$ be predictor steps.  Formally,
\begin{eqnarray*}
\frac{\upsilon_{j+\frac{1}{k}} - \upsilon_j}{\Delta} &=& \genL [\frac{1}{2} (\upsilon_{j+\frac{1}{k}} + \upsilon_j)]  + \mathcal{I} [\upsilon_j]   \genL_2^* [\frac{1}{2}(\upsilon_{j+ \frac{1}{k}}+ \upsilon_j)], \notag \\
\frac{\upsilon_{j+\frac{2}{k}} - \upsilon_j}{\Delta} &=& \genL [\frac{1}{2} (\upsilon_{j+\frac{2}{k}} + \upsilon_j)]  + \mathcal{I} [\frac{1}{2}(\upsilon_{j+\frac{1}{k}}+ v_j)] \genL_2^* [(\frac{1}{2}(\upsilon_{j+ \frac{2}{k}}+ \upsilon_j)],  \notag \\
&\vdots& \notag \\
\frac{\upsilon_{j+1} - \upsilon_j}{\Delta} &=& \genL[\frac{1}{2} (\upsilon_{j+1} + \upsilon_j)]  + \mathcal{I} [\frac{1}{2}(\upsilon_{j+\frac{k-1}{k}}+ \upsilon_j)] \genL_2^* [\frac{1}{2}(\upsilon_{j+ 1}+ \upsilon_j) ],
\end{eqnarray*}
where $\genL = \genL_1^* + \genL_3^*$ and $\mathcal{I}[ \upsilon(t, \lambda)] = \beta^C \int_0^{\infty} \lambda \upsilon(t, \lambda) d \lambda$.
%\subsubsection{Integral Equation}
%Consider the case of $\beta^S = 0$ and $\sigma = 0$, which describes a mean-field Hawkes process. Using the method of characteristics, the solution can be shown to satisfy the integral equation
%\begin{eqnarray}
%&v(t, \lambda) = \Lambda_{\circ} (x) \big{(} \alpha \bar{\lambda} t+ \bar{\lambda}(t + \frac{e^{- \alpha t}-1}{\alpha})- \int_0^t \int_0^{s} e^{-\alpha (s-s')}  \beta^C  u_1(s') ds' ds - x t  \big{)}, \phantom{...} x   \geq 0, \notag \\
%&v(t, \lambda) = 0, \phantom{...} x \leq 0, \notag \\
%&  x = \lambda + \bar{\lambda}(1 - e^{- \alpha t}) - \int_0^t e^{-\alpha (t-s)}  \beta^C  u_1(s) ds, \notag \\
%& u_1(t) = \int_0^{\infty} \lambda v(t, \lambda) d \lambda.
%\label{intsolnpurejump}
%\end{eqnarray}
%\textcolor{red}{This can be solved iteratively, but a fixed point theorem would need to be developed.  Both the moment method and standard finite difference techniques for the deterministic case are quite efficient, so the need for such development is questionable.  }

\subsection{Method of Moments}\label{SS:MOM}
We provide a method for the numerical solution of the SPDE (\ref{Eq:NonlinearSPDEhom}) that applies in the case that $\beta^S\ge 0$.
Suppose that the boundary conditions for the SPDE are $\upsilon(t, \lambda = 0) = 0$ and $\lim_{\lambda \to \infty} \upsilon(t, \lambda) = 0$, as justified in Section \ref{SS:BoundaryConditions} above.  (Note that the latter boundary condition also implies $\lim_{\lambda \to \infty} \upsilon_{\lambda} (t, \lambda) = 0$.)  Furthermore, suppose that for each $k\in\mathbb{N}$, $\lambda^k \upsilon(t, \lambda)$ is integrable on $\mathbb{R}^{+}$, almost surely.   A sufficient condition is that the solution $\upsilon(t,\lambda)$ decays exponentially in $\lambda$; that is,  there exist constants $C_1, C_2 > 0$ such that $\upsilon (t, \lambda) < C_1 e^{-C_2 \lambda}$ almost surely for $t \geq 0$, $\lambda$ $\in$ $\mathbb{R}^{+}$.  (We note that it was shown in Remark \ref{R:QisBounded} that $u_0$ and $u_1$ exist.) Then, the moments $u_k(t) = \int_0^{\infty} \lambda^k \upsilon(t, \lambda) d \lambda$ exist almost surely.  They follow the SDE system
\begin{equation}\label{Eq: momentSDEone}
\begin{aligned}
d u_k(t)  &= \big\{ u_k(t) \big{(} - \alpha k + \beta^S b_0(X_t) k + 0.5 (\beta^S)^2 \sigma_0^2 (X_t) k (k-1)  \big{)}   \\
&+ u_{k-1}(t) \big{(} 0.5 \sigma^2 k(k-1) + \alpha \bar{\lambda} k + \beta^C k u_1(t) \big{)} - u_{k+1}(t) \big\} dt  + \beta^S \sigma_0(X_t) k u_k(t) d V_t, \\
u_k(0) &= \int_0^{\infty} \lambda^k \Lambda_{\circ} (\lambda) d \lambda.
\end{aligned}
\end{equation}
To find $u_k(t)$, multiply (\ref{Eq:NonlinearSPDEhom})  by $\lambda^k$ and integrate by parts over $[0, \infty)$.  Also, use the boundary conditions at $\lambda = 0$ and $\lambda = \infty$.   Note that the limiting loss $L_t = 1 - u_0(t)$. A standard discretization scheme can be used to numerically solve the system (\ref{Eq: momentSDEone}).\footnote{Since the solution $\upsilon$ of the SPDE is nonnegative, the moments should also be nonnegative.  However, due to the time discretization, the moments may become negative when simulated.  This can cause instability in the numerical scheme.  To avoid these problems, one could immediately set a moment to zero if it ever goes negative.  In particular, instability in the higher moments may occur for large $\beta^S$ due to the exponential growth term $\frac{1}{2} (\beta^S)^2 \sigma_0^2(X_t) k(k-1) u_k(t) dt$.  To reduce this instability, one could solve the transformed moments $w_k(t) = \exp(- \frac{1}{2} (\beta_S)^2 k(k-1) \int_0^t \sigma_0^2(X_s) ds )$. The SDEs for $\{ w_k \}_{k=0}^K$ will not have the exponential growth term anymore.  Note also that $w_0(t) = u_0(t)$.}

Moment methods have previously been applied to deterministic PDEs such as the Boltzmann equation, see \citeasnoun{Boltzmann}, for example.   The moment SDE system in our case is not closed since the $k$-th equation introduces the $(k+1)$-th moment.   So, in practice one must perform a truncation at some level $k = K$ where we let $u_{K+1} = u_K$ (that is, we use the first $K+1$ moments).  In the asymptotic time limit of the case where $\beta^S = \beta^C = 0$, the sensitivity of $u_0$ to  $u_{K+1}$ is of the order $1/K!$. The sensitivity in the more general case $\beta^S>0, \beta^C > 0$ is more difficult to analyze. %can be treated by use of a Taylor series expansion of the backward Kolmogorov equation corresponding to  (\ref{Eq: momentSDEone}).
Numerical results, reported below, indicate that the convergence in terms of the number of moments is very rapid.

We remark that $\beta^C$ plays a pivotal role in the truncated version of system (\ref{Eq: momentSDEone}).  If $\beta^C = 0$ and appropriate choices are made for the coefficient functions $b_0(\cdot)$ and $\sigma_0(\cdot)$ of the systematic risk, the truncated system satisfies the global Lipschitz condition and we have the standard existence and uniqueness results.  If $\beta^C > 0$, the truncated system is only locally Lipschitz and therefore there exists a unique solution up to a stopping time $\zeta (u_0(0), \ldots, u_K(0), \omega) : \mathbb{R}^{K+1} \times \Omega \to [0, \infty)$.

In addition, the moments can be inverted to yield $\upsilon(t, \lambda)$.  However, although a distribution uniquely determines its moments, the converse may not be true.  See \citeasnoun{InverseMoment} for some numerical methods for moment inversion.

There is an alternative approach to viewing the moment system.  The system (\ref{Eq: momentSDEone}) is driven by a single diffusion, which suggests there should be a canonical form where only one SDE has a diffusion term and the other SDEs only have drift terms.  Define $\eta_k(t)  = X_t - \frac{1}{k \beta^S} \log( u_k(t))$ for $k \geq 1$ and $\beta^S>0$. Then
\begin{eqnarray*}
% d X_t &=& b_0(X_t) dt + \sigma_0(X_t) d V_t, \notag \\
d u_0(t) &=& - e^{\beta^S (X_t - \eta_1(t))} dt, \notag \\
 d \eta_k(t) &=& \big{\{} b_0(X_t) +  0.5 k \beta^S \sigma_0^2(X_t) -\frac{1}{\beta^S}  \big{[} - \alpha  + \beta^S b_0(X_t) + 0.5 (\beta^S)^2 \sigma_0^2 (X_t)  (k-1)  \notag \\
&+& e^{(k-1) \beta^S (X_t - \eta_{k-1}(t)) - k \beta^S(X_t - \eta_k(t))} \big{(} 0.5 \sigma^2 (k-1) + \alpha \bar{\lambda} + \beta^C  e^{ \beta^S ( X_t - \eta_1(t))} \big{)} \\
&-& k^{-1} e^{(k+1) \beta^S (X_t - \eta_{k+1}(t)) - k \beta^S(X_t - \eta_k(t))}  \big{]}  \big{ \} } dt. \notag
\end{eqnarray*}
The moment $u_0$ and the ``canonical'' moments $\eta_k$ solve a system of random ODEs. They depend on the path of the systematic risk $X$. A skeleton of $X$ can be generated exactly (without discretization bias) using the methods of \citeasnoun{Beskos}, \citeasnoun{chen}, or \citeasnoun{smelov}.

%This is advantageous for numerical solution.  In the original system (\ref{Eq: momentSDEone}), there is potential for bias when integrating the $u_k d V_t$ term.  In the canonical form of the system, these terms are removed and the only source of this type of bias is in the simulation of the process $X$.  The process $X$ can, however, often be simulated exactly;  see \citeasnoun{Beskos}, \citeasnoun{chen}, and \citeasnoun{smelov} for exact simulation methods for SDEs.

\subsection{Behavior of Limiting Loss Distribution}\label{SS:AsymptoticLoss}
\begin{figure}[t]
\begin{center}
\includegraphics[scale=0.7]{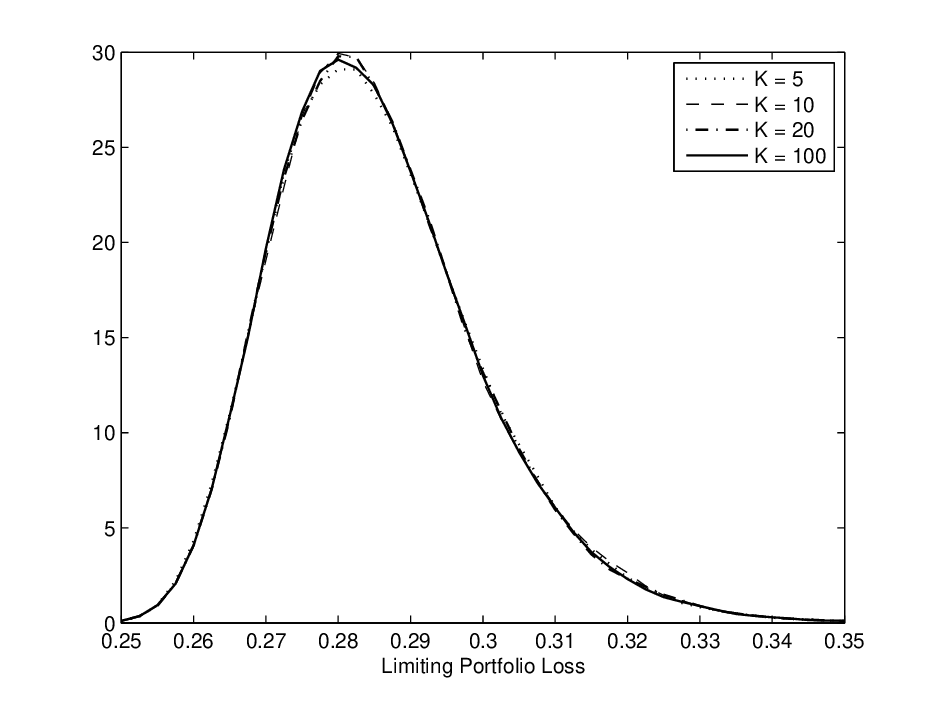}%{ComparisonMomentdifferentN.eps}
\end{center}
\caption{\label{fig:truncation} Comparison of distribution of limiting portfolio loss $L_t$ for different truncation levels $K$ at $t=1$.  The parameter case is $\sigma = .9$, $\alpha = 4$, $\bar{\lambda} = .2$, $\lambda_0 = .2$,  $\beta^C = 2$, and $\beta^S = 2$. $50,000$ Monte Carlo trials were used.}
\end{figure}

We provide some numerical results. Here and below, we choose the systematic risk process to be the CIR process  $d X_t = \kappa ( \theta - X_t) dt + \epsilon \sqrt{X_t} d V_t$  where $\kappa = 4$, $\theta = .5$, $\epsilon  = .5$, and $X_0 = .5$.  We choose the initial condition $\Lambda_{\circ}(\lambda) = \delta(\lambda - \lambda_0)$.  The moment system (\ref{Eq: momentSDEone}) is solved using an Euler scheme with time-step of $.01$.  Figure \ref{fig:truncation} shows the rapid convergence of the moment system solution.  Even using as few as six moments ($K=5$), one can achieve a very accurate distribution for the limiting loss $L_t$.  Also, it is noteworthy that in the deterministic case of $\beta^S  = 0$, the moment method is faster for the same accuracy than the finite difference approach outlined in Section \ref{SS-finiteDiff}.

We report some salient features of the limiting loss distribution.  Figure \ref{fig: BetaCVaries} shows the effect of the feedback sensitivity parameter $\beta^C$ on the distribution of $L_t$.   As $\beta^C$ increases, the mean of the losses increases and a heavy tail develops on the right (indicating a greater probability of extreme losses).   Larger $\beta^C$ also causes a wider or more spread-out distribution, indicating a higher variance.  An important ramification is that greater connectivity between firms, modeled here through a nonlinear term, can increase the volatility of their ensemble behavior.  Similarly, increasing the systematic risk sensitivity parameter $\beta^S$ causes heavy tails on the right, see Figure \ref{fig: BetaSVaries}.  A hypothesis is that the initial losses can be sparked by the systematic risk factor (whose influence is determined by the parameter $\beta^S$) and then are later magnified by the contagion risk factor (determined by the parameter $\beta^C$).  This has significant economic implications for the spread of risk through macro credit markets.  The joint presence and interaction of systematic and contagion risk greatly magnifies the likelihood of extreme default events.

To shed more light on this issue, we calculate the Spearman correlation between $X_t$ and $L_t$.  Figure \ref{VaryingBetaC} shows that the correlation over a short time increases with the parameter $\beta^C$ while holding $\beta^S$ fixed.  This indicates that the larger the exposure of firms in the pool to contagion effects, the more susceptible the system is to shocks from the systematic risk. This relationship is demonstrated in Figure \ref{fig: BetaCVaries} by the loss distribution's heavy right tails for large $\beta^C$.  Our finding quantifies the central feature of the model: the complex interaction between systematic risk and contagion.  The system becomes increasingly vulnerable to stresses from the systematic risk as the contagion channel in the system becomes stronger.   %To the authors' knowledge, this is the first model in the literature which explores such interaction.

%Figure \ref{VaryingBetaS} shows that the correlation between systematic risk and loss initially increases with the parameter $\beta^S$ while holding $\beta^C$ fixed.  However, later the correlation decreases more rapidly for larger $\beta^S$.  We note that this does not necessarily imply that the effect of $X_t$ on $L_t$ is decreasing with $\beta^S$ since the magnitude of the terms with $X_t$ in the SPDE is also increasing.
%It is also noteworthy that for longer time periods, the correlation between the systematic risk process and the loss process decreases.

Another interesting observation is that the loss distribution can demonstrate non-monotonic behavior in time; for example, the distribution of losses may widen and then later tighten. Figure \ref{fig: ThreeD} shows the evolution of the distribution of $L_t$ over time $t$, demonstrating this non-monotonicity.  It is noteworthy that our numerical method yields the limiting loss distribution for all horizons simultaneously; this is useful for some applications, including the analysis of portfolio risk measures such as value at risk.  %Expected losses increase monotonically in the parameter $\beta^C$, though not necessarily in $\beta^S$.
%Finally, we find that the loss distribution shifts toward the origin as the volatility parameter $\sigma$ increases and a heavy tail becomes more evident as the reversion speed parameter $\alpha$ decreases.  %\textcolor{red}{I find it intriguing that the loss distribution seems to shift to the left as $\sigma$ increases (i.e., as the idiosyncratic risk increases).  I do not at the moment have any intuition behind this at the moment--do you?}

\begin{figure}[t]
\begin{center}
\includegraphics[scale=0.7]{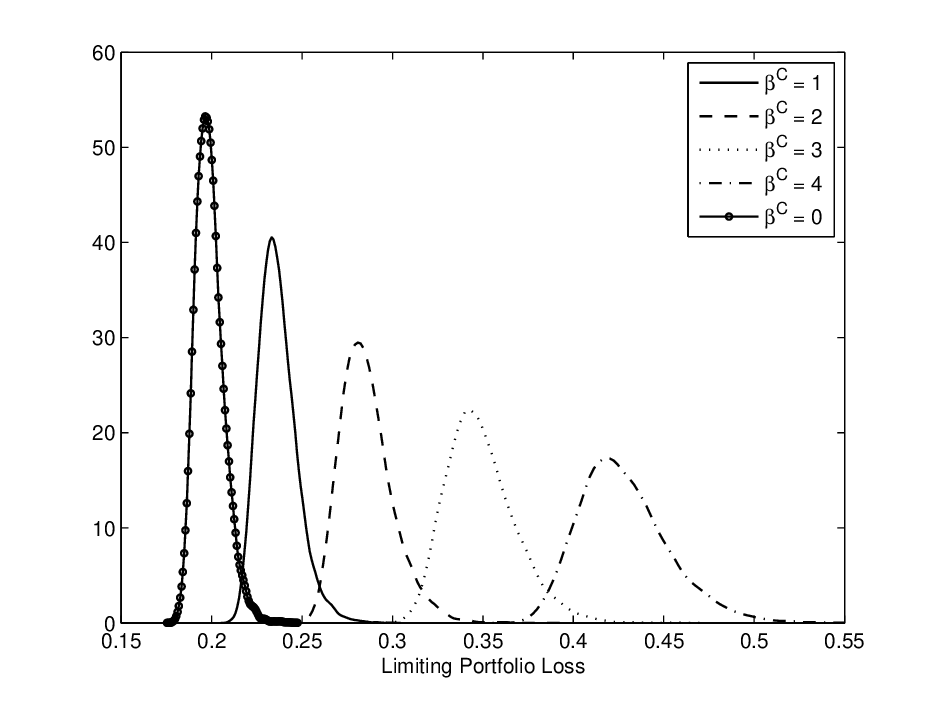}%{currentBetaC.eps}
\end{center}
\caption{\label{fig: BetaCVaries} Comparison of distribution of limiting portfolio loss $L_t$ for different values of the contagion sensitivity $\beta^C$ at $t = 1$.  The parameter case is $\sigma = .9$, $\alpha = 4$, $\bar{\lambda} = .2$, $\lambda_0 = .2$,  and $\beta^S = 2$.  $15,000$ Monte Carlo trials and $16$ moments ($K=15$) were used.  }
\end{figure}

\begin{figure}[t]
\begin{center}
\includegraphics[scale=0.7]{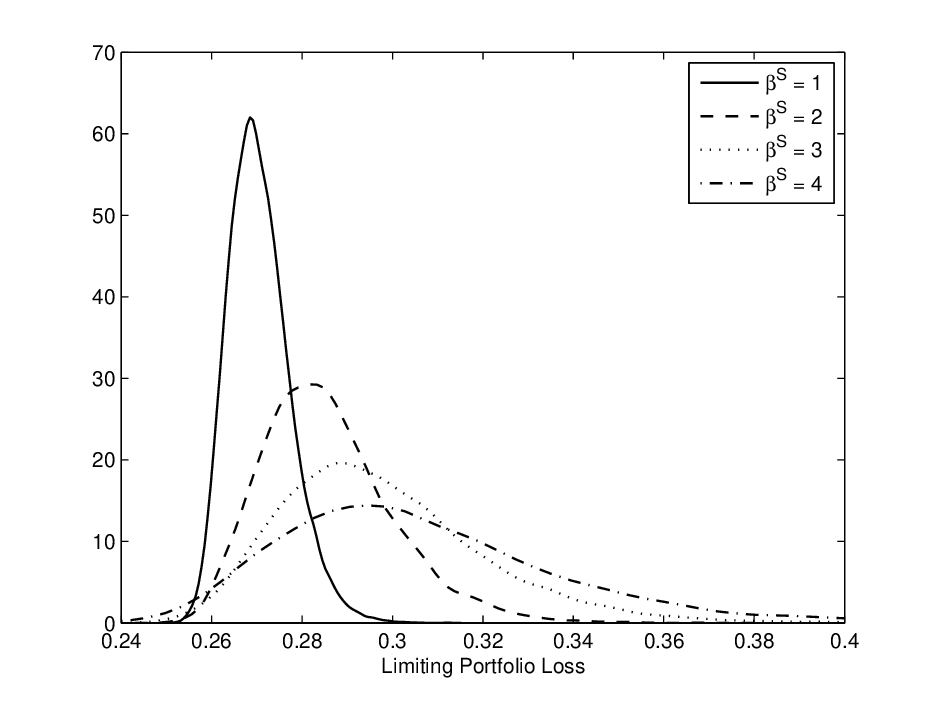}%{beta_SforSDEmoment.eps}
\end{center}
\caption{\label{fig: BetaSVaries} Comparison of distribution of limiting portfolio loss $L_t$ for different values of the systematic risk sensitivity $\beta^S$ at $t = 1$.  The parameter case is $\sigma = .9$, $\alpha = 4$, $\bar{\lambda} = .2$, $\lambda_0 = .2$,  and $\beta^C = 2$. $15000$ Monte Carlo trials and $16$ moments ($K=15$) were used.  }
\end{figure}

\begin{figure}[t]
\begin{center}
\includegraphics[scale=0.7]{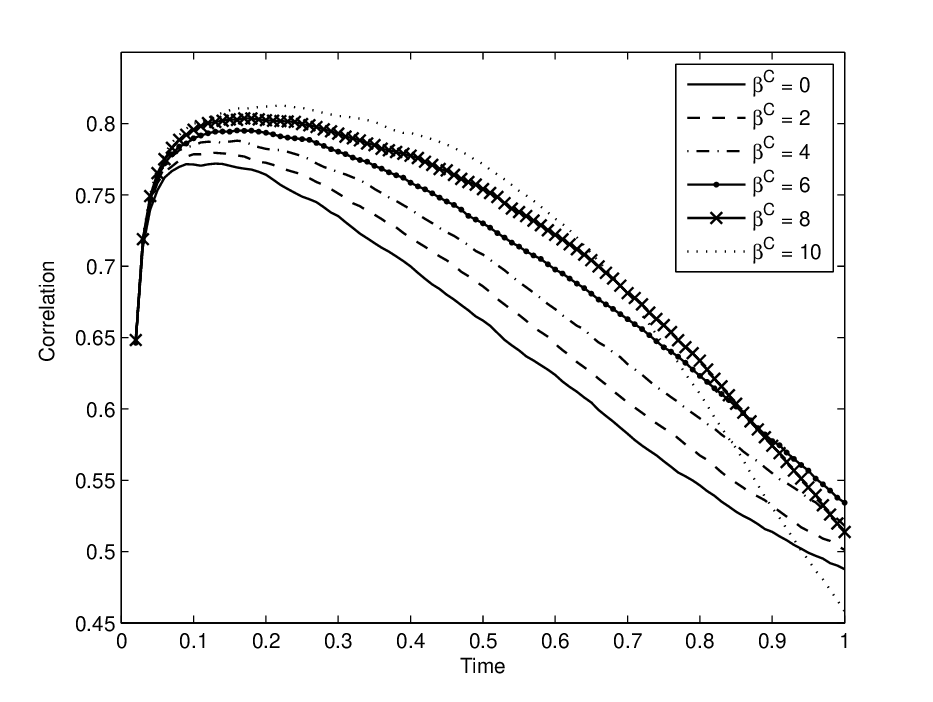}%{ThreeD.eps}
\end{center}
\caption{\label{VaryingBetaC} Evolution of the Spearman correlation between $X_t$ and $L_t$ over time $t$.  The parameters are $\sigma = .9$, $\alpha = 4$, $\bar{\lambda} = .1$, $\lambda_0 = .1$, and $\beta^S = 4$. $200,000$ Monte Carlo trials and 16 moments ($K=15$) were used.}
\end{figure}

%\begin{figure}[t]
%\begin{center}
%\includegraphics[scale=0.7]{CorrelationVaryingBetaS}%{ThreeD.eps}
%\end{center}
%\caption{\label{VaryingBetaS} Evolution of the Spearman correlation between $X_t$ and $L_t$ over time $t$.  The parameters are $\sigma = .9$, $\alpha = 4$, $\bar{\lambda} = .1$, $\lambda_0 = .1$, and $\beta^C = 2$. $200,000$ Monte Carlo trials and 16 moments ($K=15$) were used.}
%\end{figure}

\begin{figure}[t]
\begin{center}
\includegraphics[scale=0.8]{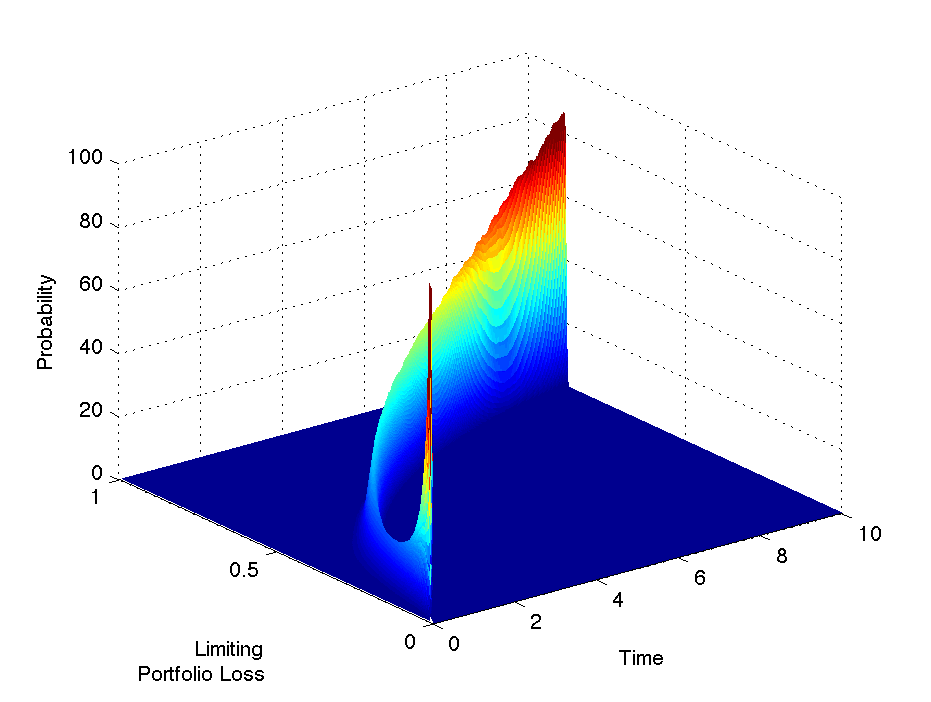}%{ThreeD.eps}
\end{center}
\caption{\label{fig: ThreeD} Evolution of distribution of limiting portfolio loss $L_t$ over time $t$.  The parameter case is $\sigma = .9$, $\alpha = 4$, $\bar{\lambda} = .2$, $\lambda_0 = .2$, $\beta^S = 4$,  and $\beta^C = 2$. $15,000$ Monte Carlo trials and 16 moments ($K=15$) were used.  }
\end{figure}

\subsection{Accuracy of Large-Portfolio Approximation}\label{SS:ConvergenceOfFiniteSystem}
We analyze the accuracy of the approximation (\ref{limiting-loss-approx}). To this end, we estimate the distribution of the loss $L^N_t$ in a pool of $N$ names by Monte Carlo simulation of the default times $\tau^\NN$. The simulation uses a time-scaling method,
%\footnote{In the special case of $\beta^C=0$, the default times are conditionally independent given a path of $X$. Then, we can generate a sample of $NL^N_t$ by first generating a sample of a firm's conditional probability $p_X(t)$ of default by $t$ given $X$, and then drawing a sample from a Binomial distribution with parameters $N$ and $p_X(t)$.}
which is based on a discretization of the intensities $\lambda^\NN$ on a time grid $\{t_j \}_{j=1}^J$ where $t_j = j \Delta$ for some $\Delta>0$.  Intensities are simulated on $[t_{j}, t_{j+1})$ using a truncated Euler scheme so that they remain nonnegative.  Firm defaults (as well as jumps in intensity due to defaults) occur at the grid points according to a discretized version of (\ref{E:tau}):
\begin{itemize}
\item Generate independent $\ee_n\sim\mbox{Exp}(1)$ for each firm $n = 1,2, \ldots, N$,
\item \textrm{For} $j = 0,1,2, \ldots, J-1 :$
           \begin{enumerate}
            \item $ \tilde{\lambda}_{t_{j+1}}^{N,n} = \max \big[ 0, \alpha (\bar{\lambda} - \lambda^{N,n}_{t_j}) \Delta + \sigma (\lambda_{t_j}^{N,n})^{1/2} \sqrt{\Delta} \mathcal{N}(0,1) + \beta^S \lambda_{t_j}^{N,n} (X_{t_{j+1}}-X_{t_j})\big]$,
\item \textrm{If} $\tau^{N,n} > t_j$ \textrm{and}  $\Delta( \sum_{i=1}^j \lambda_{t_i}^{N,n} + \tilde{\lambda}_{t_{j+1}}^{N,n}) \geq \ee_n$, \textrm{then} $\tau^{N,n} = t_{j+1}$,
\item $\lambda^{N,n}_{t_{j+1}} = \tilde{\lambda}^{N,n}_{t_{j+1}} + \beta^C \frac{1}{N} \sum_{n=1}^N \chi_{ \{\tau^{N,n} = t_{j+1} \} }$,
\end{enumerate}
\item \textrm{end}.
\end{itemize}

The increments $X_{t_{j+1}} - X_{t_j}$ can be simulated using an Euler scheme similar to the one described above or by an exact scheme.  For the results presented here, we again choose $X$ to be a CIR process with the same parameters as stated earlier. The CIR process $X$ is simulated using  an Euler scheme (truncated at zero as shown above).  We choose a time-step of $\Delta = .01$.

Convergence of the distribution of $L^N_t$ to that of the limiting loss $L_t$ tends be more rapid when $\beta^S$ is larger.  When the variance of the losses is very small (i.e., the limiting losses are close to a deterministic solution), the convergence rate is slower.  Convergence is most rapid in the tails of the loss distribution.  Figures \ref{fig: BetaSthree} and \ref{fig: BetaCone}  show the convergence of the distribution of $L^N_t$ to that of $L_t$ for two different parameter cases.  Convergence is relatively slow, but does indicate that the asymptotic solution is applicable for a portfolio consisting of several thousand names, a portfolio size not unusual in practice.  Figure \ref{fig: NinetyFivePercentVaR} compares the value at risk (VaR) of $L^N_t$ and $L_t$ at the 95 and 99 percent levels.  The limiting VaR is surprisingly accurate even for moderately sized portfolios or several thousand firms. If $N$ is relatively small, then the VaR of $L_t$ tends to understate the VaR of $L^N_t$. This is because fluctuations of $L^N$ due to the idiosyncratic noise terms in the intensity processes have not completely averaged out for small $N$.

%This indicates the usefulness of the approximation for risk management purposes, which focuses on the tail of the loss distribution.

\begin{figure}[t]
\begin{center}
\includegraphics[scale=0.7]{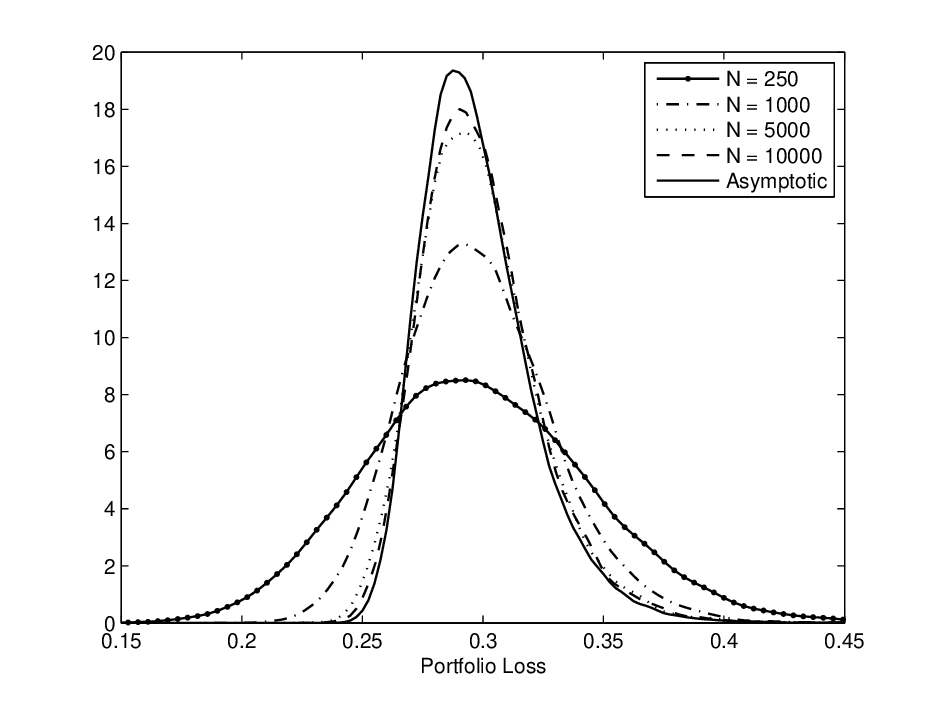}%{BetaSthree.eps}
\end{center}
\caption{\label{fig: BetaSthree} Comparison of distributions of limiting portfolio loss $L_t$ (using $16$ moments) and portfolio loss of finite system $L^N_t$ for different $N$ at $t = 1$.  $25,000$ Monte Carlo trials were used for the finite system and $100,000$ for the asymptotic solution.  The parameter case is $\sigma = .9$, $\alpha = 4$, $\bar{\lambda} = .2$, $\lambda_0 = .2$,  $\beta^C = 2$, and $\beta^S = 3$.  }
\end{figure}

\begin{figure}[t]
\begin{center}
\includegraphics[scale=0.7]{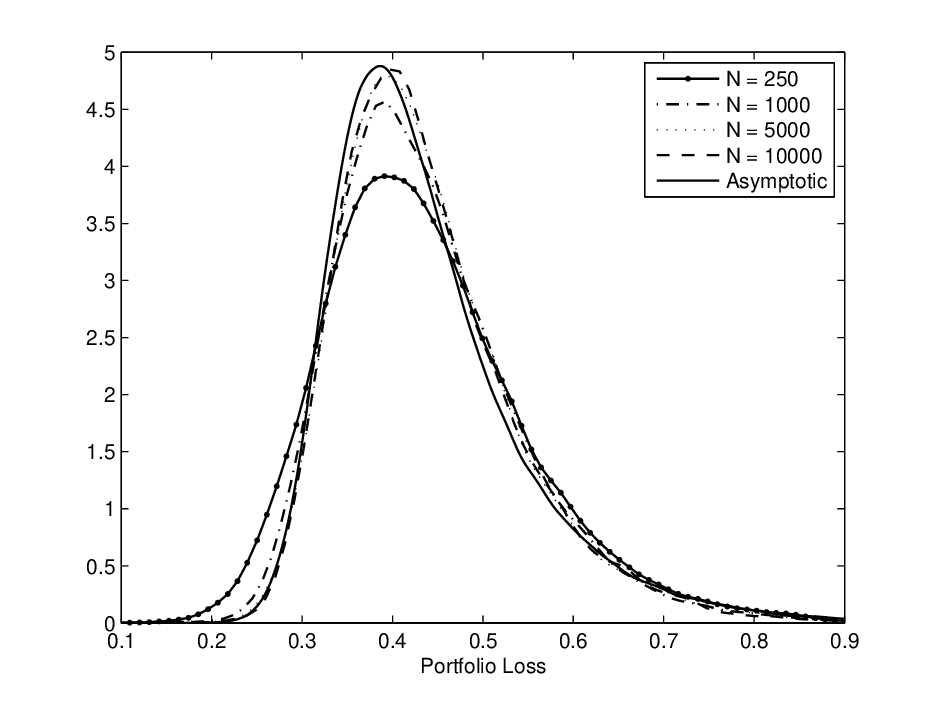}%{betaCfourBetaSeight.eps}
\end{center}
\caption{\label{fig: BetaCone} Comparison of distributions of limiting portfolio loss $L_t$ (using $16$ moments) and portfolio loss of finite system $L^N_t$ for different $N$ at $t = 1$.     $25,000$ Monte Carlo trials were used for the finite system and $100,000$ for the asymptotic solution.  The parameter case is $\sigma = .9$, $\alpha = 4$, $\bar{\lambda} = .2$, $\lambda_0 = .2$,  $\beta^C =4$, and $\beta^S = 8$.  }
\end{figure}

\begin{figure}[t]
\begin{center}
\includegraphics[scale=0.7]{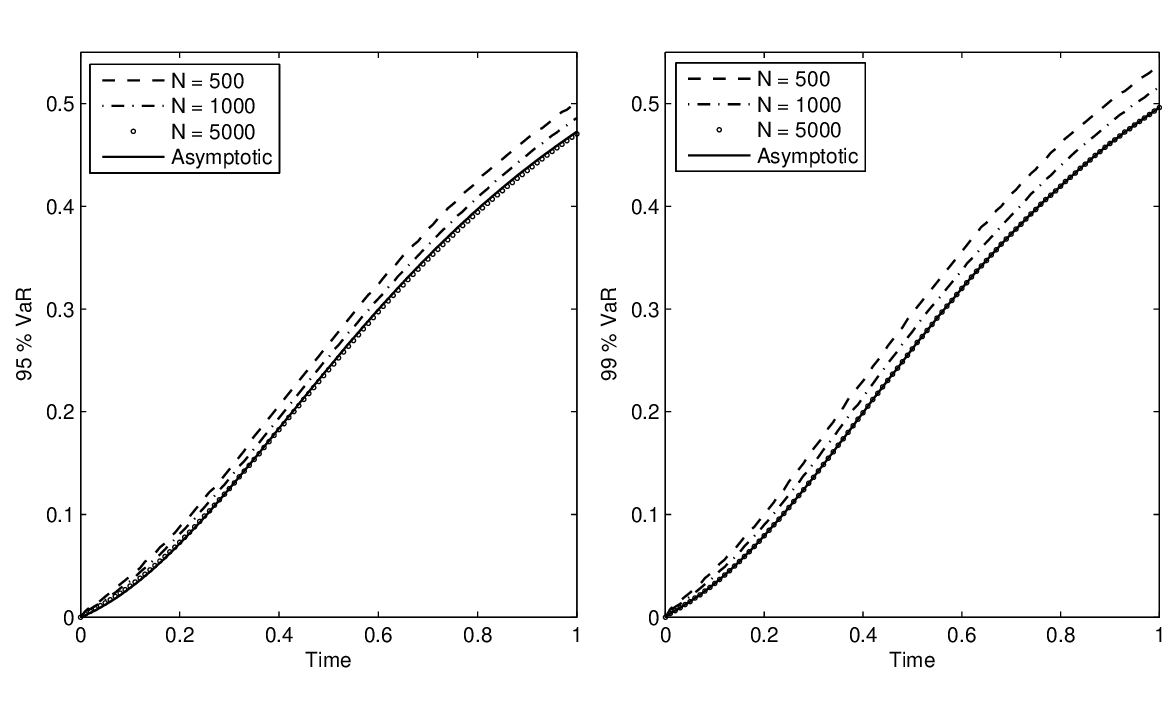}%{fig7}%{NinetyFivePercentVaR.eps}
\end{center}
\caption{\label{fig: NinetyFivePercentVaR} Comparison of $95$ and 99 percent VaR of distributions of limiting portfolio loss $L_t$ (using $16$ moments) and portfolio loss of finite system $L^N_t$ for different $N$ and several horizons $t$. The parameter case is $\sigma = .9$, $\alpha = 4$, $\bar{\lambda} = .2$, $\lambda_0 = .2$, $\beta^S = 2$,  and $\beta^C =4$.  $50,000$ Monte Carlo samples are used for the finite system and $250,000$ for the asymptotic solution.}
\end{figure}

%\begin{figure}[t]
%\begin{center}
%\includegraphics[scale=0.7]{fig8}%{NinetyNinePercentVaR.eps}
%\end{center}
%\caption{\label{fig: NinetyNinePercentVaR} Comparison of $99$ percent VaR of distributions of limiting portfolio loss $L_t$ (using $16$ moments) and portfolio loss of finite system $L^N_t$ for different $N$ and several horizons $t$. The parameter case is $\sigma = .9$, $\alpha = 4$, $\bar{\lambda} = .2$, $\lambda_0 = .2$, $\beta^S = 2$,  and $\beta^C =4$.  $50,000$ Monte Carlo samples are used for the finite system and $250,000$ for the asymptotic solution.}
%\end{figure}

We evaluate the computational efficiency of the approximation.  The parameter case is $\sigma = .9, \alpha = 4, \bar{\lambda} = .2, \beta^C = 2, \beta^S = 3, \lambda_0 = .2$, and $t= 1$.  A discrete time-step of $.01$ is used.  $1000$ Monte Carlo samples are produced. The table below indicates computation times for $L^N_t$, for each of several $N$. The computation time for $L_t$ using $K=200$ is 6.25 seconds.  (Here and below, the computation times are based on a Matlab implementation on a computer running Mac OS X with a $2.7$ GHz dual-core processor.) In general, the computational effort will be of an order $\frac{N}{K}$ greater for the generation of $L^N_t$, where $N$ is the number of firms in the pool and $K$ is the truncation level of the moment method.

\begin{table}[h!]
\begin{center}
\begin{tabular}{| c | c | }
  \hline
\textrm{Portfolio Size $N$}  &  \textrm{Computation time} \\ \hline
%\textrm{Method of Moments} with $K = 200$ &  6.25 \textrm{seconds} \\ \hline
  $500$ & 13.58 \textrm{seconds} \\ \hline
  $1000$ & 22.12 \textrm{seconds}  \\ \hline
  $5000$ & 236.24 \textrm{seconds}  \\ \hline
  $10000$ & 441.80 \textrm{seconds} \\ \hline
  $25000$ & 1048.80 \textrm{seconds} \\
  \hline
\end{tabular}
\end{center}
\end{table}

\subsection{Comparison of Method of Moments and Explicit Finite Difference}
An alternate approach to the method of moments is a direct finite difference of the SPDE.  An  implicit method such as Crank-Nicholson cannot be used since the term $\beta^S \sigma_0(X_t) (\lambda \nu)_{\lambda} d V_t$ must be non-anticipating.  We therefore use explicit finite difference, which has the disadvantages of only first-order accuracy in time and conditional stability.

We provide the explicit finite difference scheme in the case of the general diffusion $d X_t = b_0(X_t) dt + \sigma_0(X_t) dV_t$.  The time-step is denoted $\Delta$ and the mesh-size by $\delta$.  Let $\upsilon_{i,j} = \upsilon (i \Delta, j \delta), \lambda_j = j \delta, X_i = X_{i \Delta}$, and $\Delta V_i = V_{i \Delta} -V_{(i-1) \Delta}$, for $i =0,\ldots,N$ and $j = 0, \ldots, J$. Then, the explicit finite difference scheme is
\begin{align*}
\upsilon_{i,j} &= \Delta \big{[} \frac{\mathcal{I}_{i-1}}{2 \delta} - \frac{\sigma^2}{2 \delta} + \frac{\sigma^2 \lambda_j}{2 \delta^2} + (\beta^S \sigma_0(X_{i-1}))^2 \frac{\lambda_j^2}{2 \delta^2} + \frac{\alpha (\bar{\lambda} - \lambda_j)}{2 \delta} + \beta^S \frac{ \lambda_j}{2 \delta} (b_0(X_{i-1})+ \sigma_0(X_{i-1}) \frac{\Delta V_i}{ \Delta} ) \notag \\
&- (\beta^S \sigma_0(X_{i-1}))^2 \frac{\lambda_j}{ \delta} \big{]} \upsilon_{i-1, j-1} \notag \\
&+ \Delta \big{[} \frac{1}{\Delta} + \alpha - \frac{\sigma^2 \lambda_j}{ \delta^2} - (\beta^S \sigma_0(X_{i-1}))^2 \frac{\lambda_j^2}{ \delta^2} - \lambda_j - \beta^S (b_0(X_{i-1}) + \sigma_0(X_{i-1}) \frac{\Delta V_i}{\Delta}) + (\beta^S \sigma_0(X_{i-1}))^2 \big{]} \upsilon_{i-1,j} \notag \\
& +\Delta \big{[} - \frac{\mathcal{I}_{i-1}}{2 \delta} + \frac{\sigma^2}{2 \delta} + \sigma^2 \frac{\lambda_j}{2 \delta^2} + (\beta^S \sigma_0(X_{i-1}))^2 \frac{\lambda_j^2}{2 \delta^2} - \frac{\alpha( \bar{\lambda}- \lambda_j)}{2 \delta} - \beta^S \frac{\lambda_j}{2 \delta} ( b_0(X_{i-1}) +\sigma_0(X_{i-1})  \frac{\Delta V_i}{\Delta} )\notag \\
&+ (\beta^S \sigma_0(X_{i-1}))^2 \frac{\lambda_j}{\delta} \big{]} \upsilon_{i-1, j+1},
\end{align*}
with boundary condition $\upsilon_{i,0} = \upsilon_{i,J} =  0$ and $\mathcal{I}_{i-1}= \sum_{j=1}^J \delta \frac{\upsilon(i-1,j) +\upsilon(i-1,j-1)}{2} $.  For $\sigma_0 = 1$, the criterion for conditional stability for a deterministic diffusion PDE with constant coefficients leads us to propose the approximate criterion $\Delta  \lesssim  \frac{ \delta^2}{(\beta^S \lambda_{max})^2} $ for stability of the above numerical scheme, where $\lambda_{max} = J \delta$.   Numerical studies confirm that this condition for stability is a good approximation.  Note that the time-step must become very small as the effect of systematic risk (i.e., the stochastic terms in the SPDE proportional to the parameter $\beta^C$) increases.  For a general diffusion coefficient $\sigma_0(x)$, we expect instability to generally increase if $\sigma_0(X_t) > 1$ with high probability  (and to decrease if $\sigma_0(X_t) < 1$ with high probability).

Accuracy of explicit finite difference and the method of moments is comparable since we also use a first-order accurate scheme, the Euler method, for the SDE moment system.  However, the great advantage of the SDE moment system is its extremely low computational cost in comparison with explicit finite difference of the SPDE.  If one chooses a mesh with $J$ points for a finite difference scheme, then the finite difference scheme has an order of complexity of at least $J$ coupled SDEs whereas the method of moments can achieve highly accurate results with as little as half a dozen SDEs.  As an example comparison, if the finite difference scheme has a mesh ranging from $0$ to $10$ with a mesh-size of $\delta = .1$, the finite difference scheme has at least the order of complexity of $100$ coupled SDEs.  Furthermore, the explicit finite difference scheme is only conditionally stable.  This means that even if one is satisfied with a time error of $\mathcal{O}(\bar{\Delta})$, one may have to choose a much smaller time-step $\Delta$ to avoid instability.   For instance, if $\delta = .1$, $\beta^S = 5$, and $\lambda_{max} = 10$, our approximate criterion indicates $\Delta$ must be less than $4 \times 10^{-6}$.  Given a desired accuracy in time of $\mathcal{O}(\bar{\Delta})$, we estimate the ratio of the computational cost of the explicit finite difference to that of the method of moments for the case of $\sigma_0 = 1$ to be
\begin{eqnarray}
\frac{\textrm{Cost of Explicit FD}}{\textrm{Cost of Method of Moments}} \approx \frac{J}{K} \min (\bar\Delta (\beta^S \lambda_{\textrm{max}})^2 / \delta^2 , 1).
\end{eqnarray}

As a concrete example, we report computational times for computing the loss at $t = .5$ for the case of $\sigma = 1, \alpha = 4, \bar{\lambda} = 1, \beta^C = 1.5, \beta^S = 2$, and $\lambda_0 = 2$.  The systematic risk $X_t$ is just a Brownian motion (i.e., $b_0 = 0$ and $\sigma_0 = 1$); we can then compare the observed numerical instability with our approximate criterion for stability.  The finite difference method uses a mesh-size of $\delta = .1$ and $\lambda_{max} = 10$ while the method of moments is truncated at level $K = 100$.  This parameter case was chosen to demonstrate the instability of the explicit finite difference scheme when $\beta^S$ becomes reasonably large and the mesh includes large values.  Computational times are reported for $1,000$ Monte Carlo trials.

\begin{table}[h!]
%\caption{Method of Moments}
\begin{center}
\begin{tabular}{| c | c | c| }
  \hline
\textrm{Time Step} & \textrm{Method of Moments} & \textrm{Explicit Finite Difference} \\ \hline
$10^{-2}$ & 2.4890 \textrm{seconds} & \textrm{Unstable} \\ \hline
$10^{-3}$ & 25.7241 \textrm{seconds} & \textrm{Unstable} \\ \hline
$10^{-4}$ & 254.6494 \textrm{seconds} & \textrm{Unstable} \\ \hline
$10^{-5}$ & 2561.9614 \textrm{seconds} &  8512.3172 \textrm{seconds} \\ \hline
\end{tabular}
\end{center}
\end{table}

The approximate criterion for stability gives $\Delta \leq 1.1111 \times 10^{-5}$, which matches well with the numerical results.  In this example, we used a rather large number of moments ($100$, to match the number of mesh points in the finite difference scheme).  As we remarked earlier, the computational advantage of the method of moments over finite difference increases substantially if we take a small number of moments since highly accurate results are still achievable even using only a few moments.

\section{Extensions}\label{S:Extensions}
\subsection{Extending the model}\label{SS:StochasticLosses}
We can extend the system (\ref{E:main}) to the case of more general coefficient functions for the intensity as well as stochastic position losses.
Fix $N\in \N$ and $n\in\{1,2,\ldots, N\}$. Let $\{\ell^{\NN}\}_{n=1}^{N}$ be a family of i.i.d. random variables with support $(0,1)$.  The variable $\ell^\NN$ represents  the loss rate at default of the $n$-th name in a pool of size $N$.
Consider the following system:
\begin{equation} \label{E:mainGeneral}
\begin{aligned}
d\lambda^\NN_t &= b(\lambda^\NN_t)dt + \sigma(\lambda^\NN_t)dW^n_t +  \beta^C_\NN \theta(X_{t})dL^N_t+ \beta^S_\NN \gamma(\lambda^\NN_t) dX_t \qquad t>0\\
\lambda^\NN_0 &= \lambda_{\circ, N,n}\\
dX_t &= b_{0}( X_t) dt + \sigma_{0}(X_t)dV_t \qquad t>0\\
X_0&= x_\circ \\
L^N_t &= \frac{1}{N}\sum_{n=1}^N \ell^\NN\chi_{\{\tau^{\NN}\leq t\}} \end{aligned}
\end{equation}
where the coefficient functions $b(\cdot)$, $\sigma(\cdot)$, $\theta(\cdot)$ and $\gamma(\cdot)$ satisfy suitable regularity conditions guaranteeing the existence of a unique nonnegative solution $\lambda_{t}^{\NN}$. % such as $\sigma(0)=\gamma(0)=0$.

Define the operators
\begin{equation*}\label{E:Operators2}
\begin{aligned} (\genL_1 f)(\hat \pp) &= \frac12 \sigma^{2}(\lambda)\frac{\partial^2 f}{\partial \lambda^2}(\hat \pp) +b(\lambda)\frac{\partial f}{\partial \lambda}(\hat \pp)-\lambda  f(\hat \pp)\\
(\genL_2 f)(\hat \pp) &= \beta^C \frac{\partial f}{\partial \lambda}(\hat \pp)\\
(\genL_3^{x} f)(\hat \pp) &= \beta^{S}\gamma(\lambda) b_{0}(x)\frac{\partial f}{\partial \lambda}(\hat \pp)+\frac{1}{2}(\beta^{S})^{2}\gamma^{2}(\lambda)\sigma_{0}^{2}(x)\frac{\partial^{2}f}{\partial \lambda^{2}}(\hat \pp)\\
(\genL_4^{x} f)(\hat \pp) &= \beta^{S}\gamma(\lambda)\sigma_{0}(x)\frac{\partial f}{\partial \lambda}(\hat \pp)\\
\QQ(\hat \pp)&=\lambda \ell.
 \end{aligned}
\end{equation*}
Then, following the arguments applied to the system (\ref{E:main}), we can show that the SPDE governing the limiting density takes the following form:
\begin{equation*} d\upsilon(t,\hat \pp) = \left\{\genL_1^*\upsilon(t,\hat \pp) +\genL_3^{*,X_{t}}\upsilon(t,\hat \pp) + \theta(X_{t})\left(\int_{\hat \pp'\in \hat \PP} \QQ(\hat \pp')\upsilon(t,\hat \pp')d\hat \pp'\right) \genL_2^* \upsilon(t,\hat \pp)\right\}dt + \genL_4^{*,X_{t}}\upsilon(t,\hat \pp)dV_t,\label{Eq:NonlinearSPDE1}
\end{equation*}
for $t>0$ and $\hat \pp\in \hat \PP$, where
 $\pp=(\beta^C,\beta^S,\ell)\in \PP\Def \R_+\times \R\times\R$ and $\hat \PP\Def \PP\times \R_+$.

Naturally, one expects that, for large $N$ and for
every $t\geq 0$, $L_{t}^{N}\approx
\ell (1-\int_{0}^{\infty}\upsilon(t,\pp,\lambda)d\lambda) $.

\subsection{Extending the Moment Method}\label{SS:GeneralizationsOfMOM}
The moment SDE system can be extended to the case of general coefficient functions as well as the non-homogeneous parameter case. By Theorem \ref{T:MainLLN0}, in the non-homogeneous case
the SPDE takes the form
\begin{eqnarray}
 d \upsilon(t,\hat \pp ) &=& \Big\{ \frac{1}{2} \sigma^2 (\lambda \upsilon(t,  \hat \pp))_{\lambda \lambda} + \alpha ((\lambda - \bar{\lambda}) \upsilon(t, \hat \pp))_{\lambda} - \lambda \upsilon(t, \hat \pp) - \beta^S b_0(X_t) (\lambda \upsilon(t, \hat \pp))_{\lambda} \notag \\
&+& \frac{1}{2} (\beta^S)^2 \sigma_0^2(X_t) (\lambda^2 \upsilon(t, \hat \pp))_{\lambda \lambda}
- \beta^C \int_{\lambda \in \mathbb{R}^{+}, \pp \in \PP} \lambda \upsilon(t, \hat \pp) d \lambda d \pp (\upsilon(t, \hat \pp))_{\lambda} \Big\} dt \notag\\
& -& \beta^S \sigma_0(X_t) (\lambda \upsilon(t, \hat \pp))_{\lambda} d V_t,  \notag \\
\upsilon(t,\lambda = 0, \pp) &=& \upsilon(t, \lambda = \infty, \pp) = 0, \notag \\
\upsilon(t = 0,\hat \pp) &=& h(\hat \pp). \notag
\label{SPDE}
\end{eqnarray}
The SDE moment system for the non-homogeneous case follows:
\begin{eqnarray*}
d u_k(t, \pp) &=& u_k(t, \pp) [-  \alpha  k+ \beta^S b_0(X_t) k +\frac{1}{2} (\beta^S)^2 \sigma_0^2(X_t) k(k-1)] dt - u_{k+1}(t, \pp) dt \\
&+& u_{k-1}(t, \pp) [ 0.5 \sigma^2 k (k-1) + \alpha \bar{\lambda} k + \beta^C k \int_{p \in P} u_1(t, \pp) d \pp ] dt  + \beta^S \sigma_0(X_t) k u_k (t, \pp) d V_t, \\
u_k(t = 0, \pp) &=& \int_0^{\infty} \lambda^k h(\hat \pp) d \lambda,
\end{eqnarray*}
where $u_k(t, \pp) = \int_{0}^{\infty} \lambda^k u_k(t, \hat \pp) d \lambda$.  The SDE moment system is coupled across the parameter space $\PP$.  For numerical implementation, the parameter space $\PP$ must be discretized.

For the case of intensity processes with general coefficients $b( \cdot )$, $\sigma( \cdot )$, and $\gamma(\cdot)$, we have an SPDE of the form stated in Section \ref{Eq:NonlinearSPDE1}.  A general moment method can be applied to this class of SPDEs, provided that we prescribe $b( \cdot )$, $\sigma( \cdot)$, and $\gamma(\cdot)$ such that the processes $\lambda^{N,n}$ stay positive almost surely.  Also, assume $b( \cdot )$, $\sigma( \cdot)$, and $\gamma(\cdot)$ are analytic on $\mathbb{R}^{+}$.  Then, the generalized moment $$u_{k, k_1, k_2, k_3, k_4, k_5, k_6}(t) = \int_0^{\infty} \lambda^k b(\lambda)^{k_1} \frac{\partial^{k_2} b}{\partial \lambda^{k_2}} \big{(} \sigma(\lambda)^{2} \big{)}^k_3 \frac{\partial^{k_4} \sigma}{\partial \lambda^{k_4}} \gamma(\lambda)^{k_5} \frac{\partial^{k_6} b}{\partial \lambda^{k_6}} \upsilon(t, \lambda) d \lambda$$ solves a moment system similar (albeit more complicated) to (\ref{Eq: momentSDEone}).

\section{Tightness and Identification of the Limit}\label{S:LimitIdentification}

We start by discussing relative compactness of the sequence $\{\mu^N\}_{N\in \N}$.

\begin{lemma}\label{L:MuMeasureTight} The sequence $\{\mu^N\}_{N\in \N}$ is relatively compact in $D_{E}[0,\infty)$.\end{lemma}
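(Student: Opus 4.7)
The plan is to apply Jakubowski's tightness criterion for $D_E[0,\infty)$. Since $\hat\PP=\R_+^{4}\times\R\times\R_+$ is a closed subset of $\R^{6}$, it is locally compact Hausdorff, hence $\hat\PP^{+}$ is compact metrizable and so $\Pspace(\hat\PP^{+})$ and therefore $E$ are compact Polish spaces. In particular the compact containment condition is automatic. By Jakubowski's theorem it thus suffices to check that, for every $g$ in the separating family $\{h\mapsto\la g,h\ra_{E}:g\in C^{\infty}_{c}(\hat\PP)\}$, the real-valued process $\la g,\mu^{N}_{\cdot}\ra_{E}$ is tight in $D_{\R}[0,\infty)$.

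Fix $g\in C^{\infty}_{c}(\hat\PP)$ and write $\la g,\mu^{N}_{t}\ra_{E}=\tfrac{1}{N}\sum_{n=1}^{N}g(\hat\pp^{\NN}_{t})\dfi^{\NN}_{t}$. I would apply It\^o's formula to $g(\hat\pp^{\NN}_{t})\dfi^{\NN}_{t}$, using the SDE \eqref{E:main} for $\lambda^{\NN}_{t}$ (with $dX_{t}=b_{0}(X_{t})dt+\sigma_{0}(X_{t})dV_{t}$), the Doob--Meyer decomposition of $\dfi^{\NN}_{t}$, and a Taylor expansion to absorb the $O(1/N)$ contagion jumps $\tfrac{1}{N}\beta^{C}_{\NN}$ caused by $dL^{N}_{t}$ into a negligible remainder. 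Averaging over $n$ yields the semimartingale decomposition
\begin{align*}
\la g,\mu^{N}_{t}\ra_{E} &= \la g,\mu^{N}_{0}\ra_{E}+\int_{0}^{t}\Bigl\{\la \genL_{1}g+\genL_{3}^{X_{s}}g,\mu^{N}_{s}\ra_{E}+\la\QQ,\mu^{N}_{s}\ra_{E}\la\genL_{2}g,\mu^{N}_{s}\ra_{E}\Bigr\}ds\\
&\quad +\int_{0}^{t}\la\genL_{4}^{X_{s}}g,\mu^{N}_{s}\ra_{E}dV_{s}+M^{N,g}_{t}+\eps^{N}_{t}(g),
\end{align*}
where $M^{N,g}$ is a local martingale built from the idiosyncratic $W^{n}$'s and the compensated default counting processes (orthogonal to $V$), and $\eps^{N}_{t}(g)\to 0$ uniformly on compacts.

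Equipped with this decomposition, I would verify Aldous's condition: for every bounded $\filt_{\cdot}$-stopping time $\tau\le T$ and every $\delta>0$,
\begin{equation*}
\BE\bigl[(\la g,\mu^{N}_{\tau+\delta}\ra_{E}-\la g,\mu^{N}_{\tau}\ra_{E})^{2}\bigr]\le C(g)\,\delta
\end{equation*}
uniformly in $N$ and $\tau$. This follows from $\mu^{N}_{t}(\hat\PP)\le 1$, Condition \ref{A:Bounded}, the boundedness of $g$ and its derivatives on $\mathrm{supp}(g)$, the It\^o isometry for the $dV_{s}$ integral, and a Burkholder--Davis--Gundy estimate applied to $M^{N,g}$, provided we dispose of a uniform-in-$N,n$ second moment bound $\sup_{N,n}\sup_{t\le T}\BE[(\lambda^{\NN}_{t})^{2}]<\infty$. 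The main obstacle is exactly this moment bound, since the systematic term $\beta^{S}_{\NN}\lambda^{\NN}_{t}dX_{t}$ is multiplicative in $\lambda^{\NN}_{t}$ and might a priori blow up. I would secure it by an It\^o computation on $(\lambda^{\NN}_{t})^{p}$, combining Condition \ref{A:Bounded} with moment estimates for $X$ (or, equivalently, via the Girsanov change of measure made available by Condition \ref{A:RegularityExogenous}, which removes the drift of $X$). Once this uniform moment bound is established, Aldous's condition holds for every $g\in C^{\infty}_{c}(\hat\PP)$, and Jakubowski's theorem, combined with the compactness of $E$, delivers the relative compactness of $\{\mu^{N}\}$ in $D_{E}[0,\infty)$.
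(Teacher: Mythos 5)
Your proposal is correct and matches the argument the paper relies on: the paper's own proof of Lemma \ref{L:MuMeasureTight} is a one-line deferral to Section 6 of \citeasnoun{GieseckeSpiliopoulosSowers2011}, which establishes tightness exactly along the route you describe, reducing, via a separating family $\{\la g,\cdot\ra_E : g\in C^\infty_c(\hat\PP)\}$ and the compactness of $E$, to Aldous's criterion for the scalar processes $\la g,\mu^N_\cdot\ra_E$, backed by uniform-in-$N$ moment bounds on the intensities. You correctly single out those moment bounds (cf.\ Lemma~3.4 of the companion paper, invoked later in Section \ref{S:LimitIdentification}) as the substantive ingredient, which is the right thing to flag given that $\beta^S$ no longer vanishes and $X$ is a general diffusion in the present setting.
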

\begin{proof}
The proof follows exactly as in Section 6 of \citeasnoun{GieseckeSpiliopoulosSowers2011}. We omit the details.
\end{proof}

Next, we want to use the martingale problem (see
\citeasnoun{MR88a:60130}) to identify the limit of $\mu^N$'s.

Let $\SSS$ be the collection of elements $\Phi$ in $B(\R\times\Pspace(\hat \PP))$ of the form
\begin{equation}\label{E:form} \Phi(x,\mu) = \varphi_{1}(x)\varphi_{2}\left(\la f_1,\mu\ra_E,\la f_2,\mu\ra_E\dots \la f_M,\mu\ra_E\right) \end{equation}
for some $M\in \N$, some $\varphi_{1}\in C^{\infty}(\R)$, some $\varphi_{2}\in C^\infty(\R^M)$ and some $\{f_m\}_{m=1}^M$.  Then $\SSS$ separates $\Pspace(\R \times \hat \PP)$ \citeasnoun{MR88a:60130}.  It thus suffices to show convergence of the martingale
problem for functions of the form \eqref{E:form}.

Let's fix $f\in C^\infty(\hat \PP)$ and understand exactly what happens to $\la f,\mu^N\ra_E$ when one of the firms defaults.  Suppose that the $n$-th firm defaults at time $t$ and that none of the other names defaults at time $t$ (defaults occur simultaneously
with probability zero).  Then
\begin{align*} \la f,\mu^N_t\ra_E &= \frac{1}{N}\sum_{\substack{1\le n'\le N \\ n'\not = n}}  f\left(\pp^{N,n'},\lambda^{N,n'}_t+ \frac{\beta^C_{N,n'}}{N}\right)\dfi^{N,n'}_t\\
\la f,\mu^N_{t-}\ra_E &= \frac{1}{N}\sum_{\substack{1\le n'\le N \\ n'\not = n}}  f\left(\pp^{N,n'},\lambda^{N,n'}_t\right)\dfi^{N,n'}_t +\frac1Nf\left(\pp^\NN,\lambda^\NN_t\right). \end{align*}
Note furthermore that the default at time $t$ means that $\int_{s=0}^t \lambda^\NN_sds=\ee_n$, so $\dfi^\NN_t=0$.
Hence
\begin{equation} \label{E:jjumps} \la f,\mu^N_t\ra_E-\la f,\mu^N_{t-}\ra_E = \jump^f_\NN(t)\end{equation}
where
\begin{equation*} \jump^f_\NN(t) \Def \frac{1}{N}\sum_{n'=1}^N \lb f\left(\pp^{N,n'},\lambda^{N,n'}_t+ \frac{\beta^C_{N,n'}}{N}\right)-f\left(\pp^{N,n'},\lambda^{N,n'}_t\right)\rb\dfi^{N,n'}_t-\frac1Nf\left(\pp^\NN,\lambda^\NN_t\right) \end{equation*}
for all $t\ge 0$, $N\in \N$ and $n\in \{1,2,\dots, N\}$.

For convenience, let's define
\begin{equation*}
\mathcal{G}(f)(x)\Def b_{0}(x)\frac{\partial f}{\partial x}+\frac{1}{2}\sigma_{0}^{2}(x)\frac{\partial^{2}f}{\partial x^{2}}
\end{equation*}
for all $f\in C^{2}(\R)$. This is the generator of the systematic risk.

We now identify the limiting martingale problem for $\mu^N$. For $\hat \pp=(\pp,\lambda)$ where $\pp=(\alpha,\bar \lambda,\sigma,\beta^C,\beta^S)\in \PP$ and $f\in C^\infty(\hat \PP)$, recall the definitions of the operators in (\ref{E:Operators1}).

For $\Phi\in \SSS$ of the form \eqref{E:form} we define the following operators.

\begin{equation}\label{E:limgen}
\begin{aligned}(\genA\Phi)(x,\mu) &\Def (\mathcal{G}(\varphi_{1}))(x)\varphi_{2}\left(\la f_1,\mu\ra_E,\la f_2,\mu\ra_E\dots \la f_M,\mu\ra_E\right)+\nonumber\\
&+\varphi_{1}(x)\sum_{m=1}^M \frac{\partial \varphi_{2}}{\partial x_m}\left(\la f_1,\mu\ra_E,\la f_2,\mu\ra_E\dots \la f_M,\mu\ra_E\right) \lb \la \genL_1f_m,\mu\ra_E + \la \genL_3^{x}f_m,\mu\ra_E + \la \QQ,\mu\ra_E \la \genL_2 f_m,\mu\ra_E \rb\\
&+\frac{\partial \varphi_{1}}{\partial x}(x)\sum_{m=1}^M \frac{\partial \varphi_{2}}{\partial x_m}\left(\la f_1,\mu\ra_E,\la f_2,\mu\ra_E\dots \la f_M,\mu\ra_E\right) \lb \la \sigma_{0}(x)\genL_4^{x}f_m,\mu\ra_E
\rb
 \end{aligned}
\end{equation}
and
\begin{eqnarray}\label{E:limgen2}
(\genB\Phi)(x,\mu) &\Def& \sigma_{0}(x)\frac{\partial \varphi_{1}}{\partial x}(x)\varphi_{2}\left(\la f_1,\mu\ra_E,\la f_2,\mu\ra_E\dots \la f_M,\mu\ra_E\right)\nonumber\\
& &+\varphi_{1}(x)\sum_{m=1}^M \frac{\partial \varphi_{2}}{\partial x_m}\left(\la f_1,\mu\ra_E,\la f_2,\mu\ra_E\dots \la f_M,\mu\ra_E\right) \lb \la \genL_4^{x}f_m,\mu\ra_E  \rb
\end{eqnarray}

Moreover, we define the following processes

\begin{equation}\label{E:limgen3}
\begin{aligned}(\hat{\genA}^{\NN}\Phi)(X_{t},\mu^{N}_{t}) &\Def \varphi_{1}(X_{t})\sum_{n=1}^N \lambda^\NN_t\left\{ \varphi_{2}\left(\la f_1,\mu^N_t\ra_E+\jump^{f_1}_\NN(t),\la f_2,\mu^N_t\ra_E+\jump^{f_2}_\NN(t)\dots \la f_M,\mu^N_t\ra_E+\jump^{f_M}_\NN(t)\right)\right.\\
&\left.\qquad  -\varphi_{2}\left(\la f_1,\mu^N_t\ra_E,\la f_2,\mu^N_t\ra_E\dots \la f_M,\mu^N_t\ra_E\right)\right\} \dfi^\NN_t\\
&-\varphi_{1}(X_t) \sum_{m=1}^M \frac{\partial \varphi_{2}}{\partial x_m}\left(\la
f_1,\mu^N_t\ra_E,\la f_2,\mu^N_t\ra_E\dots \la f_M,\mu^N_t\ra_E\right)  \lb \la \QQ,\mu^N_t\ra_E \la \genL_2 f_m,\mu^N_t\ra_E-\la  \iota f,\mu^N_t\ra_E\rb
 \end{aligned}
\end{equation}
and
\begin{equation}\label{E:limgen4}
\mart^{N}_t \Def  \mart^{N,W}_t + \mart^{N,J}_t
 \end{equation}
where
\begin{equation*}
\mart^{N,W}_t= \frac{1}{N}\sum_{m=1}^{M}\sum_{n=1}^N\int_{0}^{t}\left[\varphi_{1}(X_{r}) \frac{\partial \varphi_{2}}{\partial x_m}\left(\la f_1,\mu^N_r\ra_E,\la f_2,\mu^N_r\ra_E\dots \la f_M,\mu^N_r\ra_E\right) \left\{\sigma_{\NN}\sqrt{\lambda_{r}^{\NN}} \frac{\partial f_m}{\partial \lambda}(\hat \pp^{\NN}_r)\dfi^{N,n}_r\right\}\right] dW^{n}_{r}
\end{equation*}
is the Brownian martingale and $\mart^{N,J}$ is the martingale
\begin{eqnarray*}
\mart^{N,J}_t&=& \sum_{n=1}^{N}\int_{0}^{t}\varphi_{1}(X_{r}) \left\{ \varphi_{2}\left(\la f_1,\mu^N_r\ra_E+\jump^{f_1}_\NN(r),\la f_2,\mu^N_r\ra_E+\jump^{f_2}_\NN(r)\dots \la f_M,\mu^N_r\ra_E+\jump^{f_M}_\NN(r)\right)\right.\\
& &\left.\qquad  -\varphi_{2}\left(\la f_1,\mu^N_r\ra_E,\la f_2,\mu^N_r\ra_E\dots \la f_M,\mu^N_r\ra_E\right)\right\} \left(-d\dfi^\NN_r-\lambda^\NN_r\dfi^\NN_r\right).\end{eqnarray*}
With these definitions we have the following lemma.
%\begin{lemma}\label{L:wconv} For any $\Phi\in \SSS$ and $0\le r_1\le r_2\dots r_J=s<t<T$ and $\{\psi_j\}_{j=1}^J\subset B(\R\times E)$, we have that
%\begin{equation*} \lim_{N\to \infty}\BE\left[\lb \Phi(X_{t},\mu^N_t)-\Phi(X_{s},\mu^N_s)-\int_{r=s}^t (\genA\Phi)(X_{r},\mu^N_r)dr-\int_{r=s}^t %(\genB\Phi)(X_{r},\mu^N_r)dV_{r}\rb \prod_{j=1}^J \psi_j(X_{r_j},\mu^N_{r_j})\right]=0. \end{equation*}
%\end{lemma}
\begin{lemma}\label{L:wconv} For any $\Phi\in \SSS$  and any $t>0$ we have that
\begin{equation*}
 \Phi(X_{t},\mu^N_t)=\Phi(X_{0},\mu^N_0)+\int_{0}^t (\genA\Phi)(X_{r},\mu^N_r)dr+\int_{0}^t (\genB\Phi)(X_{r},\mu^N_r)dV_{r}+
 \int_{0}^t (\hat{\genA}^{\NN}\Phi)(X_{r},\mu^{N}_{r})dr+ \mart^{N}_t.\end{equation*}
Moreover, for any $T>0$, the following limits hold
\begin{equation*}
\lim_{N\to \infty}\BE\left[\int_{0}^t \left|(\hat{\genA}^{\NN}\Phi)(X_{r},\mu^{N}_{r})\right|dr\right]=0 \textrm{ and } \lim_{N\to \infty}\sup_{0\leq t\leq T}\BE\left[\mart^{N}_t\right]^{2}=0.
\end{equation*}
\end{lemma}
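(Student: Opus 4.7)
The strategy is to apply It\^{o}'s formula for semi-martingales with jumps to $\Phi(X_t,\mu^N_t)=\varphi_1(X_t)\varphi_2\bigl(\la f_1,\mu^N_t\ra_E,\ldots,\la f_M,\mu^N_t\ra_E\bigr)$, identify each term of the resulting decomposition with the four pieces in the stated identity, and then bound the remainder $\hat\genA^{\NN}\Phi$ in $L^1$ and the martingale $\mart^N_t$ in $L^2$ using the uniform $O(1/N)$ size of the default-induced jumps together with the independence of the idiosyncratic $W^n$'s.

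The first task is to compute $d\la f_m,\mu^N_t\ra_E=N^{-1}\sum_n d[f_m(\hat\pp^{\NN}_t)\dfi^{\NN}_t]$. Between defaults, $\dfi^{\NN}$ is constant on the surviving set and $\lambda^{\NN}$ follows the continuous part of (\ref{E:main}); It\^{o}'s formula combined with $dX_t=b_0(X_t)\,dt+\sigma_0(X_t)\,dV_t$ yields the continuous drift $\la\genL_1 f_m+\iota f_m+\genL_3^{X_r}f_m,\mu^N_r\ra_E\,dr$ (the extra $\iota f_m$ cancels the $-\lambda f$ built into $\genL_1$), an idiosyncratic Brownian increment from each $W^n$, and the systematic integral $\la\genL_4^{X_r}f_m,\mu^N_r\ra_E\,dV_r$; at each default $\tau^{\NN}$ there is a pure jump of size $\jump^{f_m}_{\NN}(\tau^{\NN})$. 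Applying It\^{o} once more to $\Phi$, the $X$-drift gives $\mathcal{G}(\varphi_1)\varphi_2\,dr$, the $\varphi_2$-drift gives $\varphi_1\sum_m\partial_m\varphi_2\la\genL_1 f_m+\iota f_m+\genL_3^{X_r}f_m,\mu^N_r\ra_E\,dr$, and the cross-variation $d\langle X,\la f_m,\mu^N\ra_E\rangle_r^c=\sigma_0(X_r)\la\genL_4^{X_r}f_m,\mu^N_r\ra_E\,dr$ produces $\varphi_1'(X_r)\sum_m\partial_m\varphi_2\la\sigma_0(X_r)\genL_4^{X_r}f_m,\mu^N_r\ra_E\,dr$. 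The $dV_r$ pieces combine exactly to $\genB\Phi\,dV_r$ and the $dW^n_r$ pieces to $\mart^{N,W}_t$.

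Next, each default produces a pure jump $\varphi_1(X_{\tau^{\NN}})[\varphi_2(\cdots+\jump)-\varphi_2(\cdots)]$ with $(\filt_t)$-compensator $\varphi_1(X_r)\sum_n\lambda^{\NN}_r\dfi^{\NN}_r[\varphi_2(\cdots+\jump^{f_\cdot}_{\NN}(r))-\varphi_2(\cdots)]\,dr$ and compensated remainder $\mart^{N,J}_t$. Adding and subtracting $\varphi_1\sum_m\partial_m\varphi_2[\la\QQ,\mu^N_r\ra_E\la\genL_2 f_m,\mu^N_r\ra_E-\la\iota f_m,\mu^N_r\ra_E]$ inside the total drift makes the $+\la\iota f_m\ra$ surplus in the continuous $\varphi_2$-drift and the $-\la\iota f_m\ra$ in the added expression collapse, producing the $\la\QQ,\mu^N\ra_E\la\genL_2 f_m,\mu^N\ra_E$ term of $\genA\Phi$; the leftover (compensated jump drift minus the added expression) is by definition $\hat\genA^{\NN}\Phi$. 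This delivers the claimed identity with $\mart^N_t=\mart^{N,W}_t+\mart^{N,J}_t$.

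For the convergence statements the main obstacle is the $L^1$ vanishing of $\hat\genA^{\NN}\Phi$. Taylor-expand $\varphi_2(\cdots+\jump)-\varphi_2(\cdots)=\sum_m\partial_m\varphi_2\,\jump^{f_m}_{\NN}+R_{\NN}$ with $|R_{\NN}|\le C\|D^2\varphi_2\|_\infty|\jump|^2$. By Condition \ref{A:Bounded} and $f_m\in C^\infty_b(\hat\PP)$, the intensity jumps $\beta^C_{\NN}/N$ and the single $f_m/N$ drop from the defaulter give $|\jump^{f_m}_{\NN}|\le C/N$, so $|R_{\NN}|=O(1/N^2)$; summing weighted by $\lambda^{\NN}\dfi^{\NN}$ yields $O(1/N)$ after using the uniform moment bound $\sup_{N,r\le T}\BE\la\iota,\mu^N_r\ra_E<\infty$. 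The first-order Taylor sum, after rearranging the double sum and absorbing the $n'=n$ diagonal into the $O(1/N)$ remainder, equals $\sum_m\partial_m\varphi_2[\la\QQ,\mu^N\ra_E\la\genL_2 f_m,\mu^N\ra_E-\la\iota f_m,\mu^N\ra_E]$, cancelling the subtraction in $\hat\genA^{\NN}$ and leaving only $O(1/N)$ pieces. For $\mart^N$, It\^{o}'s isometry and independence of $\{W^n\}$ give $\BE[(\mart^{N,W}_t)^2]=O(1/N)$ via the $1/N^2$ prefactor and $\sum_n\BE[\lambda^{\NN}_r\dfi^{\NN}_r]\le CN$, while $\mart^{N,J}$ has at most $N$ jumps of squared size $O(1/N^2)$, so $\BE[(\mart^{N,J}_t)^2]=O(1/N)$ uniformly in $t\in[0,T]$.
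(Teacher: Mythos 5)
Your proposal is correct and follows essentially the same route as the paper's proof. The paper writes the semimartingale decomposition of $\Phi(X_t,\mu^N_t)$ directly as $\Phi(X_0,\mu^N_0)+\sum_{k=1}^{4}\int A^{N,k}_r\,dr+\sum_{k=1}^{2}\int B^{N,k}_r\,dV_r+\mart_t$ and then compares the jump-compensator drift $A^{N,3}$ with its first-order surrogate $\tilde A^{N,3}$; your derivation unpacks this by applying It\^o to $\la f_m,\mu^N\ra_E$ first and then to $\Phi$, but the resulting terms, the Taylor comparison $|\jump^f_\NN-\tfrac1N\tilde\jump^f_\NN|=O(1/N^2)$, the cancellation $\genL^a f=\genL_1 f+\iota f$, and the moment bound of Lemma 3.4 of \citeasnoun{GieseckeSpiliopoulosSowers2011} used for both the $L^1$ estimate on $\hat\genA^{\NN}\Phi$ and the $L^2$ estimate on $\mart^N=\mart^{N,W}+\mart^{N,J}$ are exactly the ingredients of the paper's argument. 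In fact your treatment of the jump martingale $\mart^{N,J}$ via its predictable quadratic variation is slightly more explicit than the paper's one-line reference to Condition \ref{A:Bounded}.
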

\begin{proof}
For $\hat \pp=(\pp,\lambda)$ where $\pp=(\alpha,\bar \lambda,\sigma,\beta^C,\beta^S)\in \PP$, define
\begin{align*} (\genL^a f)(\hat \pp) &= \frac12 \sigma^2\lambda\frac{\partial^2 f}{\partial \lambda^2}(\hat \pp) - \alpha(\lambda-\bar \lambda)\frac{\partial f}{\partial \lambda}(\hat \pp)
\end{align*}
Then
$ \genL^a$ is the generator of the idiosyncratic part of the intensity.

We start by writing that
\begin{equation*}\Phi(X_{t},\mu^N_t)=\Phi(X_{0},\mu^N_0)+\sum_{k=1}^{4}\int_{r=0}^t A^{N,k}_r dr+ \sum_{k=1}^{2}\int_{r=0}^t  B^{N,k}_r dV_{r} +\mart_t,\end{equation*}
where $\mart$ is a martingale and
\begin{align*}
 A^{N,1}_t &= \mathcal{G}(\varphi_{1})(X_{t})\varphi_{2}\left(\la f_1,\mu^N_t\ra_E,\la f_2,\mu^N_t\ra_E\dots \la f_M,\mu^N_t\ra_E\right)\\
 A^{N,2}_t &= \varphi_{1}(X_{t})\sum_{m=1}^M \frac{\partial \varphi_{2}}{\partial x_m}\left(\la f_1,\mu^N_t\ra_E,\la f_2,\mu^N_t\ra_E\dots \la f_M,\mu^N_t\ra_E\right)\\
&\qquad \times  \frac{1}{N}\sum_{n=1}^N \lb (\genL^a f_m)(\hat \pp^\NN_t)+ ( \genL_3^{X_t}f_m)(\hat \pp^\NN_t)\rb \dfi^\NN_t\\
&= \varphi_{1}(X_{t})\sum_{m=1}^M \frac{\partial \varphi_{2}}{\partial x_m}\left(\la f_1,\mu^N_t\ra_E,\la f_2,\mu^N_t\ra_E\dots \la f_M,\mu^N_t\ra_E\right)\lb \la   \genL^a f_m,\mu^N_t\ra_E +\la \genL_{3}^{X_t}f_m,\mu^N_t\ra_E\rb  \\
A^{N,3}_t &= \varphi_{1}(X_{t})\sum_{n=1}^N \lambda^\NN_t\lb \varphi_{2}\left(\la f_1,\mu^N_t\ra_E+\jump^{f_1}_\NN(t),\la f_2,\mu^N_t\ra_E+\jump^{f_2}_\NN(t)\dots \la f_M,\mu^N_t\ra_E+\jump^{f_M}_\NN(t)\right)\right.\\
&\qquad \left. -\varphi_{2}\left(\la f_1,\mu^N_t\ra_E,\la f_2,\mu^N_t\ra_E\dots \la f_M,\mu^N_t\ra_E\right)\rb \dfi^\NN_t.\\
A^{N,4}_t &=\frac{\partial \varphi_{1}}{\partial x}(X_{t})\sum_{m=1}^M \frac{\partial \varphi_{2}}{\partial x_m}\left(\la f_1,\mu^N_t\ra_E,\la f_2,\mu^N_t\ra_E\dots \la f_M,\mu^N_t\ra_E\right)\lb \la \sigma_{0}(X_{t}) \genL_{4}^{X_t}f_m,\mu^N_t\ra_E\rb.\\
B^{N,1}_t &= \sigma_{0}(X_{t})\frac{\partial \varphi_{1}}{\partial x}(X_{t})\varphi_{2}\left(\la f_1,\mu^N_t\ra_E,\la f_2,\mu^N_t\ra_E\dots \la f_M,\mu^N_t\ra_E\right)\\
B^{N,2}_t &= \varphi_{1}(X_{t})\sum_{m=1}^M \frac{\partial \varphi_{2}}{\partial x_m}\left(\la f_1,\mu^N_t\ra_E,\la f_2,\mu^N_t\ra_E\dots \la f_M,\mu^N_t\ra_E\right)\lb \la \genL_{4}^{X_t}f_m,\mu^N_t\ra_E\rb.
 \end{align*}

To proceed, let's simplify $\jump^f_\NN$.  For each $f\in C^\infty(\hat \PP)$, $t\ge 0$, $N\in \N$ and $n\in \{1,2,\dots, N\}$, define
\begin{equation}\label{E:effcc0} \tilde \jump^f_\NN(t) \Def \frac{1}{N}\sum_{m=1}^N \frac{\partial f}{\partial \lambda}(\hat \pp^{N,m}_t)  \beta^C_{N,m}  \dfi^{N,m}_t-f\left(\pp^\NN,\lambda^\NN_t\right) =\la \genL_2 f,\mu^N_t\ra_E-f(\hat \pp^N_t). \end{equation}
Then
\begin{equation*} \left|\jump^f_\NN(t)-\frac1N\tilde \jump^f_\NN(t)\right|\le \frac{\KK_{\ref{A:Bounded}}^2}{N^2}\left\|\frac{\partial^2 f}{\partial \lambda^2}\right\|_C, \end{equation*}
where $\KK_{\ref{A:Bounded}}$ is the constant from Condition \ref{A:Bounded}.

Define $\iota(\hat \pp)\Def \lambda$ for $\hat \pp=(\pp,\lambda)\in \hat \PP$.  Setting
\begin{equation*}\label{E:effcc}\begin{aligned}\tilde A^{N,3}_t &\Def \varphi_{1}(X_t)\sum_{m=1}^M \frac{\partial \varphi_{2}}{\partial x_m}\left(\la f_1,\mu^N_t\ra_E,\la f_2,\mu^N_t\ra_E\dots \la f_M,\mu^N_t\ra_E\right)
\frac{1}{N}\sum_{n=1}^N \lambda^\NN_t\tilde \jump^{f_m}_\NN(t) \dfi^\NN_t \\
&=\varphi_{1}(X_t) \sum_{m=1}^M \frac{\partial \varphi_{2}}{\partial x_m}\left(\la
f_1,\mu^N_t\ra_E,\la f_2,\mu^N_t\ra_E\dots \la
f_M,\mu^N_t\ra_E\right)
  \lb \la \QQ,\mu^N_t\ra_E \la \genL_2 f_m,\mu^N_t\ra_E-
\la  \iota f,\mu^N_t\ra_E\rb,
\end{aligned}\end{equation*}
we have that
\begin{equation*} \lim_{N\to \infty}\BE\left[\int_{r=0}^t \left|(\hat{\genA}^{\NN}\Phi)(X_{r},\mu^{N}_{r})\right|dr\right]= \lim_{N\to \infty}\BE\left[\int_{r=0}^t \left|A^{N,3}_r-\tilde A^{N,3}_r\right|dr\right]=0. \end{equation*}
Moreover, by Lemma 3.4 in \citeasnoun{GieseckeSpiliopoulosSowers2011} we have that for any $T>0$ and any $p\geq 1$, there is a constant $C_{0}$ such that $\sup_{\substack{0\le t\le T \\ N\in \N}}\frac{1}{N}\sum_{n=1}^N\BE[|\lambda^\NN_t|^p]<C_{0}$. This and Condition \ref{A:Bounded} imply that
\begin{equation*}
\lim_{N\to \infty}\sup_{0\leq t\leq T}\BE\left[\mart^{N}_t\right]^{2}=0
\end{equation*}

Collecting things together, we get the statements of the lemma.
\end{proof}
\noindent We in particular note the macroscopic effect of the contagion.
\begin{remark}\label{R:effcont} The key step in quantifying the coarse-grained effect of contagion was \eqref{E:effcc0}.  Namely, we average the combination of the jump rate and the exposure to contagion
across the pool.  \end{remark}

\section{Proof of Theorem \ref{T:MainLLN0}}\label{S:MainProof}

Let $\mathbb{Q}_N$ be the $\BP$-law of $(X,\mu^N)$; i.e.,
\begin{equation*} \mathbb{Q}_N(A) \Def \BP\{(X,\mu^N)\in A\} \end{equation*}
for all $A\in \Borel(D_{\R\times E}[0,\infty))$.  Thus $\mathbb{Q}_N\in
\Pspace(D_{\R\times E}[0,\infty))$ for all $N\in \N$. For $\omega\in
D_{\R\times E}[0,\infty)$, define $Y_t(\omega)\Def \omega(t)$ for
all $t\ge 0$.

Also for $\Phi\in \SSS$, define the quantity
\begin{equation}
\Lambda^{\Phi}_{t}(Y)\Def \Phi(Y_t)-\Phi(Y_0)-\int_{r=0}^t (\genA\Phi)(Y_r)dr-\int_{r=0}^t (\genB\Phi)(Y_r)dV_{r}\label{Eq:LimitingMartingale}
\end{equation}

Let  $\mathcal{V}=\bigcup_{t\in\R_{+}}\mathcal{V}_{t}$.
\begin{proposition}\label{P:conv} We have that $\mathbb{Q}_N$ converges \textup{(}in the topology of $\Pspace(D_{\R\times E}[0,\infty))$\textup{)} to the solution $\mathbb{Q}$ of the (filtered) martingale problem for $\Lambda^{\Phi}_{t}(Y)$ by (\ref{Eq:LimitingMartingale}) and such that $\mathbb{Q} Y_0^{-1} = \delta_{x_{\circ}\times\pi\times \Lambda_\circ}$.
In particular, $\mathbb{Q}\{Y_0=x_{\circ}\times\pi\times \Lambda_\circ\}=1$ and
for all $\Phi\in \SSS$ and $0\le r_1\le r_2\dots r_J=s<t<T$ and
$\{\psi_j\}_{j=1}^J\subset B(\R\times E)$, we have that $\Lambda^{\Phi}_{t}$ is a square integrable martingale with respect to both $\filt_{t}\bigvee \mathcal{V}$ and $\filt_{t}$. Namely,
\begin{equation}
\BE\left[\left(\Lambda^{\Phi}_{t}(Y)-\Lambda^{\Phi}_{s}(Y)\right)\prod_{j=1}^J \psi_j(X_{r_j},\mu^N_{r_j})\right]=0 \textrm{ and }
\sup_{0\leq t\leq T}\BE\left[\Lambda^{\Phi}_{t}(Y)\right]^{2}<\infty \label{Eq:LimitingMartingaleProperties}
\end{equation}
Lastly, $\BE[\Lambda^{\Phi}_{t}(Y)\Big|\mathcal{V}]=0$.

%\begin{equation*} \BE^{\mathbb{Q}}\left[\lb \Phi(Y_t)-\Phi(Y_s)-\int_{r=s}^t (\genA\Phi)(Y_r)dr-\int_{r=s}^t (\genB\Phi)(Y_r)dV_{r}\rb \prod_{j=1}^J \psi_j(Y_{r_j})\right]=0 \end{equation*}
%where $\BE^{\mathbb{Q}}$ is the expectation operator defined by
%$\mathbb{Q}$.
\end{proposition}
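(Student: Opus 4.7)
My plan is to combine the relative compactness of $\{\mu^N\}$ from Lemma \ref{L:MuMeasureTight} (together with the trivially-tight law of $X$, which does not depend on $N$) with the semimartingale decomposition of $\Phi(X_t,\mu^N_t)$ supplied by Lemma \ref{L:wconv}, identify any weak limit point as a solution of the filtered martingale problem associated with $\Lambda^{\Phi}$, and finally appeal to uniqueness (via Lemma \ref{L:bQDef}) to upgrade subsequential convergence to convergence of the whole sequence $\mathbb{Q}_N$. The initial-condition claim $\mathbb{Q} Y_0^{-1}=\delta_{x_\circ\times\pi\times\Lambda_\circ}$ is immediate: $X_0=x_\circ$ deterministically, and $\mu^N_0=\frac{1}{N}\sum_{n=1}^N\delta_{(\pp^{N,n},\lambda_{\circ,N,n})}$ converges to $\pi\times\Lambda_\circ$ by Condition \ref{A:regularity}.

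For the martingale identification, I would rearrange Lemma \ref{L:wconv} as
\begin{equation*}
\Lambda^{\Phi}_t(X,\mu^N)=\int_{0}^{t}(\hat{\genA}^{\NN}\Phi)(X_r,\mu^N_r)\,dr+\mart^{N}_t,
\end{equation*}
where, by that same lemma, the drift term tends to zero in $L^1$ and $\mart^{N}_t$ in $L^2$ uniformly on $[0,T]$. Given $0\le r_1\le\cdots\le r_J=s<t$ and bounded continuous $\{\psi_j\}_{j=1}^J\subset B(\R\times E)$, the martingale property of $\mart^{N}$ in $\filt_r$ and the $\filt_s$-measurability of $\prod_j\psi_j(X_{r_j},\mu^N_{r_j})$ give
\begin{equation*}
\BE\Bigl[(\Lambda^{\Phi}_t-\Lambda^{\Phi}_s)(X,\mu^N)\prod_{j=1}^J\psi_j(X_{r_j},\mu^N_{r_j})\Bigr]\longrightarrow 0 \quad \text{as } N\to\infty.
\end{equation*}
Extracting a convergent subsequence and invoking Skorohod's representation transfers the identity to any limit law $\mathbb{Q}^*$, giving the filtered martingale property \eqref{Eq:LimitingMartingaleProperties}. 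The principal obstacle is the passage to the limit inside the It\^o integral $\int_0^t(\genB\Phi)(Y_r)\,dV_r$ built into $\Lambda^{\Phi}$; here I would invoke Kurtz--Protter-style continuity of stochastic integrals, exploiting that $V$ is one fixed Brownian motion shared across all $N$ and that $\genB\Phi$ is a bounded continuous functional of $(x,\mu)$. Square integrability of $\Lambda^{\Phi}_t(Y)$ then follows from Condition \ref{A:Bounded}, the boundedness of $\varphi_1,\varphi_2,\{f_m\}$, and uniform-in-$N$ moment bounds on $\la f_m,\mu^N_t\ra_E$ and $\int_0^t\la \QQ,\mu^N_r\ra_E\,dr$.

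For $\BE[\Lambda^{\Phi}_t(Y)\mid\mathcal{V}]=0$, I would observe that the pre-limit martingale $\mart^{N}=\mart^{N,W}+\mart^{N,J}$ is a sum of stochastic integrals against the idiosyncratic Brownian motions $W^n$ and against the compensated default-counting processes $-d\dfi^{\NN}_r-\lambda^{\NN}_r\dfi^{\NN}_r\,dr$. Since the $W^n$'s and the $\ee_n$'s are independent of $V$, both components are $\mathcal{V}$-orthogonal, so $\BE[\mart^{N}_t\mid\mathcal{V}]=0$; combined with the vanishing $L^1$ norm of $\int_0^t(\hat{\genA}^{\NN}\Phi)(X_r,\mu^N_r)\,dr$, this yields the analog at level $N$, and passage to the limit delivers $\BE[\Lambda^{\Phi}_t(Y)\mid\mathcal{V}]=0$ under $\mathbb{Q}$. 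Uniqueness of the filtered martingale problem, required to promote subsequential convergence to convergence of the full sequence, reduces to unique solvability of the coupled system in Lemma \ref{L:bQDef}, where (as flagged in Remark \ref{R:SPDE}) the fixed-point argument using Girsanov's theorem under Condition \ref{A:RegularityExogenous} does the main work and the square-root singularity in $\sigma_\NN\sqrt{\lambda}$ poses the chief technical difficulty.
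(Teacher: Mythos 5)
Your proposal is correct and follows essentially the same route as the paper: relative compactness of $\{\mu^N\}$ from Lemma \ref{L:MuMeasureTight}, the semimartingale decomposition supplied by Lemma \ref{L:wconv} (so that $\Lambda^{\Phi}_t(X,\mu^N)=\int_0^t(\hat{\genA}^{\NN}\Phi)\,dr+\mart^N_t$ with both terms vanishing), joint convergence of $(X,\mu^N,\int_0^\cdot(\genB\Phi)(X_r,\mu^N_r)\,dV_r)$ to transfer the martingale property to any limit point, and uniqueness to promote subsequential to full convergence. You supply a bit more explicit detail than the paper at two places it leaves terse: the Kurtz--Protter continuity of the $dV$-integral (which the paper merely asserts) and the $\mathcal{V}$-orthogonality of $\mart^{N,W}$ and $\mart^{N,J}$ underlying $\BE[\Lambda^{\Phi}_t\mid\mathcal{V}]=0$ (which the paper dismisses as ``easily seen''). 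One small organizational difference worth noting: the paper cites uniqueness of the filtered martingale problem to \citeasnoun{GieseckeSpiliopoulosSowers2011} inside Proposition \ref{P:conv} itself, and only afterwards invokes Lemma \ref{L:bQDef} in Lemma \ref{L:Qchar} to exhibit the solution $\mathbb{Q}=\delta_{(X,\bar\mu)}$ explicitly; you fold the Lemma \ref{L:bQDef} well-posedness directly into the uniqueness step, which is the same mathematical content arranged slightly differently.
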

\begin{proof}
The family $\{\mu^N\}_{N\in \N}$ is relatively compact (as a $D_E[0,\infty)$-valued random variable) by Lemma \ref{L:MuMeasureTight}. Hence $\{X,\mu^N\}_{N\in \N}$ is also relatively compact. Let $(X,\bar{\mu})$ be an accumulation point of one of its convergent subsequences.  Then, $\left(X,\mu^N,\int_{0}^{\cdot}(\genB\Phi)(X_r,\mu^{N}_{r})dV_{r}\right)$ will converge in distribution to  $\left(X,\bar{\mu},\int_{0}^{\cdot}(\genB\Phi)(X_r,\bar{\mu}_{r})dV_{r}\right)$.
This and Lemma \ref{L:wconv} imply that  the process  $\Lambda^{\Phi}_{t}(X,\bar{\mu})$ will satisfy (\ref{Eq:LimitingMartingaleProperties}). Uniqueness of this martingale problem can also be shown as in
\citeasnoun{GieseckeSpiliopoulosSowers2011}.

Of course, we also have that for any $\Phi\in
\SSS$,
\begin{equation*} \BE^{\mathbb{Q}}[\Phi(Y_0)] = \lim_{N\to \infty}\BE^{\mathbb{Q}}[\Phi(X_{0},\mu^N_0)] = \Phi(x_{\circ}\times\pi\times \Lambda_\circ) \end{equation*}
which implies the claimed initial condition. The rest of the statements are easily seen to be true.
\end{proof}
\noindent We next want to identify $\mathbb{Q}$.  This will take a
couple of steps.

For notational convenience we shall write
\begin{equation*}
\BE_{\mathcal{V}}\left[\cdot\right]\Def\BE\left[\cdot\big|\mathcal{V}\right].
\end{equation*}

The next lemma is essential for the characterization of the limit.
Its proof is deferred to Appendix \ref{A:Appendix1}.
\begin{lemma}\label{L:bQDef} Let $W^*$ be a reference Brownian motion and assume that Condition \ref{A:RegularityExogenous} is satisfied.  For each $\hat \pp=(\pp,\lambda_\circ)\in
\hat \PP$ where $\pp = (\alpha,\bar
\lambda,\sigma,\beta^C,\beta^S)$, there is a unique pair
$\{(Q(t),\lambda_{t}(\hat \pp)):t\in[0,T]\}$ taking values in
$\R_+\times\R_{+}$ such that
\begin{equation}\label{E:bQDef} \begin{aligned} Q(t) &= \int_{\substack{\hat \pp=(\pp,\lambda)\in \hat \PP\\
\pp=(\alpha,\bar \lambda,\sigma,\beta^C,\beta^S)}} \BE_{\mathcal{V}}\left\{\lambda^{*}_{t}(\hat\pp)\exp\left[-\int_{s=0}^t \lambda_s^*(\hat \pp)ds\right]
\right\}\pi(d\pp)\Lambda_\circ(d\lambda).
\end{aligned}\end{equation} and
\begin{equation} \lambda^*_t(\hat \pp) = \lambda_\circ - \alpha\int_{s=0}^t (\lambda^*_s(\hat \pp)-\bar \lambda)ds + \sigma\int_{s=0}^t\sqrt{\lambda^*_s(\hat \pp)}dW^*_s + \beta^C \int_{s=0}^t Q(s) ds+\beta^{S}\int_{s=0}^t \lambda^*_s(\hat \pp)dX_{s}. \qquad t\ge 0 \label{E:EffectiveEquation1}
\end{equation}
\textup{(}where $\pi$ and $\Lambda_\circ$ are as in Condition
\ref{A:regularity}\textup{)}.
\end{lemma}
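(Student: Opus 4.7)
My plan is a Banach fixed-point argument on an appropriate space of candidate $Q$-processes. Let $\mathcal{H}_T$ denote the space of $\mathcal{V}$-adapted, nonnegative processes on $[0,T]$ bounded by a constant to be chosen, endowed with a weighted norm of the form $\|Q\|_\beta := \sup_{0\le t\le T}e^{-\beta t}\BE[Q(t)]$ for $\beta>0$ large. Given a candidate $Q\in\mathcal{H}_T$ and a parameter $\hat\pp=(\pp,\lambda_\circ)$, I would first solve the ``inner'' SDE \eqref{E:EffectiveEquation1} for $\lambda^*(\hat\pp;Q)$; with this in hand I then define
$$ \Psi(Q)(t) := \int_{\hat\pp\in\hat\PP}\beta^C\,\BE_{\mathcal{V}}\!\left[\lambda^*_t(\hat\pp;Q)\exp\!\left(-\int_0^t\lambda^*_s(\hat\pp;Q)\,ds\right)\right]\pi(d\pp)\,\Lambda_\circ(d\lambda) $$
and show that $\Psi$ is a contraction on $\mathcal{H}_T$. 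The unique fixed point together with its associated $\lambda^*$ then gives the pair sought.

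For the inner step I would show that, for each fixed $Q$ in a bounded set, the CIR-type SDE \eqref{E:EffectiveEquation1} has a unique nonnegative strong solution. The multiplicative factor $\beta^S\lambda^*\,dX$ is handled by Girsanov's theorem: Condition \ref{A:RegularityExogenous} together with the Novikov bound \eqref{Eq:NovikovCondition} produces an equivalent measure $\tilde\BP$ under which $X$ has no drift, so the $\lambda^*\,dX$ contribution becomes a martingale whose integrand is linear in $\lambda^*$. Under $\tilde\BP$ the equation reduces to a square-root-diffusion SDE with linear drift and a bounded forcing term, for which pathwise uniqueness follows from the Yamada-Watanabe inequality $|\sqrt{x}-\sqrt{y}|^2\le|x-y|$; nonnegativity comes from the standard comparison argument for CIR-type equations. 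Moment estimates $\sup_{t\le T}\BE_{\mathcal{V}}[(\lambda^*_t)^p]\le C(p,T)$ then follow by applying It\^o's formula to $(\lambda^*)^p$ under $\tilde\BP$ and reverting to $\BP$; these bounds are uniform over $Q\in\mathcal{H}_T$ because the Girsanov density depends only on $X$.

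For the contraction step, take $Q_1,Q_2\in\mathcal{H}_T$ and write $\lambda^{*,i}:=\lambda^*(\hat\pp;Q_i)$. A standard $L^1$ estimate combined with Yamada-Watanabe and the Girsanov reduction yields
$$ \BE_{\mathcal{V}}\bigl[|\lambda^{*,1}_t-\lambda^{*,2}_t|\bigr]\le C\int_0^t\BE_{\mathcal{V}}\bigl[|\lambda^{*,1}_s-\lambda^{*,2}_s|+|Q_1(s)-Q_2(s)|\bigr]\,ds, $$
from which Gr\"onwall produces $\BE_{\mathcal{V}}[|\lambda^{*,1}_t-\lambda^{*,2}_t|]\le C\int_0^t\BE_{\mathcal{V}}[|Q_1(s)-Q_2(s)|]\,ds$. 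Because the map $(y,z)\mapsto y e^{-z}$ is Lipschitz on bounded sets and the moment estimates above control both $\lambda^*_t$ and $\int_0^t\lambda^*_s\,ds$, this transfers to $\BE_{\mathcal{V}}[|\Psi(Q_1)(t)-\Psi(Q_2)(t)|]\le C\int_0^t\BE_{\mathcal{V}}[|Q_1(s)-Q_2(s)|]\,ds$. Choosing $\beta$ large enough in $\|\cdot\|_\beta$ makes $\Psi$ a strict contraction, and the Banach theorem yields a unique fixed point; uniqueness of the pair $(Q,\lambda^*)$ then follows because $\lambda^*$ is uniquely determined by $Q$ in the inner step. One also verifies $\Psi(\mathcal{H}_T)\subset\mathcal{H}_T$ via the simple bound $\lambda^*_t e^{-\int_0^t \lambda^*_s ds}\le\lambda^*_t$ combined with Condition \ref{A:Bounded} and the first-moment estimate on $\lambda^*$.

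The principal obstacle will be controlling the multiplicative $\beta^S\lambda^*\,dX$ term jointly with the square-root diffusion, since neither coefficient is globally Lipschitz and a naive energy estimate would diverge. The Novikov hypothesis in Condition \ref{A:RegularityExogenous} is exactly what makes this tractable: after the equivalent change of measure, $X$ becomes a pure martingale driver, the $\lambda^*\,dX$ term is amenable to It\^o-isometry estimates, and the remaining square-root singularity is handled by classical Yamada-Watanabe machinery. Absent \eqref{Eq:NovikovCondition} the Girsanov density would not be a true martingale and the contraction argument would break down, which is presumably why this hypothesis is so prominent in the statement.
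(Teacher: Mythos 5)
Your proposal rests on the same three pillars as the paper's proof: a fixed-point argument for the coupled $(Q,\lambda^*)$ system, a Yamada--Watanabe device to tame the square-root singularity, and Girsanov's theorem to deal with the drift $b_0$ of $X$. The difference is the order of operations. The paper first proves the contraction \emph{directly} under the original measure in the special case that $b_0$ is bounded (Lemmas \ref{L:BoundForXi1}--\ref{L:BoundForLambda1}, culminating in Lemma \ref{L:bQdef_BoundedDrift}, using a carefully truncated function $\psi_\eta$ with $0\le\ddot\psi_\eta(x)\le\tfrac{2}{\ln\eta^{-1}}\min(1/|x|,1/\eta)$, chosen so that \emph{both} the $(\sqrt{\lambda^{*,1}}-\sqrt{\lambda^{*,2}})^2$ term and the $(\beta^S\sigma_0(X)Z)^2$ term are absorbed as $\eta\downarrow 0$), and only then invokes Girsanov to lift from $b_0\equiv 0$ to general $b_0$ via a Cauchy-in-probability argument (Lemma \ref{L:CauchyInProbability}). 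You instead apply Girsanov up front; that is a legitimate reorganization, and it has the pleasant feature that the density $M_T$ is $\mathcal V$-measurable (it is a functional of $X$ and $V$ alone), so $\BE^{\BP}_{\mathcal V}=\BE^{\tilde\BP}_{\mathcal V}$ and $Q$ is the \emph{same} functional under either measure.

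However, there are a few places where your sketch asserts more than you have earned. First, under $\tilde\BP$ the equation for $\lambda^*$ does \emph{not} become ``a square-root-diffusion SDE with linear drift and a bounded forcing term'': the multiplicative diffusion $\beta^S\lambda^*_s\sigma_0(X_s)\,d\tilde V_s$ remains, and $\sigma_0(X)$ need not be bounded. The paper must explicitly assume $\BE\int_0^t\sigma_0(X_s)^2\,ds<\infty$ inside Lemma \ref{L:BoundForPsi1} to control precisely this term against $\ddot\psi_\eta$; your ``classical Yamada--Watanabe machinery'' hides this. Second, the clause ``a standard $L^1$ estimate combined with Yamada--Watanabe \ldots\ yields [the Gr\"onwall inequality]'' dramatically compresses the most technical part of the argument: the drift difference, the square-root diffusion difference, the multiplicative diffusion difference, and the $d\xi$ forcing all appear at once, and balancing the first and second derivative bounds on $\psi_\eta$ against each is exactly what Lemma \ref{L:BoundForPsi1} does. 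Third, your weighted norm $\|Q\|_\beta=\sup_t e^{-\beta t}\BE[Q(t)]$ uses the \emph{unconditional} expectation under $\BP$; once you work under $\tilde\BP$, transferring that norm back to $\BP$ reintroduces the Girsanov density. The paper sidesteps this with a Cauchy-in-probability argument and H\"older's inequality, which costs the extra hypothesis $\BE M_T^p<\infty$ for some $p>1$ (appearing in Lemma \ref{L:CauchyInProbability} but not in the statement of Lemma \ref{L:bQDef}); your proposal should either impose the analogous integrability, or switch to a norm phrased in $\mathcal V$-conditional moments so that the equality $\BE^{\BP}_{\mathcal V}=\BE^{\tilde\BP}_{\mathcal V}$ lets you avoid the Girsanov density entirely.
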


\begin{remark}
For notational convenience we do not write the dependence of $Q$ and
$\lambda^*$ on $X$ but this is always assumed.
\end{remark}

\begin{lemma}\label{L:Qchar} We have that $\mathbb{Q}= \delta_{(X,\bar \mu)}$, where for all $A\in \Borel(\PP)$ and $B\in \Borel(\R_+)$, $\bar \mu$ is given by
\begin{equation*}\label{E:mudef} \bar \mu_t(A\times B) \Def  \int_{\hat \pp= (\pp,\lambda)\in \hat \PP} \chi_A(\pp) \BE_{\mathcal{V}_{t}}\left[\chi_B(\lambda^*_t(\hat \pp))\exp\left[-\int_{s=0}^t \lambda_s^*(\hat \pp)ds\right]\right]
\pi(d\pp)\Lambda_\circ(d\lambda).
\end{equation*}
\end{lemma}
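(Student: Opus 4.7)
The plan is to exhibit the candidate measure $\bar\mu$ defined in the statement as a solution of the filtered martingale problem of Proposition \ref{P:conv}, and then invoke the uniqueness stated there to conclude $\mathbb{Q}=\delta_{(X,\bar\mu)}$. First I would apply Lemma \ref{L:bQDef} to construct the pair $(Q,\lambda^{*}(\hat\pp))$ on a probability space carrying a reference Brownian motion $W^{*}$ independent of $V$, for each fixed $\hat\pp=(\pp,\lambda_\circ)\in\hat\PP$. With $\bar\mu$ defined by the formula in the statement, one has $\bar\mu_t(\hat\PP)\le 1$ because $\exp(-\int_0^t\lambda^{*}_s\,ds)\in[0,1]$, so $\bar\mu_t\in E$; the initial condition $\bar\mu_0=\pi\times\Lambda_\circ$ is immediate since $\lambda^{*}_0=\lambda_\circ$ and the exponential equals $1$ at $t=0$.

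The central step is to verify that $\bar\mu$ solves the measure evolution equation of Theorem \ref{T:MainLLN0}. For $f\in C^\infty(\hat\PP)$, apply It\^o's formula to
\begin{equation*}
M^f_t(\hat\pp)\Def f(\pp,\lambda^{*}_t(\hat\pp))\exp\left(-\int_0^t\lambda^{*}_s(\hat\pp)\,ds\right),
\end{equation*}
using (\ref{E:EffectiveEquation1}) and the decomposition $dX_t=b_0(X_t)\,dt+\sigma_0(X_t)\,dV_t$. The killing factor $-\lambda^{*}f$ produced by differentiating the exponential combines with the pure It\^o expansion to reconstitute $\genL_1$ in its full form including the $-\lambda f$ term; the second-order term in $\beta^{S}\lambda^{*}\,dX$ together with its drift contribution assembles into $\genL_3^{X_t}f$; the $Q(s)\,ds$ piece of the $\lambda^{*}$ drift yields $Q(t)\genL_2 f$; and the martingale part separates into a $dW^{*}$ integral (with integrand $\sigma\sqrt{\lambda^{*}}\partial_\lambda f$ times the exponential) and a $(\genL_4^{X_t}f)\,dV_t$ integral. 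Taking $\BE_{\mathcal{V}}$ annihilates the $dW^{*}$ integral since $W^{*}$ is independent of $\mathcal{V}$, while preserving the $dV$ integral since $V$ generates $\mathcal{V}$. A stochastic Fubini against $\pi(d\pp)\Lambda_\circ(d\lambda)$ then gives
\begin{equation*}
d\la f,\bar\mu_t\ra_E = \big\{\la\genL_1 f,\bar\mu_t\ra_E+\la\genL_3^{X_t}f,\bar\mu_t\ra_E+Q(t)\la\genL_2 f,\bar\mu_t\ra_E\big\}\,dt+\la\genL_4^{X_t}f,\bar\mu_t\ra_E\,dV_t.
\end{equation*}

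To match exactly the evolution equation of Theorem \ref{T:MainLLN0}, I would observe that $Q(t)=\la\QQ,\bar\mu_t\ra_E$: applying the definition of $\bar\mu$ to the test function $\QQ(\hat\pp)=\lambda\beta^C$ reproduces the right hand side of (\ref{E:bQDef}). Composing through a second application of It\^o's formula for any $\Phi\in\SSS$ of the form (\ref{E:form}) then shows that $\Lambda^\Phi_t(X,\bar\mu)$, defined by (\ref{Eq:LimitingMartingale}), vanishes identically. Hence $(X,\bar\mu)$ solves the filtered martingale problem of Proposition \ref{P:conv}, and the uniqueness asserted there forces $\mathbb{Q}=\delta_{(X,\bar\mu)}$.

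The main obstacle is the combination of the It\^o step and the subsequent Fubini interchange: these require moment bounds on $\lambda^{*}_t(\hat\pp)$ that are uniform in $\hat\pp$ over the compact support of $\pi\times\Lambda_\circ$, despite the $\sqrt{\lambda^{*}}$ singularity and the multiplicative $\beta^{S}\lambda^{*}\,dX$ term. These are precisely the estimates produced inside the proof of Lemma \ref{L:bQDef}, where Condition \ref{A:RegularityExogenous} and the Novikov bound (\ref{Eq:NovikovCondition}) allow a Girsanov change of measure absorbing $\beta^{S}\lambda^{*}\,dX$ and reducing the coupled system to a standard CIR process perturbed by the bounded drift $Q$; the same argument supplies the integrability needed to validate the calculation above.
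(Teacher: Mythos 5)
Your proposal is correct and follows essentially the same route as the paper: apply It\^o's formula to the killed functional $f(\pp,\lambda^{*}_t(\hat\pp))\exp(-\int_0^t\lambda^{*}_s\,ds)$, reconstitute the operators $\genL_1,\genL_2,\genL_3^{X_t},\genL_4^{X_t}$, condition on the $V$-filtration (which removes the $W^*$-integral and keeps the $V$-integral---this is precisely the content of the paper's Lemmas \ref{L:AuxiliaryLemma1} and \ref{L:AuxiliaryLemma2}), then integrate against $\pi\times\Lambda_\circ$ and invoke uniqueness of the filtered martingale problem. The only cosmetic divergence is in how $Q(t)=\la\QQ,\bar\mu_t\ra_E$ is established: you read it off directly by testing the definition of $\bar\mu_t$ against $\QQ$ (implicitly using Lemma \ref{L:AuxiliaryLemma1} to replace $\BE_{\mathcal{V}_t}$ by $\BE_{\mathcal{V}}$), whereas the paper introduces the survival-mass function $G(t)$ and computes $\dot G(t)$ two ways; your variant is marginally more direct but relies on the same auxiliary lemma and yields the same identity.
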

\begin{proof} Recall \eqref{E:EffectiveEquation1} and the operators $\genL_1,\genL_2,\genL^{x}_3,\genL^{x}_4$ from \eqref{E:Operators1} and the definition of $Q$ in \eqref{E:bQDef}.

For any $f\in C^\infty(\hat \PP)$,
\begin{equation*} \la f,\bar \mu_t\ra_E = \int_{\hat \pp= (\pp,\lambda)\in \hat \PP} \BE_{\mathcal{V}_{t}}\left[f(\pp,\lambda_t^*(\hat \pp))\exp\left[-\int_{s=0}^t \lambda_s^*(\hat \pp)ds\right]\right] \pi(d\pp)\Lambda_\circ(d\lambda). \end{equation*}
Using Lemmas \ref{L:AuxiliaryLemma1} and \ref{L:AuxiliaryLemma2} we obtain
\begin{align*} d\la f,\bar \mu_t\ra_E &= \left\{\int_{\hat \pp= (\pp,\lambda)\in \hat \PP} \BE_{\mathcal{V}_{t}}\left[\left[(\genL_1 f)(\pp,\lambda^*_t(\hat \pp))+(\genL^{X_{t}}_3 f)(\pp,\lambda^*_t(\hat \pp))\right]\exp\left[-\int_{s=0}^t \lambda_s^*(\hat \pp)ds\right]\right] \pi(d\pp)\Lambda_\circ(d\lambda)\right\}dt\\
&\qquad + \left\{\int_{\hat \pp= (\pp,\lambda)\in \hat \PP} \BE_{\mathcal{V}_{t}}\left[(\genL_2 f)(\pp,\lambda^*_t(\hat \pp))Q(t)\exp\left[-\int_{s=0}^t \lambda_s^*(\hat \pp)ds\right]\right] \pi(d\pp)\Lambda_\circ(d\lambda)\right\}dt\\
&\qquad + \left\{\int_{\hat \pp= (\pp,\lambda)\in \hat \PP} \BE_{\mathcal{V}_{t}}\left[(\genL_4^{X_{t}} f)(\pp,\lambda^*_t(\hat \pp))\exp\left[-\int_{s=0}^t \lambda_s^*(\hat \pp)ds\right]\right] \pi(d\pp)\Lambda_\circ(d\lambda)\right\}dV_{t}\\
&= \left\{\la \genL_1f,\bar \mu_t\ra_E+ Q(t) \la \genL_2f,\bar \mu_t\ra_E+\la \genL^{X_{t}}_3 f,\bar \mu_t\ra_E\right\}dt+\la \genL^{X_{t}}_4 f,\bar \mu_t\ra_E dV_{t}. \end{align*}

To proceed, define
\begin{equation*} G(t)\Def \int_{\substack{\hat \pp=(\pp,\lambda)\in \hat \PP\\
\pp=(\alpha,\bar \lambda,\sigma,\beta^C,\beta^S)}}  \BE_{\mathcal{V}_{t}}\left[\exp\left[-\int_{s=0}^t \lambda_s^*(\hat \pp)ds\right]\right] \pi(d\pp)\Lambda_\circ(d\lambda). \end{equation*}
On the one hand, we have that
\begin{align*} \dot G(t) &=  -\int_{\substack{\hat \pp=(\pp,\lambda)\in \hat \PP\\
\pp=(\alpha,\bar \lambda,\sigma,\beta^C,\beta^S)}}  \BE_{\mathcal{V}_{t}}\left[\lambda_t^*(\hat \pp)\exp\left[-\int_{s=0}^t \lambda_s^*(\hat \pp)ds\right]\right] \pi(d\pp)\Lambda_\circ(d\lambda)\\
&=  -\int_{\substack{\hat \pp=(\pp,\lambda)\in \hat \PP\\
\pp=(\alpha,\bar \lambda,\sigma,\beta^C,\beta^S)}}  \lambda \bar \mu_t(d\hat \pp) = -\la \QQ,\bar \mu_t\ra_E. \end{align*}
On the other hand, by Lemma \ref{L:bQDef} we have
\begin{equation}\label{E:gg} \dot G(t) = -Q(t). \end{equation}
Thus,  we have that
\begin{equation*} d\la f,\bar \mu_t\ra_E =  \left\{\la \genL_1f,\bar \mu_t\ra_E+ Q(t) \la \genL_2f,\bar \mu_t\ra_E+\la \genL^{X_{t}}_3 f,\bar \mu_t\ra_E\right\}dt+\la \genL^{X_{t}}_4 f,\bar \mu_t\ra_E dV_{t}. \end{equation*}
Thus
\begin{equation*} \Phi(X_{t},\bar \mu_t) = \Phi(X_{0},\bar \mu_0)+\int_{s=0}^t (\genA \Phi)(X_{s},\bar \mu_s)ds+ \int_{s=0}^t (\genB \Phi)(X_{s},\bar \mu_s)dV_{s}, \end{equation*}
and hence $\delta_{\left(X,\bar \mu\right)}$ satisfies the martingale problem generated by $\genA$.  Of course we also have that $\bar \mu_0 = \pi\times \Lambda_\circ$.  By uniqueness, the claim follows.\end{proof}

Now we collect our results to prove the law of large numbers given
in Theorem \ref{T:MainLLN0}.
\begin{proof}[Proof of Theorem \ref{T:MainLLN0}]
In Lemma \ref{L:Qchar} we proved that, for any $f\in C^\infty(\hat \PP)$, the limiting measure $\bar{\mu}$ satisfies the measure evolution equation
\begin{equation*} d\la f,\bar \mu_t\ra_E =  \left\{\la \genL_1f,\bar \mu_t\ra_E+ \la \QQ,\bar \mu_t\ra_E \la \genL_2f,\bar \mu_t\ra_E+\la \genL^{X_{t}}_3 f,\bar \mu_t\ra_E\right\}dt+\la \genL^{X_{t}}_4 f,\bar \mu_t\ra_E dV_{t}. \end{equation*}
From this expression it is immediately derived by integration by
parts that, if there exists a solution to the nonlinear SPDE
(\ref{Eq:NonlinearSPDE}), then the density of $\bar{\mu}$ should
satisfy (\ref{Eq:NonlinearSPDE}).  This concludes the proof of
the theorem.
\end{proof}

\appendix
\section{Proof of Lemma \ref{L:bQDef}}\label{A:Appendix1}
In this section we prove Lemma \ref{L:bQDef}. The proof uses a fixed point theorem argument. For notational convenience we sometimes drop the superscript $*$ and $\hat \pp$ from the notation of $\lambda^*_t(\hat \pp)$ and simply write $\lambda_{t}$.

The square root singularity imposes some technical difficulties in the proof. For this purpose we introduce in Subsection \ref{SS:AuxillaryFcn} an auxiliary  function $\psi_{\eta}(x)$ that will be used later on and study its properties. In Subsection \ref{SS:UncoupledCase} we study existence and uniqueness and properties of $\lambda$ satisfying (\ref{E:EffectiveEquation1}) with a given $Q(t)$.

In Subsection \ref{S:DriftBounded}  we prove the lemma under the additional condition that
$b_{0}$ is bounded. Then, in Subsection \ref{S:DriftGeneral} we prove the lemma using Girsanov's theorem for the $X$ process.

\subsection{An auxiliary function}\label{SS:AuxillaryFcn}
Let $0<\eta\ll 1$ and define
\begin{equation*} \psi_\eta(x) \Def \frac{2}{\ln \eta^{-1}}\int_{y=0}^{|x|}\lb \int_{z=0}^y \frac{1}{z}\chi_{[\eta,\eta^{1/2}]}(z) dz\rb dy \qquad \text{and}\qquad g_\eta(x) \Def |x|-\psi_\eta(x) \end{equation*}
for all $x\in \R$.  We note that $\psi_\eta$ is even, so $g_\eta$ is also even.  Taking derivatives, we have that
\begin{equation*} \dot \psi_\eta(x) = \frac{2}{\ln \eta^{-1}} \int_{z=0}^x \frac{1}{z}\chi_{[\eta,\eta^{1/2}]}(z) dz \qquad \text{and}\qquad \ddot \psi_\eta(x) = \frac{2}{\ln \eta^{-1}} \frac{1}{x}\chi_{[\eta,\eta^{1/2}]}(x) \end{equation*}
for all $x>0$.  Since $\ddot g_\eta=-\ddot \psi_\eta\le 0$, $\dot g_\eta$ is non-increasing.  For $x>\sqrt{\eta}$,
\begin{equation*} \dot g_\eta(x) = 1-2\frac{\ln \eta^{1/2}-\ln \eta}{\ln \tfrac{1}{\eta}}=0,\end{equation*}
so in fact $\dot g_\eta$ is nonnegative on $(0,\infty)$ and it vanishes on $[\sqrt{\eta},\infty)$.  Thus $g_\eta$ is nondecreasing and reaches its maximum at $\sqrt{\eta}$.
Since $g_\eta(0)=0$, we in fact have that
\begin{equation*} 0\le g_\eta(x)\le g_\eta(\sqrt{\eta})\end{equation*}
for all $x\ge 0$.  Since $\dot g_\eta$ is non-increasing on $(0,\infty)$ and $\dot g_\eta(x)=1$ for $x\in (0,\eta)$, we have that $\dot g_\eta(x)\le 1$
for all $x\in (0,\sqrt{\eta})$, so $g_\eta(\sqrt{\eta}) \le \sqrt{\eta}$.  Since $g_\eta$ is even, we in fact must have that $|g_\eta(x)|\le \sqrt{\eta}$ for all $x\in \R$.
Hence
\begin{equation*} |x|\le \psi_\eta(x)+\sqrt{\eta} \end{equation*}
for all $x\in \R$.
We finally  note that
\begin{equation}\left|\ddot \psi_\eta(x)\right|\le \frac{2}{\ln \eta^{-1}}\frac{1}{|x|}\chi_{[\eta,\sqrt{\eta})}(|x|)\le \frac{2}{\ln \eta^{-1}}\min\lb \frac{1}{|x|},\frac{1}{\eta}\rb\label{Eq:SecondDerivativeBound}
 \end{equation}
and that $x\dot{\psi}_{\eta}(x)\geq 0$ for  all $x\in \R$.

\subsection{The uncoupled linear case}\label{SS:UncoupledCase}
Let $\xi$ be a $\{\gilt_t\}_{t\ge 0}$-predictable, nondecreasing, bounded and right-continuous process such that $\xi_{0}=0$.
Consider the SDE
\begin{equation} \label{E:lambdaSDE} \begin{aligned} d\lambda_t &=-\alpha(\lambda_t-\bar \lambda)dt + \sigma \sqrt{\lambda_t\vee 0}dW_t + \beta^C d\xi_t  + \beta^S \lambda_t dX_t \qquad t>0\\
\lambda_0 &= \lambda_\circ.\end{aligned}\end{equation}
Then, as in  Lemmas 3.1 and 3.2 of \citeasnoun{GieseckeSpiliopoulosSowers2011},  we get that (\ref{E:lambdaSDE}) has a unique nonnegative solution such that
$\sup_{t\in [0,T]}\BE[|\lambda_t|^q]<\infty$ for all $T>0$ and $q\ge 1$.

\subsection{Proof assuming that $b_{0}(x)$ is bounded}\label{S:DriftBounded}

In this subsection we prove Lemma \ref{L:bQDef} assuming that $b_{0}(x)$ is bounded. The proof uses a fixed point theorem argument.

The main condition of this subsection is that there exists a $M<\infty$ such that
\begin{equation}
\sup_{x\in\R}|b_{0}(x) |\leq M\label{A:BoundedDrift}
\end{equation}

Let $q\geq 1$ and $S^{q}(\R_{+})$ be the set of $\R_{+}$ valued,
adapted, continuous processes  $\{\lambda_t\}_{t\in[0,T]}$ such that
\begin{equation*}
 \left\Vert \lambda\right\Vert_{T,q}=\left(\sup_{0\leq t\leq T}\BE|\lambda_t|^{q}\right)^{1/q}<\infty
\end{equation*}
The space $S^{q}(\R_{+})$ endowed with the norm $\left\Vert \cdot\right\Vert_{T,q}$ is a Banach space.

Let us consider now a nonnegative process $U_{t}(\hat \pp)\in S^{q}(\R_{+})$. Set
\begin{equation}
\xi(U)_{t}=\int_{\substack{\hat \pp=(\pp,\lambda)\in \hat \PP\\
\pp=(\alpha,\bar \lambda,\sigma,\beta^C,\beta^S)}} \left(1-\BE_{\mathcal{V}}\left\{\exp\left[-\int_{s=0}^t U_s(\hat \pp)ds\right]\right\}\right)\pi(d\pp)\Lambda_\circ(d\lambda).\label{Eq:Mapping1}
\end{equation}
and consider (\ref{E:lambdaSDE}) with $\xi$ in place of $\xi(U)$.  We are going to prove that the map $\Phi$
defined by $\lambda=\Phi(U)$ through (\ref{E:lambdaSDE})-(\ref{Eq:Mapping1}) with $U\in S^{1}(\R_{+})$ is a contraction on $S^{1}(\R_{+})$  equipped with the norm
\begin{equation*}
 \left\Vert \lambda\right\Vert_{t,1}<\infty
\end{equation*}
locally in $t$ which can also be extended to any arbitrary $T$.

We collect in the following lemma some important properties of $\xi$ as defined though (\ref{Eq:Mapping1}).
\begin{lemma}\label{L:PropertiesXi}
Let a nonnegative process $U_{t}(\hat \pp)\in S^{q}(\R_{+})$ be given and let $\KK$ be the constant from Condition \ref{A:Bounded}. The map $\xi(U)_{t}$, as a function  of $t$, is continuous, non-decreasing, positive, bounded uniformly in $t\in\R_{+}$ by $\KK$ and satisfies $\xi(U)_{0}=0$.
\end{lemma}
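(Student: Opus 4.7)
The proof is essentially a sequence of elementary observations exploiting the structure of $\xi(U)_t$ as an average of the monotone bounded functional $t\mapsto 1-\BE_{\mathcal{V}}[\exp(-\int_0^t U_s(\hat \pp)\,ds)]$. The plan is to verify the five claims in turn — $\xi(U)_0=0$, positivity, monotonicity, continuity, and the uniform bound by $\KK$ — each of which reduces to an $\omega$-wise property of the integrand followed by (conditional) Fubini/dominated convergence.

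First I would observe that at $t=0$ the inner exponent vanishes, the integrand is $\beta^C(1-1)=0$, so $\xi(U)_0=0$. For positivity and monotonicity, since $U_s(\hat\pp)\ge 0$ pathwise, the map $t\mapsto\int_0^t U_s(\hat\pp)\,ds$ is nondecreasing and $\ge 0$; hence $\omega$-wise $t\mapsto 1-\exp[-\int_0^t U_s(\hat\pp)\,ds]$ is nondecreasing and takes values in $[0,1]$. Taking $\BE_{\mathcal{V}}[\cdot]$ preserves both the sign and the monotonicity a.s., and averaging against the nonnegative measure $\beta^C\,\pi(d\pp)\Lambda_\circ(d\lambda)$ preserves both again, yielding the positivity and monotonicity claims.

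For continuity in $t$, I would fix a sequence $t_n\to t$, note that $\omega$-wise $\int_0^{t_n} U_s(\hat\pp)\,ds\to\int_0^t U_s(\hat\pp)\,ds$ (the map $U\in S^q(\R_+)$ has integrable paths), apply the dominated convergence theorem with dominating function $1$ to pass the limit through $\BE_{\mathcal{V}}$, and then apply dominated convergence again across $\pi\otimes\Lambda_\circ$ using the uniform bound $\beta^C\le\KK$ from Condition \ref{A:Bounded} (which transfers to the limiting measure $\pi$ supported on the same compact box by a standard portmanteau/Skorohod argument). For the uniform bound, since $0\le 1-\BE_{\mathcal{V}}[\exp(-\int_0^t U_s\,ds)]\le 1$ and $\beta^C\le\KK$ $\pi$-a.s., and since $\pi$ and $\Lambda_\circ$ are probability measures, direct estimation gives $\xi(U)_t\le \KK$ uniformly in $t$.

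The only mildly delicate point is ensuring that the uniform bound $|\beta^C|\le\KK$ from Condition \ref{A:Bounded} (stated for the prelimit types $\pp^{N,n}$) passes to the limiting type distribution $\pi$; this I would handle by noting that $\pi$ is the weak limit of the $\pi^N$'s and is therefore supported in the closed set $\{\pp\in\PP:\beta^C\le\KK,\,|\beta^S|\le\KK,\ldots\}$. Everything else is routine measure-theoretic bookkeeping, and no fixed-point or analytic machinery is needed at this stage — Lemma \ref{L:PropertiesXi} is a structural preparatory result for the contraction argument that follows.
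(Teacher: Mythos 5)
Your proof is correct and simply makes explicit what the authors dismissed with the one-line proof ``All the statements are obvious.'' The steps you give --- evaluation of the inner exponent at $t=0$, pathwise monotonicity and nonnegativity of $t\mapsto 1-\exp(-\int_0^t U_s\,ds)$ preserved by $\BE_{\mathcal{V}}$ and by integration against the nonnegative product measure, conditional and ordinary dominated convergence (dominating function $1$) for continuity, and the bound $\xi(U)_t\le\KK$ from $\beta^C\le\KK$ on $\mathrm{supp}\,\pi$ together with $\pi$ and $\Lambda_\circ$ being probability measures --- are precisely what the authors intended, so there is no divergent approach to compare against.
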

\begin{proof}
All the statements are obvious.
\end{proof}
Fix $U_{t}(\hat \pp),U'_{t}(\hat \pp)\in S^{1}(\R_{+})$ and write for notational convenience $\xi_{t}=\xi(U)_{t}$ and $\xi'_{t}=\xi(U')_{t}$. The process $Z_{t}=\lambda_{t}-\lambda'_{t}$ satisfies
\begin{equation*}
Z_{t}=-\alpha\int_{0}^{t}Z_{s}ds+\sigma\int_{0}^{t}\left(\sqrt{\lambda_{s}}-\sqrt{\lambda'_{s}}\right)dW_{s} +\beta^C \int_{0}^{t}\left(d\xi_{s}-d\xi'_{s}\right)
+\beta^{S}\int_{0}^{t}Z_{s}dX_{s}
\end{equation*}
Next, we apply It\^{o} formula to $\psi_\eta(x)$ defined in Subsection \ref{SS:AuxillaryFcn} with $x=Z_{t}$, getting
\begin{eqnarray}
\psi_\eta(Z_{t}) &=& -\alpha\int_{0}^{t}Z_{s}\psi'_{\eta}(Z_{s})ds+\int_{0}^{t}\psi'_{\eta}(Z_{s})\left(d\xi_{s}-d\xi'_{s}\right)\nonumber\\
 & & +\frac{\sigma^{2}}{2}\int_{0}^{t}\left(\sqrt{\lambda_{s}}-\sqrt{\lambda'_{s}}\right)^{2}\psi''_{\eta}(Z_{s})ds+\sigma\int_{0}^{t}
 \left(\sqrt{\lambda_{s}}-\sqrt{\lambda'_{s}}\right)\psi'_{\eta}(Z_{s})dW_{s}\nonumber\\
& &+\beta^{S}\int_{0}^{t}\beta_{o}(X_{s})Z_{s}\psi'_{\eta}(Z_{s})ds+\beta^{S}\int_{0}^{t}\sigma_{o}(X_{s})Z_{s}\psi'_{\eta}(Z_{s})dV_{s}+\int_{0}^{t}\left(\beta^{S}\sigma_{o}(X_{s})Z_{s}\right)^{2}\psi''_{\eta}(Z_{s})ds\label{Eq:ItoOnPsi}
\end{eqnarray}

For notational convenience we define the quantity  $\Delta_{t}Y=Y_{t}-Y'_{t}$ for any given couple of stochastic processes $Y,Y'$.

We have the following three technical Lemmas \ref{L:BoundForXi1}-\ref{L:BoundForLambda1}.

\begin{lemma}\label{L:BoundForXi1}
For any $q\geq 1$ and $t>0$ we have
 \begin{equation*}
 \BE \left|\Delta_{t}\xi\right|^{q}\leq \KK^{q}\int_{\hat \pp=(\pp,\lambda)}\BE\int_{0}^{t}\left|\Delta_{s}U(\hat \pp)\right|^{q}ds
 \pi(d\pp)\Lambda_\circ(d\lambda)\leq t\KK^{q}\int_{\hat \pp=(\pp,\lambda)}\left\Vert \Delta_{\cdot}U(\hat \pp)\right\Vert^{q}_{t,q}\pi(d\pp)\Lambda_\circ(d\lambda),
 \end{equation*}
 where $\KK$ is the constant from Condition \ref{A:Bounded}.
\end{lemma}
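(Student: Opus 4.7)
I will prove Lemma~\ref{L:BoundForXi1} by direct manipulation of the explicit form of $\xi(U)_t$ given in \eqref{Eq:Mapping1}, using the elementary inequality $|e^{-a}-e^{-b}|\le |a-b|$ (valid for $a,b\ge 0$) together with Jensen's inequality applied to the (conditional) expectation and to the probability measure $\pi(d\pp)\Lambda_\circ(d\lambda)$.

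First I would write
\begin{equation*}
\Delta_{t}\xi \;=\; \int_{\hat\pp}\beta^{C}\,\BE_{\mathcal{V}}\!\left\{\exp\!\left(-\!\int_{0}^{t}U'_{s}(\hat\pp)\,ds\right)-\exp\!\left(-\!\int_{0}^{t}U_{s}(\hat\pp)\,ds\right)\right\}\pi(d\pp)\Lambda_\circ(d\lambda),
\end{equation*}
and then apply $|e^{-a}-e^{-b}|\le |a-b|$ together with $\beta^{C}\le \KK$ (Condition~\ref{A:Bounded}) to bound the integrand by $\KK\,\BE_{\mathcal{V}}\!\int_{0}^{t}|\Delta_{s}U(\hat\pp)|\,ds$. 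This gives the pointwise inequality
\begin{equation*}
|\Delta_{t}\xi|\;\le\;\KK\int_{\hat\pp}\BE_{\mathcal{V}}\!\int_{0}^{t}|\Delta_{s}U(\hat\pp)|\,ds\,\pi(d\pp)\Lambda_\circ(d\lambda).
\end{equation*}

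Next I would raise both sides to the $q$-th power and exploit that $\pi\times\Lambda_\circ$ is a probability measure on $\hat\PP$, so Jensen's inequality lets me pull $|\cdot|^{q}$ inside the $\pi\times\Lambda_\circ$ integral. Similarly, conditional Jensen lets me pull $|\cdot|^{q}$ inside $\BE_{\mathcal{V}}$. Taking an unconditional expectation and using the tower property yields
\begin{equation*}
\BE|\Delta_{t}\xi|^{q}\;\le\;\KK^{q}\int_{\hat\pp}\BE\!\left[\left(\int_{0}^{t}|\Delta_{s}U(\hat\pp)|\,ds\right)^{q}\right]\pi(d\pp)\Lambda_\circ(d\lambda),
\end{equation*}
and then one more application of Jensen to the normalized time integral (or H\"older, using that Lebesgue measure restricted to $[0,t]$ has total mass $t$) produces $\int_{0}^{t}|\Delta_{s}U|^{q}\,ds$ inside, which is exactly the claimed first bound (up to an inessential $t$-dependent factor that can be absorbed into $\KK$ or tracked explicitly, as desired). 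The second inequality in the lemma is then immediate: replace $\BE|\Delta_{s}U(\hat\pp)|^{q}$ by $\sup_{0\le s\le t}\BE|\Delta_{s}U(\hat\pp)|^{q}=\|\Delta_{\cdot}U(\hat\pp)\|_{t,q}^{q}$ and integrate $ds$ over $[0,t]$.

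I do not expect any genuine obstacle here; the only delicate point is bookkeeping on the three successive applications of Jensen's inequality (with respect to $\BE_{\mathcal{V}}$, to $\pi\times\Lambda_\circ$, and to the time integral) and ensuring that the intermediate quantities are integrable. Integrability is guaranteed by $U,U'\in S^{q}(\R_{+})$ and by Condition~\ref{A:Bounded} which provides the uniform bound $\beta^{C}\le\KK$. No use of the square-root singularity, Girsanov, or the auxiliary function $\psi_{\eta}$ is required at this step; those enter only in later lemmas that control $\Delta_{t}\lambda$.
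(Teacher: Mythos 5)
Your proposal is correct and follows essentially the same route as the paper: decompose $\Delta_t\xi$ using the explicit form of $\xi(U)$, bound the difference of exponentials by $|e^{-a}-e^{-b}|\le|a-b|$ for $a,b\ge 0$ (the paper writes this as $|e^a-e^b|\le\max\{e^a,e^b\}|a-b|$ and invokes nonnegativity of $U,U'$), then apply Jensen/H\"older across the three integrations and Fubini. You are also right to flag the ``inessential $t$-dependent factor'': a careful H\"older application on the time integral produces a $t^{q-1}$ that the paper's stated bound omits, but since only the $q=1$ case is used in Lemma~\ref{L:BoundForLambda1}, this has no downstream consequence.
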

\begin{proof}
By the definition of $\xi(U)$ from (\ref{Eq:Mapping1}) we get
\begin{equation*}
\Delta_{t}\xi=\int_{\hat \pp=(\pp,\lambda)} \left(\BE_{V}\left\{\exp\left[-\int_{s=0}^t U'_s(\hat \pp)ds\right]-\exp\left[-\int_{s=0}^t U_s(\hat \pp)ds\right]\right\}\right)\pi(d\pp)\Lambda_\circ(d\lambda).
\end{equation*}
Recall the trivial inequality $\left|e^{a}-e^{b}\right|\leq \max\{e^{a},e^{b}\}|a-b|$. Since $U_{s}(\hat{\pp}),U'_{s}(\hat{\pp})\geq 0$, we get that
\begin{equation*}
\left|\exp\left[-\int_{s=0}^t U'_s(\hat \pp)ds\right]-\exp\left[-\int_{s=0}^t U_s(\hat \pp)ds\right]\right|\leq \int_{0}^{t}\left|U_{s}(\hat{\pp})-U'_{s}(\hat{\pp})\right|ds
\end{equation*}
 The latter, H\"{o}lder inequality and Fubini's Theorem imply
\begin{eqnarray}
 \BE \left|\Delta_{t}\xi\right|^{q}&\leq&\KK^{q}\int_{0}^{t}\int_{\hat \pp=(\pp,\lambda)}\BE|\Delta_{s}U(\hat \pp)|^{q}\pi(d\pp)\Lambda_\circ(d\lambda)ds\nonumber\\
&\leq& t\KK^{q}\int_{\hat \pp=(\pp,\lambda)}\sup_{0\leq s\leq t}\BE|\Delta_{s}U(\hat \pp)|^{q}\pi(d\pp)\Lambda_\circ(d\lambda)\nonumber
\end{eqnarray}
This concludes the proof of the lemma.
\end{proof}

\begin{lemma}\label{L:BoundForPsi1}
The following bound holds
\begin{equation}
\BE\psi_\eta(Z_{t})\leq C_{1}\int_{0}^{t}\BE|Z_{s}|ds+C_{2}\int_{\hat \pp=(\pp,\lambda)}\left\Vert \Delta_{\cdot}U(\hat \pp)\right\Vert_{t,1}\pi(d\pp)\Lambda_\circ(d\lambda)+C_{3}
\end{equation}
where $C_{1}=C_{1}\left(\KK, M\right)$, $C_{2}=C_{2}(\KK,t)$ and $C_{3}(\KK,t,\eta)\downarrow 0$ as $\eta\downarrow 0$. Here $\KK$ is the constant from Condition \ref{A:Bounded}
and $M$ as in (\ref{A:BoundedDrift}).

\end{lemma}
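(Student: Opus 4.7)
My plan is to start from the It\^o expansion \eqref{Eq:ItoOnPsi}, take expectations, and show that each of the seven terms on the right either has the correct sign to be discarded, is a genuine martingale whose expectation vanishes, or produces one of the three advertised contributions $C_{1}\int_{0}^{t}\BE|Z_{s}|\,ds$, $C_{2}\int\|\Delta_{\cdot}U(\hat\pp)\|_{t,1}\pi(d\pp)\Lambda_{\circ}(d\lambda)$, or $C_{3}$.

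First I would dispose of the straightforward terms. The drift-reversion contribution $-\alpha\int_{0}^{t}Z_{s}\psi'_{\eta}(Z_{s})\,ds$ is nonpositive, because $\alpha\geq 0$ and $x\dot\psi_{\eta}(x)\geq 0$ for every $x\in\R$ (noted in Subsection \ref{SS:AuxillaryFcn}), and is simply dropped. The two stochastic integrals against $W$ and $V$ are genuine martingales thanks to $|\dot\psi_{\eta}|\leq 1$, the moment bounds $\sup_{s\leq T}\BE|\lambda_{s}|^{q}<\infty$ from Subsection \ref{SS:UncoupledCase}, and the integrability of $\sigma_{0}(X)$; their expectations vanish. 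The systematic drift term $\beta^{S}\int_{0}^{t}b_{0}(X_{s})Z_{s}\psi'_{\eta}(Z_{s})\,ds$ is bounded by $\KK M\int_{0}^{t}|Z_{s}|\,ds$ by combining the standing hypothesis $|b_{0}|\leq M$, the bound $|\dot\psi_{\eta}|\leq 1$, and Condition \ref{A:Bounded}; this yields $C_{1}=\KK M$. The two quadratic-variation terms supply the residual $C_{3}(\KK,t,\eta)\downarrow 0$ as $\eta\downarrow 0$: for the idiosyncratic piece I would use $(\sqrt{\lambda_{s}}-\sqrt{\lambda'_{s}})^{2}\leq|Z_{s}|$ together with $|Z_{s}|\,|\ddot\psi_{\eta}(Z_{s})|\leq 2/\ln\eta^{-1}$ from \eqref{Eq:SecondDerivativeBound}, so its contribution is of order $\sigma^{2}t/\ln\eta^{-1}$; for the systematic piece the sharper estimate $|Z_{s}|^{2}|\ddot\psi_{\eta}(Z_{s})|\leq 2\sqrt{\eta}/\ln\eta^{-1}$ from \eqref{Eq:SecondDerivativeBound} yields a contribution of order $\sqrt{\eta}\,(\beta^{S})^{2}\BE\!\int_{0}^{t}\sigma_{0}^{2}(X_{s})\,ds/\ln\eta^{-1}$.

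The hard part will be the contagion term $\int_{0}^{t}\dot\psi_{\eta}(Z_{s})(d\xi_{s}-d\xi'_{s})$. Since $\xi(U)$ is absolutely continuous in $t$ with density $\dot\xi_{s}=\int\beta^{C}\BE_{\mathcal{V}}[U_{s}(\hat\pp)e^{-\int_{0}^{s}U(\hat\pp)dr}]\pi(d\pp)\Lambda_{\circ}(d\lambda)$ (and similarly for $\xi'$), the bound $|\dot\psi_{\eta}|\leq 1$ reduces the task to estimating $\BE\!\int_{0}^{t}|\dot\xi_{s}-\dot\xi'_{s}|\,ds$ in terms of $\|\Delta_{\cdot}U\|_{t,1}$. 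Here I would adapt the bookkeeping of Lemma \ref{L:BoundForXi1}, using the decomposition $|Ue^{-\int U}-U'e^{-\int U'}|\leq|\Delta_{s}U|+U'_{s}\int_{0}^{s}|\Delta_{r}U|\,dr$, controlling the first summand directly by $\|\Delta_{\cdot}U(\hat\pp)\|_{t,1}$ after time-integration, and handling the second by Fubini together with the moment bounds on the uncoupled SDE from Subsection \ref{SS:UncoupledCase} (which transfer to the fixed-point iterates). This should produce the $C_{2}\int\|\Delta_{\cdot}U(\hat\pp)\|_{t,1}\pi(d\pp)\Lambda_{\circ}(d\lambda)$ contribution with $C_{2}=C_{2}(\KK,t)$ at worst linear in $t$. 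The delicate point — and the reason the auxiliary function $\psi_{\eta}$ was introduced in Subsection \ref{SS:AuxillaryFcn} — is that $\dot\psi_{\eta}$ acts as a smoothed sign function, letting one exploit the bounded-variation structure of $\xi-\xi'$ even though the diffusion coefficient $\sqrt{\lambda}$ is not Lipschitz; the bounds on $\ddot\psi_{\eta}$ absorb the square-root singularity into the vanishing $C_{3}$ correction, at the price of the slow rate $1/\ln\eta^{-1}$.
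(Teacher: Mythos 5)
Your treatment of the routine terms matches the paper: drop the mean-reversion term using $x\dot\psi_\eta(x)\ge0$ (the paper actually keeps it and just bounds $|\dot\psi_\eta|\le1$, giving $C_1=\alpha+|\beta^S|M$ rather than your $C_1=\KK M$, but either way is fine), the two stochastic integrals vanish in expectation, the systematic drift gives the $C_1\int_0^t\BE|Z_s|\,ds$ contribution, and your bounds on the two quadratic-variation terms via \eqref{Eq:SecondDerivativeBound} reproduce the paper's $C_3$ exactly.

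The gap is in the contagion term. You propose to differentiate $\xi$ and $\xi'$ and bound the difference of densities by the decomposition
$\bigl|U_s e^{-\int_0^s U}-U'_s e^{-\int_0^s U'}\bigr|\le|\Delta_s U|+U'_s\int_0^s|\Delta_r U|\,dr$.
The second summand produces $\BE\bigl[U'_s\int_0^s|\Delta_r U|\,dr\bigr]$, a product between $U'_s$ and the increment. With the stipulated hypothesis that $U,U'$ are arbitrary elements of $S^1(\R_+)$ and the only quantity allowed on the right side being $\int\|\Delta_\cdot U(\hat\pp)\|_{t,1}\pi(d\pp)\Lambda_\circ(d\lambda)$, this product cannot be controlled: Cauchy--Schwarz would require $L^2$ control of $\Delta U$ and of $U'$, and your appeal to ``moment bounds on the uncoupled SDE'' only applies to the fixed-point iterates, not to a generic $U'\in S^1$. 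Moreover the lemma asserts $C_2=C_2(\KK,t)$, independent of $U'$; your route would make $C_2$ depend on $\|U'\|$.

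The paper avoids all of this by not differentiating $\xi$ at all. It uses $|\dot\psi_\eta|\le1$ to reduce the contagion term to a constant multiple of $\sup_{0\le s\le t}\BE|\xi_s-\xi'_s|$, and then invokes Lemma~\ref{L:BoundForXi1}, whose proof rests on the single elementary inequality
$\bigl|e^{-\int_0^t U'}-e^{-\int_0^t U}\bigr|\le\int_0^t|U_s-U'_s|\,ds$
(valid because the exponents are nonpositive), giving directly
$\sup_{0\le s\le t}\BE|\xi_s-\xi'_s|\le t\KK\int\|\Delta_\cdot U(\hat\pp)\|_{t,1}\pi(d\pp)\Lambda_\circ(d\lambda)$
and hence $C_2=C_0\,t\,\KK$ with no extraneous dependence on $U'$. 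You should replace your density-level decomposition with this estimate on $\xi$ itself; the differentiation of $\xi$ is both unnecessary and incompatible with the $L^1$-only hypotheses.
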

\begin{proof}
The proof of this lemma follows by bounding each term on the right hand side of (\ref{Eq:ItoOnPsi}) separately using the bound from Lemma \ref{L:BoundForXi1} and the bounds for the first and second order derivatives of $\psi_{\eta}$, i.e. $|\psi'_{\eta}(x)|\leq 1$ and (\ref{Eq:SecondDerivativeBound}) respectively.
In particular we have the following.

Taking expected value in (\ref{Eq:ItoOnPsi})  we obtain
\begin{eqnarray}
\BE\psi_\eta(Z_{t}) &= & -\alpha\BE\int_{0}^{t}Z_{s}\psi'_{\eta}(Z_{s})ds
+\BE\int_{0}^{t}\beta^{S}b_{o}(X_{s})Z_{s}\psi'_{\eta}(Z_{s})ds+\beta^C \BE\int_{0}^{t}\psi'_{\eta}(Z_{s})\left(d\xi_{s}-d\xi'_{s}\right)\nonumber\\
& &+\frac{\sigma^{2}}{2}\BE\int_{0}^{t}\left(\sqrt{\lambda_{s}}-\sqrt{\lambda'_{s}}\right)^{2}\psi''_{\eta}(Z_{s})ds+\BE\int_{0}^{t}\left(\beta^{S}\sigma_{o}(X_{s})Z_{s}\right)^{2}\psi''_{\eta}(Z_{s})ds\label{Eq:ExpectedValueItoOnPsi}
\end{eqnarray}
Let us now bound each term on the right hand side of (\ref{Eq:ExpectedValueItoOnPsi}).

Due to the boundedness condition on  $b_{0}$ we have  for the first and second term
\begin{equation}
 -\alpha\BE\int_{0}^{t}Z_{s}\psi'_{\eta}(Z_{s})ds
+\BE\int_{0}^{t}\beta^{S}b_{o}(X_{s})Z_{s}\psi'_{\eta}(Z_{s})ds\leq \left(\alpha+|\beta^{S}|M\right)\int_{0}^{t}\BE\left|Z_{s}\right|ds\label{Eq:Terms1and2}
\end{equation}

Regarding the third term we recall the properties of $\xi$ outlined in Lemma \ref{L:PropertiesXi} and that $|\psi'_{\eta}(x)|\leq 1$ for all $\eta\geq 0$ and $x\in\R$. Then, approximating by simple processes, we get that that there exists a constant $C_{0}$ such that
\begin{equation*}
\left|\BE\int_{0}^{t}\psi'_{\eta}(Z_{s})\left(d\xi_{s}-d\xi'_{s}\right)\right|\leq C_{0}\sup_{0\leq s\leq t}\BE\left|\xi_{s}-\xi'_{s}\right|
\end{equation*}
The latter display and Lemma \ref{L:BoundForXi1} imply that
\begin{equation}
\left|\BE\int_{0}^{t}\psi'_{\eta}(Z_{s})\left(d\xi_{s}-d\xi'_{s}\right)\right|\leq C_{0}t\KK\int_{\hat \pp=(\pp,\lambda)}\left\Vert \Delta_{\cdot}U(\hat \pp)\right\Vert_{t,1}\pi(d\pp)\Lambda_\circ(d\lambda)\label{Eq:Terms3}
\end{equation}

Relation (\ref{Eq:SecondDerivativeBound}) gives for the fourth term
\begin{equation}
 \frac{\sigma^{2}}{2}\BE\left|\int_{0}^{t}\left(\sqrt{\lambda_{s}}-\sqrt{\lambda'_{s}}\right)^{2}\psi''_{\eta}(Z_{s})ds\right|\leq
\frac{\sigma^{2}}{2}\int_{0}^{t}\left(\BE|Z_{s}|\psi''_{\eta}(Z_{s})\right)ds
\leq
\frac{\sigma^{2}}{\ln\eta^{-1}}\label{Eq:Terms4}
\end{equation}

Relation (\ref{Eq:SecondDerivativeBound}) and the condition $\BE\int_{0}^{t}\left(\sigma_{o}(X_{s})\right)^{2}ds<\infty$  give for the fifth term
\begin{eqnarray}
\BE\left|\int_{0}^{t}\left(\beta^{S}\sigma_{o}(X_{s})Z_{s}\right)^{2}\psi''_{\eta}(Z_{s})ds\right|&\leq&
 \left(\beta^{S}\right)^{2}\left(\frac{\sqrt{\eta}}{\ln \eta^{-1}}t\right)    \BE\int_{0}^{t}\left(\sigma_{o}(X_{s})\right)^{2}ds\label{Eq:Terms5}
\end{eqnarray}

Collecting now terms (\ref{Eq:Terms1and2})-(\ref{Eq:Terms5}) we obtain the statement of the lemma with
\begin{eqnarray}
C_{1}&=&\alpha+|\beta^{S}|M\nonumber\\
C_{2}&=&C_{0}t\KK\nonumber\\
C_{3}&=&\left(\frac{\sigma^{2}}{\ln\eta^{-1}}\right)+\left(\beta^{S}\right)^{2}\left(\frac{\sqrt{\eta}}{\ln \eta^{-1}}t\right)    \BE\int_{0}^{t}\left(\sigma_{o}(X_{s})\right)^{2}ds\nonumber
\end{eqnarray}
\end{proof}

\begin{lemma}\label{L:BoundForLambda1}
 For any $t>0$, there exists a positive $C(t)$ such that
\begin{equation*}
 \left\Vert \Delta_{\cdot}\lambda\right\Vert_{t,1}\leq C(t)
\int_{\hat \pp=(\pp,\lambda)}\left\Vert \Delta_{\cdot}U(\hat{\pp})\right\Vert_{t,1}\pi(d\pp)\Lambda_\circ(d\lambda).
\end{equation*}
Moreover, $C(t)$ is  continuous, increasing in $t$ and satisfies $\lim_{t\downarrow 0}C(t)=0$.
\end{lemma}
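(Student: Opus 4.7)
The plan is to combine the bound on $\BE \psi_{\eta}(Z_t)$ from Lemma \ref{L:BoundForPsi1} with the pointwise inequality $|x|\le\psi_{\eta}(x)+\sqrt{\eta}$ established in Subsection \ref{SS:AuxillaryFcn}, and then pass to the limit $\eta\downarrow 0$. Since $\psi_{\eta}$ was tailored precisely so that the square-root singularity in the diffusion coefficient (which obstructs a direct application of the Yamada--Watanabe estimate) only contributes the vanishing term $C_{3}(\KK,t,\eta)$, letting $\eta\downarrow 0$ removes it and leaves a closed linear bound on $\BE|Z_{t}|$ in terms of its own integral.

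First I would write
\begin{equation*}
\BE |Z_{t}|\le \BE\psi_{\eta}(Z_{t})+\sqrt{\eta}\le C_{1}\int_{0}^{t}\BE|Z_{s}|\,ds+C_{2}\int_{\hat \pp}\left\Vert \Delta_{\cdot}U(\hat \pp)\right\Vert_{t,1}\pi(d\pp)\Lambda_{\circ}(d\lambda)+C_{3}(\KK,t,\eta)+\sqrt{\eta},
\end{equation*}
and then let $\eta\downarrow 0$ (the left-hand side does not depend on $\eta$, while $C_{3}(\KK,t,\eta)+\sqrt{\eta}\downarrow 0$). This yields
\begin{equation*}
\BE |Z_{t}|\le C_{1}\int_{0}^{t}\BE|Z_{s}|\,ds+C_{2}\int_{\hat \pp}\left\Vert \Delta_{\cdot}U(\hat \pp)\right\Vert_{t,1}\pi(d\pp)\Lambda_{\circ}(d\lambda).
\end{equation*}

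Next, since the right-hand side (with the time argument $t$ replaced by any $s\le t$) is nondecreasing in its time argument, the same inequality holds with $\BE|Z_{t}|$ replaced by $\sup_{0\le r\le s}\BE|Z_{r}|$ on the left and integrated against $s$ on the right, after which Gronwall's lemma yields
\begin{equation*}
\left\Vert \Delta_{\cdot}\lambda\right\Vert_{t,1}\le C_{2}\, e^{C_{1} t}\int_{\hat \pp}\left\Vert \Delta_{\cdot}U(\hat \pp)\right\Vert_{t,1}\pi(d\pp)\Lambda_{\circ}(d\lambda).
\end{equation*}
Setting $C(t)\Def C_{2}(t)\,e^{C_{1}t}=C_{0}\KK\, t\, e^{C_{1}t}$ (recall $C_{2}=C_{0}t\KK$ from Lemma \ref{L:BoundForPsi1}) gives a function that is continuous, increasing in $t$, and satisfies $C(t)\downarrow 0$ as $t\downarrow 0$, completing the proof.

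The main obstacle is the step of taking $\eta\downarrow 0$ uniformly in the other quantities: one must ensure the bounds $C_{1},C_{2}$ in Lemma \ref{L:BoundForPsi1} do not depend on $\eta$, which is why the construction of $\psi_{\eta}$ and the exponential factor coming from Gronwall must be controlled carefully. The bound $C_{3}\to 0$ relied crucially on the $1/\ln\eta^{-1}$ factor produced by the normalization in the definition of $\psi_{\eta}$, so this is the delicate place where the square-root singularity is absorbed.
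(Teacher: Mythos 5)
Your proof is correct and follows essentially the same route as the paper: combine Lemma \ref{L:BoundForPsi1} with the bound $|x|\le\psi_\eta(x)+\sqrt\eta$, eliminate the $\eta$-dependent terms, and apply Gronwall. The only difference is cosmetic — you pass to the limit $\eta\downarrow 0$ before applying Gronwall, while the paper applies Gronwall first and then lets $\eta\downarrow 0$; since $C_1,C_2$ in Lemma \ref{L:BoundForPsi1} are $\eta$-independent, both orderings are valid and give the same $C(t)=C_0 t e^{C_1 t}$.
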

\begin{proof}
Applying $|x|\le \psi_\eta(x)+\sqrt{\eta}$ to  $x=Z_{t}$ and using Gronwall's Lemma we obtain
\begin{eqnarray}
 \BE|Z_{t}|&\leq& \left[ C_{2}\int_{\hat \pp=(\pp,\lambda)}\left\Vert \Delta_{\cdot}U(\hat{\pp})\right\Vert_{t,1}\pi(d\pp)\Lambda_\circ(d\lambda)+C_{3}(t,\eta)+\eta^{1/2}+
C_{1}\int_{0}^{t}e^{C_{1}(t-s)}\left(C_{3}(s,\eta)+\eta^{1/2}\right)ds\right.\nonumber\\
& &\left.+C_{1}\int_{0}^{t}e^{C_{1}(t-s)}\int_{\hat \pp=(\pp,\lambda)}C_{2}\left\Vert \Delta_{\cdot}U(\hat{\pp})\right\Vert_{s,1}\pi(d\pp)\Lambda_\circ(d\lambda) ds\right]\nonumber
\end{eqnarray}
where the constants $C_{1},C_{2},C_{3}$ are as in Lemma \ref{L:BoundForPsi1}.
Taking $\eta\downarrow 0$ we obtain
\begin{equation*}
 \BE|Z_{t}|\leq  C(t)\int_{\hat \pp=(\pp,\lambda)}\left\Vert \Delta_{\cdot}U(\hat{\pp})\right\Vert_{t,1}\pi(d\pp)\Lambda_\circ(d\lambda)\label{Eq:CoupledSDE2}
\end{equation*}
where $C(t)=C_{0}\left[1+ e^{C_{1}t}-e^{C_{1}0}\right]t=C_{0}
e^{C_{1}t}t$ for some constant $C_{0}>0$. This concludes the proof
of the lemma.
\end{proof}

We collect the previous results for the proof of Lemma \ref{L:bQDef} in the case of a bounded drift $b_{0}$.

\begin{lemma}\label{L:bQdef_BoundedDrift}
 If $b_{0}(x)$ satisfies (\ref{A:BoundedDrift}), then the statement of Lemma \ref{L:bQDef} is true.
\end{lemma}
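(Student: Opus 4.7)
The plan is to recast the coupled system (\ref{E:bQDef})--(\ref{E:EffectiveEquation1}) as a fixed-point equation $\lambda^* = \Phi(\lambda^*)$ on a suitable Banach space of processes and invoke the Banach contraction principle using the estimate of Lemma \ref{L:BoundForLambda1}. Observe first that if $U = \lambda^*$ is a fixed point of $\Phi$ defined through (\ref{E:lambdaSDE})--(\ref{Eq:Mapping1}), then the process $\xi(U)_t = \int_0^t Q(s)\,ds$, because by Fubini and the identity $\frac{d}{ds}e^{-\int_0^s U_r dr} = -U_s e^{-\int_0^s U_r dr}$,
\begin{equation*}
\xi(\lambda^*)_t = \int_{\hat \pp} \beta^C\Bigl(1-\BE_{\mathcal{V}}\bigl[e^{-\int_0^t \lambda^*_s ds}\bigr]\Bigr)\pi(d\pp)\Lambda_\circ(d\lambda) = \int_0^t \!\!\int_{\hat \pp} \beta^C \BE_{\mathcal{V}}\bigl[\lambda^*_s e^{-\int_0^s \lambda^*_r dr}\bigr]\pi(d\pp)\Lambda_\circ(d\lambda)\, ds.
\end{equation*}
Hence a fixed point of $\Phi$ yields the desired $\lambda^*$ solving (\ref{E:EffectiveEquation1}) with $Q$ defined by (\ref{E:bQDef}), and conversely.

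I would work in the Banach space $\mathcal{X}_T$ of measurable families $\hat \pp \mapsto \{U_t(\hat \pp)\}_{t\in[0,T]} \in S^1(\R_+)$ equipped with the norm
\begin{equation*}
\|U\|_T \Def \int_{\hat \pp=(\pp,\lambda)\in \hat \PP} \|U(\hat \pp)\|_{T,1}\,\pi(d\pp)\Lambda_\circ(d\lambda),
\end{equation*}
which is well-defined and finite because $\pi$, $\Lambda_\circ$ have compact support (Condition \ref{A:Bounded}) and the SDE (\ref{E:lambdaSDE}) admits solutions with uniformly controlled moments (Section \ref{SS:UncoupledCase}). That $\Phi$ maps $\mathcal{X}_T$ into itself follows from Lemma \ref{L:PropertiesXi} (which ensures $\xi(U)$ is predictable, nondecreasing, bounded by $\KK$, with $\xi_0=0$) together with the moment bounds for (\ref{E:lambdaSDE}). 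Now, integrating the pointwise estimate of Lemma \ref{L:BoundForLambda1} in $\hat \pp$ against $\pi \otimes \Lambda_\circ$, and noting that the constant $C(t)$ can be chosen uniformly in $\hat \pp$ using Condition \ref{A:Bounded} (all parameters bounded by $\KK$) and (\ref{A:BoundedDrift}), I obtain
\begin{equation*}
\|\Phi(U)-\Phi(U')\|_t \le C(t)\,\|U-U'\|_t.
\end{equation*}

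Since $C(t)$ is continuous, increasing and $\lim_{t\downarrow 0} C(t)=0$, I pick $T_0 > 0$ so small that $C(T_0) < 1$. The Banach fixed-point theorem on $(\mathcal{X}_{T_0}, \|\cdot\|_{T_0})$ then yields a unique $\lambda^* \in \mathcal{X}_{T_0}$ with $\Phi(\lambda^*)=\lambda^*$. Setting $Q(t) \Def \tfrac{d}{dt}\xi(\lambda^*)_t$ recovers (\ref{E:bQDef}) on $[0,T_0]$, and nonnegativity of $\lambda^*$ and of $Q$ is automatic since the space $\mathcal{X}_{T_0}$ consists of nonnegative processes and $\xi(\lambda^*)$ is nondecreasing. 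To extend to an arbitrary horizon $T$, I iterate: given the solution on $[0, kT_0]$, restart the argument at time $kT_0$ using $\lambda^*_{kT_0}(\hat \pp)$ in place of $\lambda_\circ$; the contraction constant on each slab $[kT_0,(k+1)T_0]$ is again $C(T_0)<1$ (it depends only on $\KK$, $M$, and the slab length), so finitely many iterations cover $[0,T]$. Uniqueness on each slab gives global uniqueness of the pair $(Q,\lambda^*)$.

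The main obstacle in this argument is the square-root singularity of the diffusion coefficient $\sigma\sqrt{\lambda}$, which prevents a direct Lipschitz estimate on $\lambda_t - \lambda_t'$; this is precisely why the technical auxiliary function $\psi_\eta$ of Subsection \ref{SS:AuxillaryFcn} was introduced and why Lemma \ref{L:BoundForLambda1} produces a bound in $L^1$ rather than $L^2$, with the vanishing remainder term $C_3(t,\eta) \downarrow 0$ as $\eta \downarrow 0$. The second delicate point, dealt with later in Subsection \ref{S:DriftGeneral}, is removing the boundedness assumption (\ref{A:BoundedDrift}) on $b_0$ via Girsanov's transformation under (\ref{Eq:NovikovCondition}); here I only carry out the bounded-drift case, for which everything above closes cleanly.
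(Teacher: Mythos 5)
Your proof is correct and takes essentially the same route as the paper: Picard iteration (equivalently, Banach's fixed-point theorem) on a small interval $[0,T_0]$ where $C(T_0)<1$, powered by Lemma \ref{L:BoundForLambda1}, followed by slab-by-slab extension, uniqueness via the integrated contraction estimate, and the observation that the fixed-point equation is nothing but the coupled pair (\ref{E:bQDef})--(\ref{E:EffectiveEquation1}). Your explicit verification that $\xi(\lambda^*)_t=\int_0^t Q(s)\,ds$ and your formalization via the integrated norm $\|U\|_T=\int\|U(\hat\pp)\|_{T,1}\,\pi(d\pp)\Lambda_\circ(d\lambda)$ are merely a cleaner presentation of what the paper leaves implicit; there is no substantive difference.
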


\begin{proof}
%The constant $C(\KK)$ depends on $\KK$ via Condition \ref{A:Bounded}.

Using Lemma \ref{L:BoundForLambda1}, a standard Picard iteration procedure shows that there exists a fixed point $\lambda^{*}$ of $\Phi$, i.e. $\lambda^{*}_{t}=\Phi_{t}(\lambda^{*})$, for $t\in[0,t_{1}]$ such that $C(t_{1})<1$.

Let us show that this fixed point is necessarily unique. Indeed, suppose that $\lambda^{*},\lambda^{'}$ are both fixed points. Then, Lemma \ref{L:BoundForLambda1} implies that
\begin{equation*}
 \left\Vert \lambda^{*}(\hat{\pp})-\lambda^{'}(\hat{\pp})\right\Vert_{t,1}\leq C(t)
\int_{\hat \pp=(\pp,\lambda)}\left\Vert \lambda^{*}(\hat{\pp})-\lambda^{'}(\hat{\pp})\right\Vert_{t,1}\pi(d\pp)\Lambda_\circ(d\lambda)\nonumber
\end{equation*}
By integrating, and using the condition $C(t)<1$ for $t\in[0,t_{1}]$, we immediately obtain that
\begin{equation*}
\int_{\hat \pp=(\pp,\lambda)}\left\Vert \lambda^{*}(\hat{\pp})-\lambda^{'}(\hat{\pp})\right\Vert_{t,1}\pi(d\pp)\Lambda_\circ(d\lambda)=0.
\end{equation*}
Therefore, for every $t\in[0,t_{1}]$ and  $\hat \pp\in \hat \PP$ we should have $\lambda^{*}_{t}(\hat{\pp})=\lambda^{'}_{t}(\hat{\pp})$ almost surely. This gives uniqueness.

For the general case, we only have to subdivide the interval $[0,T]$ into a finite number of small intervals.
Thus, there is a unique fixed point to the equation $\lambda^{*}_{t}=\Phi_{t}(\lambda^{*})$ for all $t\in[0,T]$.
Notice that the equation $\lambda^{*}_{t}=\Phi_{t}(\lambda^{*})$ can be trivially written as the pair of the coupled equations (\ref{E:bQDef})-(\ref{E:EffectiveEquation1}).
 Clearly $Q(t)$ is nonnegative since $\lambda^{*}$ is nonnegative. This concludes the proof of the lemma.
\end{proof}

\subsection{Proof of Lemma \ref{L:bQDef}}\label{S:DriftGeneral}
In this section we prove Lemma \ref{L:bQDef}. By Lemma \ref{L:bQdef_BoundedDrift} we know that the statement of Lemma \ref{L:bQDef} is true if $b_{0}(x)$ is bounded.
Our strategy is to apply this result for the case $b_{0}(x)=0$ and then to generalize using Girsanov's Theorem.

Recall the map $\Phi$ defined by $\lambda_{t}=\Phi_{t}(\lambda)$ through (\ref{E:lambdaSDE})-(\ref{Eq:Mapping1}). For notational convenience we shall write
$\lambda^{\circ}$ for $\lambda^{\circ}_{t}=\Phi_{t}(\lambda^{\circ})$ to emphasize the fact that we are considering (\ref{E:lambdaSDE})-(\ref{Eq:Mapping1}) with $b_{0}(x)=0$. We keep the notation
$\lambda$ for the case of a general $b_{0}(x)$.

Let $u(x)$ be such that $\sigma_{0}(x)u(x)=-b_{0}(x)$ and define the quantity
\begin{equation*}
M_{T}=e^{-\int_{0}^{T}u(X_{s})dV_{s}-\frac{1}{2}\int_{0}^{T}\left|u(X_{s})\right|^{2}ds}
\end{equation*}

For given $\lambda_{0}\in S^{1}(\R_{+})$ we define iteratively the sequence
\begin{equation*}
 \lambda_{n+1,t}=\Phi_{t}(\lambda_{n})
\end{equation*}

We shall prove that the sequence $\{\lambda_{n}\}_{n\in\N}$ is a Cauchy sequence in probability uniformly in $t\in[0,T]$. In particular we have the following lemma.

\begin{lemma}\label{L:CauchyInProbability}
Assume that there is a $p>1$ such that $\BE M_{T}^{p}<\infty$ and let $q>1$ be such that $1/p+1/q=1$. For every $\epsilon>0$, there exists $N<\infty$ such that for all $N<n<m$
\begin{equation*}
 \sup_{0\leq t\leq T}\BP\left[\left|\lambda^{m}_{t}-\lambda^{n}_{t}\right|>\delta\right]\leq C_{0}\left(\epsilon/\delta\right)^{1/q}
\end{equation*}
for every $\delta>0$ and for some constant $C_{0}>0$ independent of $n,m$.
\end{lemma}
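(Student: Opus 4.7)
The plan is to reduce the unbounded-drift setting to the driftless case $b_{0}\equiv 0$ (already covered by Lemma \ref{L:bQdef_BoundedDrift}) via a Girsanov change of measure, and then transfer $L^{1}$-Cauchyness under the auxiliary measure to convergence in $\BP$-probability by means of Hölder's inequality. First, I would use the $\BP$-martingale $M_{T}$ to introduce an equivalent probability measure $\tilde{\BP}$ on $\filt_{T}$ under which the systematic risk process $X$ evolves as a driftless diffusion $dX_{t}=\sigma_{0}(X_{t})\,d\tilde{V}_{t}$ for some $\tilde{\BP}$-Brownian motion $\tilde{V}$. The relation $\sigma_{0}(x)u(x)=-b_{0}(x)$ from Condition \ref{A:RegularityExogenous} is precisely what produces this drift cancellation, while the Novikov bound (\ref{Eq:NovikovCondition}) guarantees $M_{T}$ is a true martingale so the measure change is well-defined. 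Because $M_{T}$ is measurable with respect to $\mathcal{V}_{T}$, the idiosyncratic Brownian motions $\{W^{n}\}_{n\in\N}$ and the exponentials $\{\ee_{n}\}_{n\in\N}$ retain their joint distributional properties under $\tilde{\BP}$, so the iteration $\lambda_{n+1,t}=\Phi_{t}(\lambda_{n})$ has the same pathwise interpretation under both measures.

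Second, under $\tilde{\BP}$ the diffusion $X$ has zero drift, which trivially satisfies the boundedness hypothesis (\ref{A:BoundedDrift}) with $M=0$. Lemma \ref{L:bQdef_BoundedDrift} therefore applies and yields Cauchyness of the Picard iterates in $S^{1}(\tilde{\BP})$: for every $\epsilon>0$ there exists $N$ such that
\[
\sup_{0\leq t\leq T}\tilde{\BE}\bigl|\lambda^{m}_{t}-\lambda^{n}_{t}\bigr|<\epsilon\quad\text{for all }m>n\geq N.
\]
Markov's inequality then gives $\tilde{\BP}[|\lambda^{m}_{t}-\lambda^{n}_{t}|>\delta]\leq \epsilon/\delta$ uniformly in $t\in[0,T]$.

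Third, I would transfer the $\tilde{\BP}$-probabilistic smallness to a $\BP$-probability bound through Hölder's inequality. Writing $Z$ for the Radon-Nikodym derivative $d\BP/d\tilde{\BP}$, expressible in terms of $M_{T}$ via the measure change, for any $A\in\filt_{T}$ we have
\[
\BP[A]=\tilde{\BE}[\chi_{A}Z]\leq \tilde{\BP}[A]^{1/q}\,\tilde{\BE}[Z^{p}]^{1/p}.
\]
The hypothesis $\BE[M_{T}^{p}]<\infty$, combined with a direct computation relating $Z$ and $M_{T}$ through the change of measure (and Jensen's inequality for the residual positive moment of $M_{T}$), produces a finite constant $C_{0}:=\tilde{\BE}[Z^{p}]^{1/p}$ that is independent of $n,m,t$. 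Applying this bound with $A=\{|\lambda^{m}_{t}-\lambda^{n}_{t}|>\delta\}$ and invoking the Markov estimate from step two yields
\[
\BP\bigl[|\lambda^{m}_{t}-\lambda^{n}_{t}|>\delta\bigr]\leq C_{0}(\epsilon/\delta)^{1/q}\quad\text{uniformly in }t\in[0,T],
\]
which is precisely the claimed bound.

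The main obstacle will be orchestrating the Girsanov bookkeeping so that a single sequence $\{\lambda^{n}\}$ on $\Omega$ solves the iteration simultaneously under both $\BP$ and $\tilde{\BP}$, so that $L^{1}(\tilde{\BP})$-Cauchyness (obtained from the bounded-drift fixed-point argument) translates into a probabilistic statement about the original $\BP$-iteration. A secondary subtlety is verifying $C_{0}<\infty$: one must express $Z^{p}$ in a form whose $\tilde{\BP}$-expectation reduces via the measure change to a controllable power of $M_{T}$ under $\BP$, at which point the stated $L^{p}$-hypothesis on $M_{T}$ closes the argument.
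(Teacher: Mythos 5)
Your proof is correct and is essentially the paper's proof, merely re-ordered. The paper works directly under $\BP$: it uses the path-space Girsanov identity to write $\BE\bigl[\chi_{\{|\lambda^m_t-\lambda^n_t|>\delta\}}\bigr]=\BE\bigl[\chi_{\{|\lambda^{\circ,m}_t-\lambda^{\circ,n}_t|>\delta\}}M_T\bigr]$ (replacing the drift-$b_0$ diffusion $X$ by the driftless $X^\circ$ inside the indicator, with $M_T$ as the corresponding Radon--Nikodym weight), then applies H\"older to peel off $(\BE M_T^p)^{1/p}$, then Chebyshev, and finally the $S^1$-Cauchyness of the driftless iterates from Lemma~\ref{L:BoundForLambda1}. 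You instead change to $\tilde\BP$ first, invoke Lemma~\ref{L:bQdef_BoundedDrift} with $M=0$ under $\tilde\BP$, apply Markov under $\tilde\BP$, and transfer back to $\BP$ via H\"older with $Z=d\BP/d\tilde\BP$. These are the same four ingredients --- Girsanov, H\"older, Chebyshev/Markov, bounded-drift Cauchyness --- applied in a different order. Your explicit observation that $M_T$ is $\mathcal{V}_T$-measurable, hence the $W^n$'s and $\ee_n$'s keep their joint law under $\tilde\BP$ and $\BE[\,\cdot\,|\mathcal{V}]$ is unchanged, is a genuine and necessary check that the paper leaves implicit; it is what makes the map $\Phi$ (and hence the iterates $\lambda^n$) the same object under both measures.

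One point worth tightening: identifying $C_0=\tilde\BE[Z^p]^{1/p}$ with $(\BE M_T^p)^{1/p}$ is not a ``Jensen'' matter. What is actually true is that $Z$, rewritten in terms of the $\tilde\BP$-Brownian motion $\tilde V$ and the ($\tilde\BP$-driftless) $X$, has exactly the form $\exp\bigl(-\int_0^T u(X_s)\,d\tilde V_s-\tfrac12\int_0^T u^2(X_s)\,ds\bigr)$, and under $\tilde\BP$ the pair $(X,\tilde V)$ is equal in law to a driftless diffusion together with its driving Brownian motion under $\BP$; this distributional identification is what equates $\tilde\BE[Z^p]$ with the $\BE[M_T^p]$-type quantity the paper invokes. (The paper itself is a bit loose here --- its $M_T$ is written with the drifted $X$, while the Girsanov identity in the displayed chain of inequalities really requires the driftless version --- so this is a shared imprecision rather than a gap specific to your argument. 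The same applies to the sign in Condition~\ref{A:RegularityExogenous}: with $\sigma_0 u=-b_0$ and $M_T$ as written, the drift cancellation you assert requires one sign flip relative to the literal formulas.) Neither of these affects the substance of the argument, which is sound.
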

\begin{proof}
As we saw in the proof of Lemma \ref{L:bQdef_BoundedDrift}, it is enough to consider the equation $\lambda_{t}=\Phi_{t}(\lambda)$. Let $\lambda^{*},\lambda^{'}$ be two solutions of the equation $\lambda_{t}=\Phi_{t}(\lambda)$.

By Girsanov's Theorem on the absolute continuous change of measure in the space of trajectories, H\"{o}lder and Chebychev inequality, we have the
following
\begin{eqnarray}
\sup_{0\leq t\leq T} \BP\left[\left|\lambda^{m}_{t}-\lambda^{n}_{t}\right|>\delta\right]&=&\sup_{0\leq t\leq T}\BE\left[\chi_{\left\{\left|\lambda^{m}_{t}-\lambda^{n}_{t}\right|>\delta\right\}}\right]\nonumber\\
&=& \sup_{0\leq t\leq T}\BE\left[\chi_{\left\{\left|\lambda^{\circ,m}_{t}-\lambda^{\circ,n}_{t}\right|>\delta\right\}}M_{T}\right]\nonumber\\
&\leq& \left(\sup_{0\leq t\leq T} \BP\left[\left|\lambda^{\circ,m}_{t}-\lambda^{\circ,n}_{t}\right|>\delta\right]\right)^{1/q}\left(\BE M^{p}_{T}\right)^{1/p}\nonumber\\
&\leq& \left( \frac{\sup_{0\leq t\leq T}\BE\left[\left|\lambda^{\circ,m}_{t}-\lambda^{\circ,n}_{t}\right|\right]}{\delta}\right)^{1/q}\left(\BE M^{p}_{T}\right)^{1/p}\nonumber
\end{eqnarray}
where $\lambda^{\circ,m},\lambda^{\circ,n}$ satisfy $\lambda^{\circ,m}_{t}=\Phi_{t}(\lambda^{\circ,m})$ and $\lambda^{\circ,n}_{t}=\Phi_{t}(\lambda^{\circ,n})$.

Then Lemma \ref{L:BoundForLambda1} implies that $\sup_{0\leq t\leq T}\BE\left[\left|\lambda^{\circ,m}_{t}-\lambda^{\circ,n}_{t}\right|\right]$ can be made arbitrarily small by choosing
$n,m$ large enough. This together with the boundedness of  $\BE M^{p}_{T}$ conclude the proof of the lemma.
\end{proof}

Now we are in position to prove Lemma \ref{L:bQDef}.
\begin{proof}[Proof of Lemma \ref{L:bQDef}]
By Lemma \ref{L:CauchyInProbability} we know that there is a nonnegative solution to the equation $\lambda_{t}=\Phi_{t}(\lambda)$. Uniqueness is shown in a similar fashion.
This concludes the proof of the Lemma.
\end{proof}

\section{Some auxiliary results}\label{S:AuxiliaryResults}
In this section we prove some auxiliary lemmata that were needed in various places in the paper. These results are of independent interest.

\begin{lemma}\label{L:AuxiliaryLemma1}
If $Y$ is an integrable $\filt_{t}-$measurable random variable, then $\BE\left[Y|\mathcal{V}_{t}\right]=\BE\left[Y|\mathcal{V}\right]$.
\end{lemma}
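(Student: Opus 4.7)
The plan is to reduce the claim to the standard fact that conditioning on additional information which is independent of the pair (random variable, current information) does not change the conditional expectation.

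First I would decompose $\mathcal{V}$ using the increments of $V$ after time $t$. Define $\mathcal{G} \Def \sigma(V_s - V_t : s \geq t) \vee \mathcal{N}$. Since $V$ is a Brownian motion and hence has independent increments, we have $\mathcal{V} = \mathcal{V}_t \vee \mathcal{G}$, and $\mathcal{G}$ is independent of $\mathcal{V}_t$. Moreover, by the construction of the probability space in Section \ref{S:Model}, the process $V$ is independent of the collection $\{W^n\}_{n\in\N}$, so $\mathcal{G}$ is also independent of $\sigma(W^n_s : 0 \leq s \leq t,\; n\in\N) \vee \mathcal{N}$. Taking the join of these two independence statements, $\mathcal{G}$ is independent of $\mathcal{V}_t \vee \sigma(W^n_s : 0 \leq s \leq t,\; n\in\N) \vee \mathcal{N} = \filt_t$.

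Next, since $Y$ is $\filt_t$-measurable and integrable, I would invoke the following elementary result: if $\mathcal{H}_1, \mathcal{H}_2$ are sub-$\sigma$-algebras and $\mathcal{H}_2$ is independent of $\sigma(Y) \vee \mathcal{H}_1$, then $\BE[Y \mid \mathcal{H}_1 \vee \mathcal{H}_2] = \BE[Y \mid \mathcal{H}_1]$ (this follows from a standard $\pi$-$\lambda$ / monotone class argument: verify the identity for $Y = \mathbf{1}_A$ with $A \in \filt_t$ and integrands of product form on $\mathcal{H}_1 \vee \mathcal{H}_2$, then extend). Applying this with $\mathcal{H}_1 = \mathcal{V}_t$ and $\mathcal{H}_2 = \mathcal{G}$, and using that $\sigma(Y) \subseteq \filt_t$ so $\mathcal{G}$ is independent of $\sigma(Y) \vee \mathcal{V}_t$, we obtain
\begin{equation*}
\BE[Y \mid \mathcal{V}] = \BE[Y \mid \mathcal{V}_t \vee \mathcal{G}] = \BE[Y \mid \mathcal{V}_t],
\end{equation*}
which is the desired conclusion.

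The only step requiring any care is the independence claim $\mathcal{G} \perp \filt_t$; once that is in hand, the conclusion is purely abstract measure theory. There is no real obstacle, provided one interprets $\mathcal{V}$ as the $\sigma$-algebra generated by the family $\{\mathcal{V}_t\}_{t\in\R_+}$ (i.e. $\sigma(V_s : s \geq 0) \vee \mathcal{N}$) rather than the mere set-theoretic union, since the latter is not itself a $\sigma$-algebra; the argument above actually handles the generated $\sigma$-algebra, which is the stronger statement.
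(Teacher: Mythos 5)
Your proof is correct and follows essentially the same route as the paper's: decompose $\mathcal{V}$ as $\mathcal{V}_t$ joined with the $\sigma$-algebra of post-$t$ increments of $V$, observe that the latter is independent of $\filt_t$, and apply the standard tower/independence property of conditional expectation. You make explicit what the paper leaves implicit in the phrase ``$(\filt_t)$-adapted Brownian motion'' (which really needs to mean $(\filt_t)$-Brownian motion, since adaptedness alone does not give independence of future increments from $\filt_t$) by tracing the independence back to the construction of the probability space where $V$ is independent of the $W^n$'s.
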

\begin{proof}
Notice that $\mathcal{V}=\mathcal{V}_{t}\bigvee \mathcal{V}^{+}_{t}$ where
\begin{equation*}
\mathcal{V}^{+}_{t}=\sigma\left(V_{t+s}-V_{t}; s\geq 0\right).
\end{equation*}
Since $V$ is an $(\filt_{t})$-adapted Brownian motion, we get that $\mathcal{V}^{+}_{t}\subset \mathcal{V}$ is independent of $\filt_{t}$. Therefore, we have
\begin{equation*}
\BE\left[Y|\mathcal{V}_{t}\right]=\BE\left[Y\Big|\mathcal{V}_{t}\bigvee \mathcal{V}^{+}_{t}
\right]=\BE\left[Y|\mathcal{V}\right].
\end{equation*}
\end{proof}

\begin{lemma}\label{L:AuxiliaryLemma2}
Consider a c\`{a}dl\`{a}g and $(\filt_{t})$-adapted process $\left\{Y_{t};t\geq 0\right\}$ such that $\BE[\int_{0}^{T}Y^{2}_{t}dt]<\infty$ for every $T\geq 0$. Then we have
 \begin{equation*}
 \BE\left[\int_{0}^{t}Y_{s}dV_{s}\Big|\mathcal{V}_{t}\right]=
\int_{0}^{t} \BE\left[Y_{s}|\mathcal{V}_{s}\right]dV_{s}.
 \end{equation*}
\end{lemma}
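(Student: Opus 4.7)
The overall strategy is to establish the identity first for a dense class of elementary integrands and then pass to the limit via an $L^2$ continuity argument. Observe that by Lemma \ref{L:AuxiliaryLemma1} we have $\BE[Y_s\mid\mathcal{V}_s]=\BE[Y_s\mid\mathcal{V}]$ for each $s\le t$, so the right-hand integrand is a well-defined, $(\mathcal{V}_s)$-adapted (after taking a measurable version) and square-integrable process in view of conditional Jensen's inequality:
\begin{equation*}
\BE\int_0^t \BE[Y_s\mid \mathcal{V}_s]^2\,ds \le \BE\int_0^t Y_s^2\,ds<\infty.
\end{equation*}
Since $V$ is an $(\filt_t)$-Brownian motion and $(\mathcal{V}_s)\subset(\filt_s)$, the stochastic integral $\int_0^t\BE[Y_s\mid\mathcal{V}_s]\,dV_s$ is well-defined as an It\^o integral against the $(\mathcal{V}_s)$-Brownian motion $V$.

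Next, I would verify the identity for elementary integrands of the form $Y_s=\xi\,\chi_{(a,b]}(s)$, where $0\le a<b$ and $\xi$ is a bounded $\filt_a$-measurable random variable. On the left, since $V_{b\wedge t}-V_{a\wedge t}$ is $\mathcal{V}_t$-measurable,
\begin{equation*}
\BE\Bigl[\xi(V_{b\wedge t}-V_{a\wedge t})\,\Big|\,\mathcal{V}_t\Bigr]
=(V_{b\wedge t}-V_{a\wedge t})\BE[\xi\mid\mathcal{V}_t]
=(V_{b\wedge t}-V_{a\wedge t})\BE[\xi\mid\mathcal{V}_a],
\end{equation*}
where the last equality uses Lemma \ref{L:AuxiliaryLemma1} twice (noting $\xi$ is $\filt_a$-measurable, hence $\filt_t$-measurable). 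On the right, $M_s\Def\BE[\xi\mid\mathcal{V}_s]$ equals $\BE[\xi\mid\mathcal{V}]=\BE[\xi\mid\mathcal{V}_a]$ for every $s\ge a$ by the same lemma, so $s\mapsto M_s$ is constant on $[a\wedge t,b\wedge t]$ and
\begin{equation*}
\int_0^t \BE[Y_s\mid\mathcal{V}_s]\,dV_s
=\int_{a\wedge t}^{b\wedge t}\BE[\xi\mid\mathcal{V}_a]\,dV_s
=\BE[\xi\mid\mathcal{V}_a](V_{b\wedge t}-V_{a\wedge t}).
\end{equation*}
Both sides agree, and linearity gives the result for all simple integrands $Y_s=\sum_i \xi_i\chi_{(a_i,a_{i+1}]}(s)$ with $\xi_i$ bounded and $\filt_{a_i}$-measurable.

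Finally, I would extend to arbitrary $Y$ with $\BE\int_0^t Y_s^2\,ds<\infty$ by density. Define the maps $T(Y)\Def\BE[\int_0^t Y_s\,dV_s\mid\mathcal{V}_t]$ and $S(Y)\Def\int_0^t\BE[Y_s\mid\mathcal{V}_s]\,dV_s$ on the Hilbert space of $(\filt_s)$-progressively measurable processes with the norm $\|Y\|^2\Def\BE\int_0^t Y_s^2\,ds$. By It\^o's isometry and the conditional contractivity of $\BE[\cdot\mid\mathcal{V}_t]$ on $L^2$,
\begin{equation*}
\BE T(Y)^2\le\BE\Bigl[\int_0^t Y_s\,dV_s\Bigr]^2=\|Y\|^2,\qquad
\BE S(Y)^2=\BE\int_0^t\BE[Y_s\mid\mathcal{V}_s]^2\,ds\le\|Y\|^2,
\end{equation*}
so both $T$ and $S$ are bounded linear operators. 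Since simple integrands of the above form are dense in this Hilbert space (a standard fact used to construct the It\^o integral; a c\`adl\`ag adapted $Y$ may first be approximated by its left-continuous piecewise-constant skeletons), the identity $T=S$ on this dense subset extends by continuity to all admissible $Y$, yielding the lemma. The main subtlety is the measurability and integrability of the conditional-expectation process $s\mapsto\BE[Y_s\mid\mathcal{V}_s]$; this is dispatched by the conditional Jensen bound above together with the existence of a progressively measurable version via optional projection (or, more elementarily, by noting the identity first for bounded continuous $Y$ and approximating).
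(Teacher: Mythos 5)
Your proof is correct, but it takes a genuinely different route from the paper. Where you verify the identity first for elementary integrands $\xi\,\chi_{(a,b]}$ (using Lemma \ref{L:AuxiliaryLemma1} to show the conditional-expectation integrand is constant on $[a,b]$) and then extend by $L^2$-density and the boundedness of the two linear maps $T$ and $S$, the paper never touches simple processes at all. Instead it pairs both sides against the family of bounded complex exponentials
$\mathcal{Z}_t = \exp\bigl(\tfrac12\int_0^t |u_s|^2\,ds + i\int_0^t u_s\,dV_s\bigr)$, $u\in L^\infty([0,t])$.
Using the SDE $d\mathcal{Z}_t = i\mathcal{Z}_t u_t\,dV_t$ together with the It\^o isometry and a tower-property step (legitimate because $\mathcal{Z}_s$ is $\mathcal{V}_s$-measurable and $u_s$ is deterministic), one finds that both $\BE[\mathcal{Z}_t\cdot\text{LHS}]$ and $\BE[\mathcal{Z}_t\cdot\text{RHS}]$ equal $i\,\BE\int_0^t \mathcal{Z}_s u_s \BE[Y_s\mid\mathcal{V}_s]\,ds$; since the $\mathcal{Z}_t$'s are total in $L^2(\mathcal{V}_t)$, the two $\mathcal{V}_t$-measurable random variables must agree a.s. The paper's route is slicker and dispenses with the explicit computation on step functions, at the cost of invoking the density of Wiener exponentials; your route is the more classical one and is arguably more elementary, relying only on the standard construction of the It\^o integral. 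Both approaches share the same real technical obligation, which you correctly flag: one must have a $(\mathcal{V}_s)$-progressively measurable version of $s\mapsto\BE[Y_s\mid\mathcal{V}_s]$ in order to make sense of the right-hand stochastic integral, and this is furnished by optional projection onto the augmented Brownian filtration. So the proposal is sound; it simply reaches the same conclusion by a different density argument.
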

\begin{proof}
 Given an arbitrary $u\in L^{\infty}([0,t];\R)$, i.e. $\sup_{0\leq s\leq t}|u_{s}|<\infty$, consider the quantity
 \begin{equation*}
 \mathcal{Z}_{t}=\exp\left(\frac{1}{2}\int_{0}^{t}|u_{s}|^{2}ds+i \int_{0}^{t}u_{s}dV_{s}\right).
 \end{equation*}
It is easy to see that $\mathcal{Z}_{t}$ satisfies the SDE
\begin{equation*}
\mathcal{Z}_{t}=1+i\int_{0}^{t}\mathcal{Z}_{s}u_{s}dV_{s}.
\end{equation*}
Therefore, on the one hand we have
\begin{eqnarray}
\BE\left[\mathcal{Z}_{t}\BE\left[\int_{0}^{t}Y_{s}dV_{s}\Big|\mathcal{V}_{t}\right]\right]&=&
\BE\left[\mathcal{Z}_{t}\int_{0}^{t}Y_{s}dV_{s}\right]=\BE\left[i\int_{0}^{t}\mathcal{Z}_{s}u_{s}Y_{s}ds\right]\nonumber\\
&=&\BE\left[i\int_{0}^{t}\mathcal{Z}_{s}u_{s}\BE\left[Y_{s}\Big|\mathcal{V}_{s}\right]ds\right].\nonumber
\end{eqnarray}
On the other hand,
\begin{equation*}
\BE\left[\mathcal{Z}_{t}\int_{0}^{t}\BE\left[Y_{s}\Big|\mathcal{V}_{s}\right]dV_{s}\right]
=\BE\left[i\int_{0}^{t}\mathcal{Z}_{s}u_{s}\BE\left[Y_{s}\Big|\mathcal{V}_{s}\right]ds\right].
\end{equation*}
Thus we arrive at the equality
\begin{equation*}
\BE\left[\mathcal{Z}_{t}\BE\left[\int_{0}^{t}Y_{s}dV_{s}\Big|\mathcal{V}_{t}\right]\right]=
\BE\left[\mathcal{Z}_{t}\int_{0}^{t}\BE\left[Y_{s}\Big|\mathcal{V}_{s}\right]dV_{s}\right].
\end{equation*}
Since this statement is true for every $u\in L^{\infty}([0,t];\R)$,  we are done.
\end{proof}

\bibliographystyle{jmr}%{alpha}

%\bibliography{biblio}
%\def\cprime{$'$} \def\cprime{$'$} \def\cprime{$'$} \def\cprime{$'$}
%  \def\polhk#1{\setbox0=\hbox{#1}{\ooalign{\hidewidth
%  \lower1.5ex\hbox{`}\hidewidth\crcr\unhbox0}}} \def\cprime{$'$}
%  \def\cprime{$'$} \def\cprime{$'$} \def\cprime{$'$} \def\cprime{$'$}
%\begin{thebibliography}{DMRYZ04}
%
%\bibitem[GSS11]{GieseckeSpiliopoulosSowers2011}
%Kay Giesecke, Konstantinos Spiliopoulos and Richard Sowers.
%\newblock Default Clustering in Large Portfolios:\\ Typical and Atypical Events.
%\newblock Submitted, 2011.
%
%\bibitem[Kim2008]{Kim2008}
%K-H. Kim, $L^{p}$-theory of parabolic SPDE's degenerating on the boundary of $C^{1}$ domains, \emph{Journal of Theoretical Probability}, (2008), Vol. 21, No 1, pp. 169-192.
%
%\bibitem[Kim2009]{Kim2009}
%K-H. Kim, Sobolev space theory of SPDEs with continuous or measurable leading coefficients, \emph{Stochastic Processes and their Applications}, (2009), Vol. 119, Issue 1, pp. 16-44.
%
%\bibitem[Krylov1994]{Krylov1994}
%N.V. Krylov, A $W^{n}_{2}-$theory of the Dirichlet problem for SPDEs in general smooth domains, \emph{Probability Theory and Related Fields}, (1994), Vol. 98, pp. 389-421.
%
%\bibitem[KryLot1998]{KryLot1998}
%N.V. Krylov and S.V. Lototsky, A Sobolev space theory of SPDE with constant coefficients on a half line,  \emph{SIAM journal of Math. Anal}, (1998)
%Vol. 30, No. 2, pp. 298-325.
%
%\bibitem[Lot2001]{Lot2001}
%S.V. Lototsky, Linear stochastic parabolic equations degenerating on the boundary of a domain, \emph{Electronic Journal of Probability}, (2001), Vol. 6 No. 24, pp. 1-14.
%
%
%
%\end{thebibliography}

\end{document}